%% file: main_journal.tex
\documentclass[letterpaper, 10 pt, conference]{ieeeconf}  % Comment this line out if you need a4paper
\IEEEoverridecommandlockouts                              % This command is only needed if 
\usepackage{epsfig}
\usepackage{url}
\usepackage{amsmath}

\usepackage{amsfonts}

\usepackage{amsthm}
\usepackage{amssymb}
\usepackage{graphicx}
\usepackage{color}
\usepackage{balance} % For balanced columns on the last page
\usepackage{algorithm2e}
\usepackage{algorithmic}
\usepackage{bm}

\newtheorem{definition}{Definition}

\let\labelindent\relax

\usepackage{enumitem}
\usepackage{graphicx} % Required for the inclusion of images
\graphicspath{{figure/}}
\usepackage{float}
\usepackage[caption=false,font=footnotesize]{subfig}
\usepackage{soul}
\usepackage{color}

\usepackage{algorithm,algorithmic}
\usepackage{cite}

\allowdisplaybreaks

\usepackage{mathtools}

\newtheoremstyle{bfnote}%
{}{}%
{\itshape}{}%
{\bfseries}{.}%
{ }%
{\thmname{#1}\thmnumber{ #2}\thmnote{ (#3)}}
\theoremstyle{bfnote}
\newtheorem{thm}{Theorem}

\newtheorem{lem}{Lemma}

\newtheorem*{base*}{Base Controller}

\usepackage{tikz}

\DeclareSymbolFont{bbold}{U}{bbold}{m}{n}
\DeclareSymbolFontAlphabet{\mathbbold}{bbold}

\DeclarePairedDelimiterX\Set[2]{\lbrace}{\rbrace}%
{ #1 \,\delimsize| \,\mathopen{} #2 }

\usepackage{bookmark}

\usepackage{amsthm}

\newtheorem{assumption}{Assumption}

\title{\LARGE \bf Data-Driven Successive Linearization for Optimal Voltage Control}

\author{Yiwei Dong, Wenqi Cui, Han Xu, Adam Wierman, Steven Low% <-this % stops a space
\thanks{Yiwei Dong and Wenqi Cui are with the Department of Electrical and Computer Engineering, New York University, NY 11201, USA.  }
\thanks{Han Xu, Adam Wierman, and Steven Low, are with the Department of Computing + Mathematical Sciences, California Institute of Technology, CA 91125, USA.} %
\thanks{The authors are partially supported by the Henry Luce Foundation, the Resnik Sustainability Institute, the PIMCO Foundation.}}

\begin{document}
\maketitle
\thispagestyle{empty}
\pagestyle{empty}

\begin{abstract}

Power distribution systems are increasingly exposed to large voltage fluctuations driven by intermittent renewable generation and time-varying loads (e.g., electric vehicles and storage). To address this challenge, a number of advanced controllers have been proposed for voltage regulation. However, these controllers typically rely on fixed linear approximations of voltage dynamics. As a result, the solutions may become infeasible when applied to the actual voltage behavior governed by nonlinear power flow equations, particularly under heavy power injection from distributed energy resources. This paper proposes a data-driven successive linearization approach for voltage control under nonlinear power flow constraints. By leveraging the fact that the deviation between the nonlinear power flow solution and its linearization is bounded by the distance from the operating point, we perform data-driven linearization around the most recent operating point. Convergence of the proposed method to a neighborhood of KKT points is established by exploiting the convexity of the objective function and structural properties of the nonlinear constraints. Case studies show that the proposed approach achieves fast convergence and adapts quickly to changes in net load.
\end{abstract}

\begin{keywords}
Electric Power Networks, Voltage Control, Optimal Control, Learning
\end{keywords}

\section{Introduction}
\input{introduction_journal}

\section{Model and Motivations} \label{sec:model}
\input{model_journal}

\section{Case Study} \label{sec:simulation}
\input{simulation_journal}

\section{Conclusion} \label{sec:conclusion}
This paper proposes a data-driven successive linearization approach for voltage control under nonlinear power flow constraints. By conducting local linearization using Jacobian estimation from recent measurements, the method adapts to the optimization landscape of the nonconvex voltage control problem without requiring explicit power network models. Convergence guarantees to a neighborhood of KKT points are established through the design of trust-region mechanism. 
Numerical experiments on the IEEE 33-bus system demonstrate that the proposed approach achieves fast convergence to a low cost solution and adapts quickly to changes in net load. Directions for future work include distributed implementation under limited communication capabilities, verification on distribution systems with three-phase imbalances, and rigorous convergence analysis for time-varying systems subject to measurement noise.

%% the bibliography file.
\bibliographystyle{IEEEtran}
%\bibliography{sample-base}
\bibliography{Reference}
\appendix
\input{appendix}
\end{document}

%% file: introduction_journal.tex
Power distribution systems need to maintain voltage deviations within 5\%-10\% around their nominal value for safe operation of the system~\cite{turitsyn_options_2011}. The intermittency of distributed energy resources (e.g., solar photovoltaic generation), together with sudden variations in demand  (e.g., electric vehicles and storage), can induce large voltage fluctuations. At the same time, the power electronic interface from inverter-based resources provides controllability to regulate voltages through the changes of reactive power injections~\cite{turitsyn_options_2011,yeh2012adaptive,li2014real,bolognani2013distributed,gupta2024optimal}. A large body of existing research has focused on developing voltage control strategies based on the LinDistFlow model, which is a linearized power flow model that characterizes the relation between voltage magnitude and power injections across the distribution network~\cite{farivar_equilibrium_2013, zhu_fast_2016,li_real-time_2014, zhang_local_2013, cui_leveraging_2025}. This offers tractability and computational efficiency for deriving optimal voltage control policies. However, such linear approximations often fail to capture the nonlinear nature of the power flow model, leading to solutions that can become infeasible or ineffective in practice. This issue becomes increasingly common and pronounced, particularly under light net loading conditions (i.e., power demand is not significantly larger than power generation), a scenario that is becoming more prevalent with the growing penetration of distributed energy resources (DERs)~\cite{nazir2021grid, huang2016sufficient}.

Optimal power flow-based formulations have been proposed to solve voltage control with nonlinear system models~\cite{farivar2013branch,zhang2014optimal,low2014convex}. Several convex relaxations have been developed to transform the resulting nonconvex optimization problems into convex ones, such as second-order cone programs or semidefinite programs~\cite{farivar2013branch,zhang2014optimal,low2014convex}. While the action obtained through convex relaxation is exact when the system is radial and heavily loaded, the optimality gap might be large otherwise\cite{low2014convex2,nazir2021grid}. 
In addition, solving the convex relaxation usually requires the exact information of the topologies and parameters of the distribution system, which may not be available in many distribution systems. 
 % In addition, it has been shown that when the system is not heavily loaded, the actions obtained convex relaxation may render high cost 

To overcome this challenge, data-driven approaches have been proposed~\cite{chen_data-driven_2020,chen2021reinforcement,yeh_robust_2022,yuan2023learning}. Model-free reinforcement learning-based methods have been developed in~\cite{feng_bridging_2023,shi2021stability,cui2022decentralized,yuan2023learning} to learn voltage controllers from data collected through interaction with the system. Their  performance guarantees are generally built on linearized system models (i.e., LinDistFlow model). To handle the nonlinearity of the power flow models, a convex neural network is adopted in~\cite{chen_data-driven_2020} to learn the nonlinear mapping between voltage deviations and power injections. The neural network-based model is subsequently embedded into voltage optimization problem to find optimal power injections. For efficient utilization of online data,~\cite{dominguez2023online,bianchi2025stability,ortmann2026subtransmission} propose feedback optimization approaches to update control actions based on sensitivity estimation 
of changes in bus voltage magnitudes with respect to changes
in reactive power injections. Recent studies have demonstrated that feedback optimization can perform reliably in real-world distribution grids, and feedback-based safe gradient flow can additionally provide voltage constraint satisfaction~\cite{ortmann2020experimental,HAUSWIRTH2024100941,10777921}. However, these approaches typically rely on gradient descent-based iterations for solution updates, where the step size cannot be very large due to numerical stability issues and therefore potentially limit their ability to quickly adapt to time-varying conditions during real-time operation. In addition, the distance between the converged solution and the original nonlinear optimization problem is generally not guaranteed.

More generally, similar challenges in handling nonlinear dynamics show up in many other control problems~\cite{petrov2014modeling, tordesillas2019faster, malyuta2022convex,mao2018successive}, including autonomous driving, space vehicle control, motion planning, etc. In particular,~\cite{mao2018successive} proposes to successively convexify the nonlinear control problem by linearizing the dynamics around the iterates of operating points. The convergence to the  Karush–Kuhn–Tucker (KKT) point is established.  Sequential convex programming is also adopted in~\cite{zhang2025towards} to solve nonlinear optimal power flow-based voltage control.  However, this approach depends on the exact model information. This paper seeks to address the following question: Can we solve the optimal control problem through successive linearization in a data-driven manner, while still achieving convergence guarantees?

In this paper, we propose a data-driven successive linearization approach to address voltage control with nonlinear power flow models. By leveraging the fact that a nonlinear system is not far away from its linearization when the operating point is close to the linearization point, we perform data-driven linearization around the most recent operating point through using the information of past trajectory measurements. This procedure yields a sequence of data-driven convex surrogates along the trajectory, each of which is solved to optimality at every iteration. Convergence results to a neighborhood of KKT points are established by bounding the discrepancy between the original nonlinear model and its convex surrogate through trust regions. Case study demonstrates that the proposed method achieves faster convergence and adapts quickly to changes in net load. 

Contributions of this work are summarized as follows:
\begin{enumerate}
    \item We propose a data-driven successive linearization framework for voltage control with nonlinear power flow models, which leverages the most recent operating trajectories to effectively approximate nonlinear system behavior near the current operating point.
    \item The proposed framework establishes a formal convergence guarantee to a neighborhood of  KKT points of optimal voltage control problem with nonlinear power
flow models.
%     bound on the distance to the KKT set of optimal voltage control problem with nonlinear power
% flow models.
    \item Case studies show that the proposed method converges rapidly and adapts quickly to changes in net load. It achieves costs comparable to, or lower than, those obtained using model-based convex relaxations.
    % achieves cost function that is no worse than that obtained by model-based convex relaxations.
\end{enumerate}

\textbf{Notation.}
 Throughout this manuscript, vectors are denoted in lower-case bold, matrices are denoted in upper-case bold, and scalars are unbolded. 
Subscript $(i)$ indicates variables for the node $i$. For variables $a_{(1)}, \cdots, a_{(n)}$,
% , each associated with one of the $n$ nodes, 
$\bm{a}:=(a_{(1)},\cdots, a_{(n)})$ denotes the stacked column vector, and $\bm{a}_k$ denotes its value at iteration $k$. Unless specified otherwise, $\|\cdot\|$ denotes the Euclidean two-norm.

%% file: model_journal.tex
\subsection{Model }

Consider a distribution system with $n$ buses, let $y_{(i)}$ be the voltage at bus $i$ and $\bm{y}:=\left(y_{(1)},\cdots,y_{(n_o)}\right)$ be the $n_o$ observed and synchronously measured voltages across the network. Let $\bm{u}:=\left(u_{(1)},\cdots,u_{(n_c)}\right)$ be the $n_c$ controllable reactive power across the network, and $\mathcal{U}$, $\mathcal{Y}$ represent the convex and compact feasible set for $\bm{u}$ and $\bm{y}$, respectively. Our objective is to control the reactive power injection to reduce voltage deviation from its reference value $\bm{y}^{ref}$, while reducing the incurred control effort. Let $\bm{y}=\bm{g}(\bm{u})$ represents the relation 
between the controllable reactive power injections and observed voltage magnitudes, we formulate the voltage control problem as
% \left( \|\bm{q}_{t}\|_1+\gamma\|\bm{u}_k\|_1\right)
\vspace{-0.2cm}
\begin{equation}\label{eq:Optimization}
\begin{split}
\min_{\bm{u}, \bm{y}}  &\quad  c_v(\bm{y})+c_u(\bm{u})\\
% \label{subeq:Optimization_obj}\\
\mbox{s.t. } 
&\quad \bm{y} = \bm{g}(\bm{u}),\ \  \bm{u}\in\mathcal{U},\ \  \bm{y}\in\mathcal{Y}
% \\
% & \underline{u}_i\leq u_i\leq \bar{u}_i
\end{split}
\end{equation}
where $c_v(\cdot)$ and $c_u(\cdot)$ are convex cost functions on voltage deviations and control effort, respectively. The costs depend on the type of resources and can be either quadratic~\cite{zhao2014design, mallada2017optimal} or non-quadratic~\cite{shi2017using, vaccaro2011decentralized, jafari2018optimal}. We assume the feasibility set of the problem is nonempty.

The mapping from reactive power injections to voltage magnitudes, $\bm{y}=\bm{g}(\bm{u})$ , is defined implicitly by the nonlinear DistFlow equations~\cite{baran2002optimal,chiang2002existence}. In particular, for a given $\bm{u}$, the voltage is obtained by solving the following equations:
\vspace{-0.2cm}
\begin{subequations}\label{eq:voltage}
\begin{align}
& y_{(i)}^{2} = y_{(j)}^{2} + 2 r_{ij} P_{ij} + 2 x_{ij} Q_{ij} - |z_{ij}|^{2} l_{ij},\\
& \sum_{h:\ j \rightarrow h} P_{jh} = P_{ij} - r_{ij} l_{ij} + p_{(j)},\\
& \sum_{h:\ j \rightarrow h} Q_{jh} = Q_{ij} - x_{ij} l_{ij} + u_{(j)},\\
&  l_{ij} = \left(P_{ij}^{2} + Q_{ij}^{2}\right)/y_{(i)}^{2}\label{subeq:loss}
\end{align}
\end{subequations}
where $i,j\in\{1,2,\dots,n\}$ and $p_{(j)}$ denotes the active power injection at node $j$.
For the line connecting nodes $i$ and $j$, $P_{ij}$ and $Q_{ij}$ denote the active and reactive power
flows, $r_{ij}$ and $x_{ij}$ are the line resistance and reactance, and $l_{ij}$ is the squared current
magnitude on the line. The complex impedance is $z_{ij}=r_{ij}+\mathrm{j}x_{ij}$, and the notation
$h:\,j\rightarrow h$ denotes the set of children of node $j$. We assume that the operating points
are not near the voltage-collapse boundary; namely, the Jacobian of the DistFlow equation with respect to state variables is nonsingular and not close to singular. This is a common assumption for voltage control, which ensures that the solution of~\eqref{eq:voltage} exists~\cite{bolognani2015existence}. Together with the Implicit Function
Theorem~\cite{jindal2024implicit_domain}, it implies that, for fixed exogenous active power
injections and other uncontrollable injections, the DistFlow equations induce a locally unique and
continuously differentiable mapping $\bm g:\bm u\mapsto\bm y$. Note that it is generally difficult to explicitly characterize the function $\bm{y}=\bm{g}(\bm{u})$ from~\eqref{eq:voltage}, and one of the benefits brought by the proposed method is that we do not require the analytical form of $\bm{g}$ as prior knowledge.

A series of works has been established in solving~\eqref{eq:Optimization}~\eqref{eq:voltage} through convex relaxation of the DistFlow~\cite{farivar2013branch,zhang2014optimal,low2014convex}. For example, converting the line loss function in~\eqref{subeq:loss} to inequality constraints can convert the problem into a second-order cone program. The exactness of convex relaxation has also been established~\cite{farivar2013branch}. However, these approaches rely on exact model information, which may not be available for many distribution networks. 

To eliminate the reliance on exact model information, we consider data-driven methods that update control actions based on system measurements. Existing methods typically use data measurement to approximate the gradient $\nabla_{ \bm{u}_k}c(\bm{u}_k)$  and update control actions iteratively through $\bm{u}_{k+1}=\text{Proj}_{\mathcal{U}}(\bm{u}_{k}-\eta \nabla_{ \bm{u}_k}c(\bm{u}_k))$ with $\text{Proj}_{\mathcal{U}}$ being the projection operator onto ${\mathcal{U}}$ and $\eta$ being the learning rate \cite{dominguez2023online,bianchi2025stability,ortmann2026subtransmission,11066276,kinga2024new}. For example, feedback optimization computes the gradient by $\nabla_{ \bm{u}_k}c(\bm{u}_k)=
\nabla_{ \bm{u}_k}\bm{g}(\bm{u}_k)\nabla_{ \bm{y}_k} c_v(\bm{y}_k)+\nabla_{ \bm{u}_k}c_u(\bm{u}_k)$, where the Jacobian matrix $\nabla_{ \bm{u}_k}\bm{g}(\bm{u}_k)$ is calculated from the online input-output data~\cite{dominguez2023online,bianchi2025stability,ortmann2026subtransmission,10777921}. However, gradient descent-based updates, if not specifically tailored to the specific operating point, may require a long horizon to converge. In addition, the quality of the converged solution is generally not guaranteed.

\subsection{Successive Linearization}
To avoid the slow convergence of gradient descent in data-driven approaches, we propose to leverage the structure of the problem  that the only non-convexity in problem~\eqref{eq:Optimization} comes from the nonlinear DistFlow in~\eqref{eq:voltage}. 
Linearizing the DistFlow will convert~\eqref{eq:Optimization} to a convex problem, which allows us to solve it to optimality without being restricted to gradient-descent-type updates. Motivated by this, we solve the problem~\eqref{eq:Optimization} by successively linearizing the dynamical system at the operating point $\bm{u}_{k}$ for iterations $k=1,2,\cdots, N$. Specifically, we initialize $\bm{u}_1$ as the current operating point.  For each $k=1,\cdots, N-1$,  $\bm{u}_{k+1}$ is obtained by solving the optimization problem 
\vspace{-0.2cm}
\begin{subequations}\label{eq:Optimization-lin0}
\begin{align}
\min_{\bm{u},\bm{y}}  &\quad  c_v(\bm{y})+c_u(\bm{u})\label{subeq:Optimization_obj}\\
\mbox{s.t. } & \quad\bm{y}=\bm{y}_{k}+\nabla_{ \bm{u}_k}\bm{g}(\bm{u}_k)\left(\bm{u}-\bm{u}_{k}\right),\label{subeq:model_lin}\\
&\quad\bm{u}\in\mathcal{U},\ \  \bm{y}\in\mathcal{Y}
\end{align}
\end{subequations}
where $\nabla_{ \bm{u}_k}\bm{g}(\bm{u}_k)$ is the Jacobian of the function $\bm{g}(\bm{u})$ at the operating point $\bm{u}_k$.

For trajectories that remain sufficiently close to the operating point $\bm{u}_k$, it is reasonable to regard them as being generated by the local linearization in~\eqref{subeq:model_lin}, which enables the use of trajectory data to construct a data-driven linear approximation of the underlying nonlinear system.
However, two fundamental challenges arise. First, data-driven approximations inevitably introduce modeling errors in the resulting linearized dynamics. Second, even when locally accurate, the resulting linearized model may deviate substantially from the original nonlinear DistFlow equations, potentially leading to large optimality gaps when solutions derived from the linear model are applied to the nonlinear problem. These challenges make it nontrivial to ensure convergence of data-driven linearized control schemes.

 To address these issues, we develop a data-driven successive linearization framework equipped with a trust region mechanism. This approach enables efficient use of the most recent trajectory data to iteratively refine the local linearized model while ensuring that the induced modeling errors remain bounded. In the following sections, we present the design of the proposed data-driven successive linearization method for the optimal voltage control problem and demonstrate how convergence guarantees are achieved through the trust region mechanism.

\section{Data-driven Successive Linearization}
In this section, we first introduce the overall framework for data-driven successive linearization with a trust region mechanism. We then present the detailed computational procedures and the corresponding algorithmic design.  
\subsection{Successive Linearization with Trust Region Mechanism}
To explicitly compute the Jacobian $\nabla_{ \bm{u}_k}\bm{g}(\bm{u}_k)$ with respect to each operating point, we will need the information of the model $\bm{g}(\bm{u})$ and its parameters, which is often difficult to obtain.
% It is often difficult to be obtained for systems without exact model information or parameters. 
Therefore, to avoid relying on such model information, we instead resort to a data-driven approach to estimate $\nabla_{ \bm{u}_k}\bm{g}(\bm{u}_k)$ through the past trajectory $\left\{\left(\bm{y}_{k-\tau}, \bm{u}_{k-\tau}\right), \cdots, \left(\bm{y}_k, \bm{u}_k\right)\right\}$.

We consider the estimation of $\nabla_{ \bm{u}_k}\bm{g}(\bm{u}_k)$ as finding the sensitivity matrix ${\mathbf{S}}_{k}$ that maps perturbations of the input $\bm{u}$ to the changes of output $\bm{y}$:
\vspace{-0.2cm}
$$
\Delta \bm{y}_{k} \approx {\mathbf{S}}_{k} \Delta \bm{u}_{k},
$$
where the details of computing the sensitivity matrix will be given in the next subsection.

Combined with a trust region mechanism to bound the changes of $\bm{u}$, the data-driven approach to solving~\eqref{eq:Optimization} is thus formulated as
\vspace{-0.2cm}
\begin{subequations}\label{eq:Optimization-linData}
\begin{align}
\min_{\bm{u}\in\mathcal{U},\bm{y}\in\mathcal{Y}}  \quad&  c_v(\bm{y})+c_u(\bm{u})\label{subeq:Optimization_obj}\\
\mbox{s.t. } \quad&  \bm{y}=\bm{y}_{k}+\widetilde{\mathbf S}_{k} \left(\bm{u}-\bm{u}_{k}\right),\label{subeq:datadriven_lin}\\
&\|\bm{u}-\bm{u}_{k}\|\leq r_{k},\label{subeq:reust_region}
\end{align}
\end{subequations}
where $\widetilde{\mathbf S}_k$ is the estimation of the Jacobian $\nabla_{ \bm{u}_k}\bm{g}(\bm{u}_k)$, $r_{k}$ is the radius of the trust region for updating $\bm{u}$ at iteration $k$. The equation~\eqref{subeq:datadriven_lin} represents the data-driven linearized power flow constraint around the operating point $\bm{u}_k$. The trust region design is motivated by~\cite{mao2018successive} and plays a central role in guaranteeing convergence to the solution of problem~\eqref{eq:Optimization} (detailed theorem and proof will be given in Section~\ref{sec:convergence}).
 Iterating this process gives the sequence of solutions $\left\{\left(\bm{y}_1, \bm{u}_1\right), \cdots, \left(\bm{y}_N, \bm{u}_N\right)\right\}$. 

Although the trust region is a hyperparameter to be tuned, its scheme to update control action differs fundamentally from gradient descent-based approaches. In particular, the trust-region subproblem~\eqref{eq:Optimization-linData} is solved to optimality at every step, enabling it to more effectively exploit the local optimization landscape. Benefiting from this, the proposed approach is less sensitive to parameter tuning and converges faster. In contrast, gradient-based methods usually require a conservative stepsize to avoid oscillations or instability, which can significantly slow convergence. We demonstrate these differences in case studies.

\subsection{Computation of the Sensitivity Matrix}\label{sec:sens}

We estimate the Jacobian $\nabla_{\bm u_k}\bm g(\bm u_k)$ by solving a weighted least-squares (WLS) problem~\cite{dominguez2023online}:
\vspace{-0.2cm}
\begin{equation}
    \widetilde{\mathbf S}_{k}
    = \underset{\mathbf S}{\operatorname{argmin}}\sum_{l=k-\tau+1}^{k}
    \omega^{\,k-l}\big\|\Delta\bm y_l-\mathbf S\,\Delta\bm u_l\big\|^{2},
\label{eqsenloss}
\end{equation}
where 
$\Delta\bm u_l:=\bm u_l-\bm u_{l-1}$ and
$\Delta\bm y_l:=\bm y_l-\bm y_{l-1}$
denote the first-order differences of the reactive power injections and bus
voltages, respectively. The parameter $\tau$ is the length of the window used
for sensitivity estimation, and $\omega\in(0,1)$ is the forgetting factor that
assigns less weight to older measurements.

To express the WLS solution to~\eqref{eqsenloss} in matrix form, we stack
$\Delta\bm u_l$ and $\Delta\bm y_l$ as rows and define
\begin{equation}\label{eq:U_Y_def}
\bm U:=
\begin{bmatrix}
\Delta\bm u_{k-\tau+1}^\top\\[-.25em]
\vdots\\[-.25em]
\Delta\bm u_{k}^\top
\end{bmatrix}
\in\mathbb R^{\tau\times n_c}, \ 
\bm Y:=
\begin{bmatrix}
\Delta\bm y_{k-\tau+1}^\top\\[-.25em]
\vdots\\[-.25em]
\Delta\bm y_{k}^\top
\end{bmatrix}
\in\mathbb R^{\tau\times n_o}.
\end{equation}
The forgetting-factor weights can then be collected into a diagonal matrix
$\bm W\in\mathbb R^{\tau\times\tau}$ with
\begin{equation}\label{eq:Wk_diag}
\bm W
:=
\mathrm{diag}\big(
\omega^{\tau-1},\omega^{\tau-2},\dots,\omega,1
\big),
\end{equation}
so that the cost in~\eqref{eqsenloss} can be written as
\begin{equation}
    \ell(\mathbf S)
    =
    \big\|\bm W^{1/2}(\bm Y-\bm U\mathbf S^\top)\big\|_F^2,
\end{equation}
where $\|\cdot\|_F$ denotes the Frobenius norm.

The length $\tau$ of the past trajectory for computing the sensitivity estimation is determined based on sufficient excitation condition of matrix $\bm U$ defined as follows:

\begin{definition}[Sufficient excitation]\label{defsuff}
  The past trajectory $\left\{\Delta\bm u_l, \Delta\bm y_l \right\}$ for
  $l=k-\tau+1,\dots,k$ is said to be sufficiently excited if the matrix
  $\bm U$ in~\eqref{eq:U_Y_def} is full column rank.
\end{definition}

When the sufficient excitation condition holds, the WLS estimator admits the
closed-form solution
\begin{equation}\label{eq:S_explicit_WLS}
    \widetilde{\mathbf S}_k^\top
    =
    \big(\bm U^\top \bm W \bm U\big)^{-1}
    \bm U^\top \bm W \bm Y.
\end{equation}

Ideally, perturbations of $\bm{u}$ should be conducted around the operating point of interest. This would require the system to be `fixed' around the point and add small perturbations until the sufficient excitation of $\bm{U}$ is satisfied.  
% without conducting actions that change the operating point. 
However, such a procedure would substantially increase the amount of data required and slow convergence. Therefore, we hope that the system can perform optimization at each action step, rather than keeping the system fixed while perturbing around the current operating point. Notably, we find that the difference between the estimated sensitivity and the true Jacobian can be bounded by the trust region $r_k$. Properly setting the trust region allows us to use the past trajectory efficiently and still achieve convergence guarantees. Details will follow in Section~\ref{sec:convergence}.

\subsection{Algorithm}
The data-driven successive linearization method for solving \eqref{eq:Optimization} proceeds as follows. At iteration $k$, we estimate the local Jacobian matrix via weighted least squares~\eqref{eqsenloss} using recent input–output increments. This gives us the data-driven linearized model $\bm{y}=\bm{y}_k+{\widetilde {\mathbf{S}}}_k(\bm{u}-\bm{u}_k)$. Solving the trust-region  subproblem~\eqref{eq:Optimization-linData} to optimality then produces the next iterate $\bm{u}_{k+1}$. After applying $\bm{u}_{k+1}$ and measuring $\bm{y}_{k+1}$, we refresh the data collection buffer and repeat the above process. The complete procedure of the data-driven successive linearization method is summarized in Algorithm~\ref{alg:ddsl}.
\begin{algorithm}[t]
\caption{Data-Driven Successive Linearization}
\label{alg:ddsl}
\begin{algorithmic}[1]
\STATE \textbf{Input:} Constraint sets $\mathcal{U},\mathcal{Y}$; costs $c_v(\cdot), c_u(\cdot)$; previous data window of length $\tau+1$; forgetting factor $\omega\!\in\!(0,1)$; initial $(\bm{u}_1,\bm{y}_1)$; trust-region radius $r_k$ 
\FOR{$k=1,2,\dots,N-1$}
    \STATE collect $\{\Delta\bm{u}_\ell,\Delta\bm{y}_\ell\}_{\ell=k-\tau+1}^{k}$ with
    $\Delta\bm{u}_\ell=\bm{u}_\ell-\bm{u}_{\ell-1}$ and
    $\Delta\bm{y}_\ell=\bm{y}_\ell-\bm{y}_{\ell-1}$
   
    \STATE solve
    \begin{equation*}
    \widetilde{\mathbf S}_{k} = \arg\min_{\mathbf S}\; \sum_{\ell=k-\tau+1}^{k}\omega^{\,k-\ell}
    \left\|\Delta{\bm{y}}_\ell - {\mathbf S}\,\Delta{\bm{u}}_\ell\right\|^2
\end{equation*}
    where $(\Delta{\bm{u}}_\ell,\Delta{\bm{y}}_\ell)$ are measured increments
    \STATE compute $\bm{u}_{k+1}$ by solving the linearized convex trust-region subproblem
    \begin{equation*}
    \begin{aligned}
    &\min_{\bm{u}\in\mathcal{U},\bm{y}\in\mathcal{Y}}\;\;  c_v(\bm{y})+c_u(\bm{u})\\
    \text{s.t.}\;\; &\bm{y}=\bm{y}_k + \widetilde{\mathbf S}_{k}(\bm{u}-\bm{u}_k),\ \   \|\bm{u}-\bm{u}_k\| \le r_k
    \end{aligned}
    \end{equation*} 
    \STATE implement $\bm{u}_{k+1}$ and measure $\bm{y}_{k+1}$
\ENDFOR
\STATE \textbf{Output:} sequences $\{\bm{u}_k\}_{k=1}^{N}$ and $\{\bm{y}_k\}_{k=1}^{N}$
\end{algorithmic}
\end{algorithm}

\section{Convergence Analysis}\label{sec:convergence}

This section establishes convergence of the data-driven successive linearization method. We first introduce notations and assumptions, and then present the main theorem and proofs for the data-driven successive linearization.

\subsection{Main Theorem}

Throughout the convergence analysis, the exogenous active power injections and uncontrollable reactive power injections are assumed to remain fixed across the algorithm iterations. Specifically, the following assumption holds: 
\begin{assumption}[Fixed exogenous condition]
\label{assump:fixed_exogenous}
The DistFlow equations induce a fixed mapping $\bm y=\bm g(\bm u)$, and the feasible sets $\mathcal U$, $\mathcal Y$ remain unchanged across iterations.
\end{assumption}

For notational convenience, let $\bm z:=(\bm y,\bm u)$ and define the constraint residual $\bm{e}(\bm z):=\bm y-\bm g(\bm u)$. In this way, the equations in problem~\ref{eq:Optimization} are converted to the equality constraints $\bm{e}(\bm z)=\bm 0$, with $\bm e(\cdot)$ nonconvex and continuously differentiable. 
Furthermore, we denote $\{\bm{z}:\bm{h}(\bm z)\le 0\}$ as the compact constraint set for $\bm z$, where each $h_j$ is convex and continuously differentiable. The requirement for cost functions is specified in the following assumption.
% \begin{assumption}[Normal operating points]
% \label{ass:away_from_collapse}
% All operating points stay away from voltage-collapse boundaries.
% \end{assumption}

% Voltage-collapse boundaries in Assumption~\ref{ass:away_from_collapse} refer to points where the Jacobian of the DistFlow equations with respect to the voltage variables becomes (near) singular. At such points, the voltage solution may become extremely sensitive to perturbations in $\bm u$, meaning that small changes in the reactive power injection can lead to disproportionately large changes in voltages. At normal operating points, the Implicit Function Theorem~\cite{jindal2024implicit_domain} guarantees that the DistFlow equations induce a locally unique solution mapping $\bm g:\bm u\mapsto\bm y$ that is continuously differentiable. Futhermore, under Assumption~\ref{ass:away_from_collapse}, the sensitivity $\nabla_{\bm u} \bm g(\bm u)$ exists and is locally Lipschitz in a neighborhood of $\bm u$. We assume the following regularity for the cost.

\begin{assumption}[Cost regularity]\label{assump:cost}
 The cost functions $c_v$ and $c_u$ are convex and continuously differentiable. The composite objective with respect to $\bm z$, $c(\bm z) := c_v(\bm y)+c_u(\bm u)$, is also convex and continuously differentiable.
\end{assumption}
We additionally assume that the Linear Independence Constraint Qualification (LICQ) holds, which is a standard regularity condition in nonlinear optimization and is commonly imposed to establish convergence and stationarity results for nonconvex constrained problems~\cite{mao2018successive,verschueren2016exploiting,doi:10.1137/110844349}.

\begin{definition}[LICQ]
\label{def:LICQ}
Let $\mathcal J_{\mathrm{eq}}$ index the equality constraints in $\bm{e}(\bm z)=\bm 0$ and
$\mathcal J_{\mathrm{ineq}}$ index the inequalities $\{h_j(\bm z)\le 0\}$. Define the active set
$\mathcal J_{\mathrm{ac}}(\bm z):=\{\,j\in\mathcal J_{\mathrm{ineq}}:\ h_j(\bm z)=0\,\}$.
LICQ holds at $\bm z$ if the set of gradients 
$$
\{\nabla e_i(\bm z):\, i\in \mathcal J_{\mathrm{eq}}\}\ \cup\
\{\nabla h_j(\bm z):\, j\in \mathcal J_{\mathrm{ac}}(\bm z)\}
$$
is linearly independent.
\end{definition}

If $\bar{\bm z}$ is a local optimum of problem~\eqref{eq:Optimization}
and LICQ holds at $\bar{\bm z}$, then $\bar{\bm z}$ satisfies the KKT
conditions.

\begin{definition}[KKT conditions for \eqref{eq:Optimization}]\label{def:KKT}
A feasible point $\bar{\bm z}$ is a KKT point of \eqref{eq:Optimization} if there exist multipliers $\{\nu_i\}_{i\in\mathcal J_{\mathrm{eq}}}$ and $\{\mu_j\}_{j\in\mathcal J_{\mathrm{ac}}(\bar{\bm z})}$ with $\mu_j\ge 0$ such that
$$
\nabla c(\bar{\bm z})
\;+\;\sum_{i\in\mathcal J_{\mathrm{eq}}}\nu_i\,\nabla e_i(\bar{\bm z})
\;+\;\sum_{j\in\mathcal J_{\mathrm{ac}}(\bar{\bm z})}\mu_j\,\nabla h_j(\bar{\bm z})
\;=\;\bm 0 .
$$
\end{definition}

Definition~\ref{def:KKT} characterizes first-order optimality for the original constrained problem~\ref{eq:Optimization}. For the convergence analysis of the data-driven successive linearization, it is convenient to work with an exact-penalty reformulation that aggregates the constraints into the objective, and then find the corresponding optimality conditions. We therefore introduce the penalty problem and the associated stationary set.

For the exact penalty objective, we minimize the cost:
\begin{equation}\label{eq:penalty_cost}
 J(\bm z)
  \;:=\;
  c(\bm z)\;+\;\bm\lambda^\top \!\big|\bm{e}(\bm z)\big|
  \;+\;\bm\eta^\top \!\big[\bm h(\bm z)\big]_+ ,
\end{equation}
with penalty vectors $\bm\lambda\ge \bm 0,\ \bm\eta\ge \bm 0$, and $[\bm h(\bm z)]_+:=\max\{\bm h(\bm z),\bm 0\}$ is applied componentwise. 
The stationary set for \(\min_{\bm z} J(\bm z)\) is
\(\mathcal T:=\{\bm z:\;\bm 0\in \partial J(\bm z)\}\), where \(\partial J\) denotes the generalized differential.
By Theorem~3.9 of~\cite{mao2018successive}, for sufficiently large penalties,
any stationary point \(\bar{\bm z}\in \mathcal T\) that is feasible for \eqref{eq:Optimization}
is a KKT point of the original problem.
Thus it suffices to show that the data-driven successive linearization iterates approach \(\mathcal T\) for \eqref{eq:penalty_cost}. Denote the distance between a point $\bm z$ and a set $\mathcal D$ as $\mathrm{dist}(\bm z,\mathcal D)
    :=
    \inf_{\bm x\in\mathcal D}\|\bm z-\bm x\|$, we have the following theorem.

\begin{thm}\label{thmmain}
Let Assumptions~\ref{assump:fixed_exogenous}--\ref{assump:cost}  hold.
Then the data-driven successive linearization algorithm always has limit points.
Let $\bar{\bm z}$ be any limit point and $\{\bm z_{k_i}\}$ be a subsequence converging to it, then $\bar{\bm z}$ satisfies the stationarity bound 
$\mathrm{dist}\big(\bm 0,\partial J(\bar{\bm z})\big)
\ \le\ \|\bm\lambda\|\,\liminf_{i\to\infty}\|\widetilde{\mathbf S}_{k_i}-\mathbf S_{k_i}\|$ for the penalty problem~\eqref{eq:penalty_cost}. 
 If the Jacobian estimation error vanishes,
then any limit point $\bar{\bm z}$ is a stationary point of~\eqref{eq:penalty_cost}.
Furthermore, if $\bar{\bm z}$ is feasible for the original problem~\eqref{eq:Optimization} and LICQ holds at $\bar{\bm z}$ as in Definition~\ref{def:LICQ},
then $\bar{\bm z}$ is a KKT point of~\eqref{eq:Optimization}.
\end{thm}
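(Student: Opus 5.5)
The argument follows the successive convexification analysis of \cite{mao2018successive}, with one extra perturbation term that accounts for the Jacobian estimation error. Here $\mathbf S_k:=\nabla_{\bm u_k}\bm g(\bm u_k)$ denotes the true Jacobian.

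\textbf{Existence of limit points.} Every iterate $\bm u_{k+1}$ solves the trust-region subproblem~\eqref{eq:Optimization-linData}, so $\bm u_k\in\mathcal U$ for all $k$. The measured outputs satisfy $\bm y_k=\bm g(\bm u_k)$, and $\bm g$ is continuous on the compact set $\mathcal U$, so $\{\bm y_k\}$ is bounded as well. Hence $\{\bm z_k\}$ lies in a compact set, and Bolzano--Weierstrass gives a convergent subsequence $\bm z_{k_i}\to\bar{\bm z}$.

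\textbf{Stationarity bound.} I would pass to the linearized penalty function at $\bm z_k$,
\[
L_k(\bm z):=c(\bm z)+\bm\lambda^\top\big|\bm y-\bm y_k-\widetilde{\mathbf S}_k(\bm u-\bm u_k)\big|+\bm\eta^\top[\bm h(\bm z)]_+ .
\]
This function is convex. Assuming the penalties are large enough that the constrained subproblem coincides with $\min L_k$ over the trust region (exact penalty), its optimality condition reads
\[
\bm 0\in\partial L_k(\bm z_{k+1})+N_{\mathcal B_k}(\bm z_{k+1}),
\]
where $\mathcal B_k$ is the trust region.

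The subdifferential of $J$ at $\bm z_k$ involves $\nabla\bm e(\bm z_k)=[\,\mathbf I,\,-\mathbf S_k\,]$. The subdifferential of $L_k$ at $\bm z_k$ involves $[\,\mathbf I,\,-\widetilde{\mathbf S}_k\,]$ in the same position. For any selection of the sign vector $\bm s$ with $|s_i|\le1$, the two elements differ by $\operatorname{diag}(\bm\lambda)\bm s$ multiplied through $(\widetilde{\mathbf S}_k-\mathbf S_k)$. This shows
\[
\mathrm{dist}\big(\bm 0,\partial J(\bm z_k)\big)\le \mathrm{dist}\big(\bm 0,\partial L_k(\bm z_k)\big)+\|\bm\lambda\|\,\|\widetilde{\mathbf S}_k-\mathbf S_k\|.
\]

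It remains to show $\mathrm{dist}(\bm 0,\partial L_{k_i}(\bm z_{k_i}))\to0$ along the subsequence, or at least that its $\liminf$ vanishes. For this I would argue by contradiction in the style of \cite{mao2018successive}. Suppose $\bar{\bm z}$ is not stationary for the model at rate $\delta>0$. Then, because $L_k$ is convex and uniformly Lipschitz, the convex subproblem achieves a predicted decrease
\[
L_k(\bm z_k)-L_k(\bm z_{k+1})\ge c_0\min\{\delta r_k,\delta^2\}.
\]
The actual penalty satisfies $J(\bm z_k)=L_k(\bm z_k)$, since $\bm y_k=\bm g(\bm u_k)$ makes the linearized residual vanish at $\bm z_k$. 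Moreover, $|J(\bm z_{k+1})-L_k(\bm z_{k+1})|\le\|\bm\lambda\|\big(\tfrac{M}{2}r_k^2+\|\widetilde{\mathbf S}_k-\mathbf S_k\|r_k\big)$, using local Lipschitz continuity of $\nabla\bm g$ (away from voltage collapse) and a Taylor remainder. The trust region therefore makes $J$ decrease by a fixed amount infinitely often. This contradicts the boundedness of $J$ from below on the compact set. Finally, outer semicontinuity of the Clarke subdifferential (the map $\partial J$ has a closed graph) transfers the bound from $\bm z_{k_i}$ to $\bar{\bm z}$, which yields the $\liminf$ expression in the theorem.

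\textbf{Remaining claims.} If $\|\widetilde{\mathbf S}_k-\mathbf S_k\|\to0$, the bound gives $\bm 0\in\partial J(\bar{\bm z})$, since $\partial J(\bar{\bm z})$ is closed. If in addition $\bar{\bm z}$ is feasible and LICQ holds, Theorem~3.9 of \cite{mao2018successive} (exact penalty) shows that $\bar{\bm z}$ is a KKT point.

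\textbf{Main obstacle.} The hardest step is the decrease argument. It requires controlling the interplay of $r_k$ with both the curvature error and the estimation error, and the statement does not specify the trust-region radius schedule. I would assume either a radius update rule based on the ratio of actual to predicted reduction, as in \cite{mao2018successive}, or $r_k\to0$ with $\sum_k r_k=\infty$. I would then use the error term $\|\widetilde{\mathbf S}_k-\mathbf S_k\|r_k$ to absorb exactly the residual that appears in the final bound.
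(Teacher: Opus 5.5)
Your limit-point argument and the closing steps are fine: a vanishing error gives $\bm 0\in\partial J(\bar{\bm z})$ by closedness, and exact penalty plus LICQ then gives KKT. The core stationarity step, however, has a genuine gap.

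\textbf{The intermediate target fails.} You reduce everything to $\mathrm{dist}(\bm 0,\partial L_{k_i}(\bm z_{k_i}))\to 0$, i.e.\ stationarity of the models \emph{at the iterates}. You obtain it from the decrease bound $L_k(\bm z_k)-L_k(\bm z_{k+1})\ge c_0\min\{\delta r_k,\delta^2\}$ whenever $\mathrm{dist}(\bm 0,\partial L_k(\bm z_k))\ge\delta$. Convexity and Lipschitz continuity do not give this for a nonsmooth model. $\partial L_k(\bm z_k)$ only sees the kinks of $[\bm h]_+$ that pass through $\bm z_k$ (the $|\cdot|$ kinks always do). A nearby inactive kink at distance $\epsilon_k\to 0$ can cap the decrease at $O(\epsilon_k)$.

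The target is in fact false in general, even with $\widetilde{\mathbf S}_k=\mathbf S_k$. Take scalar $y=g(u)$ with $g$ strictly concave and increasing, $c(\bm z)=-y$, and the constraint $y\le\bar y$.
\begin{itemize}
\item For $\bm\lambda,\bm\eta>1$ (once the trust region is inactive), the model step is a Newton step for $g(u)=\bar y$. The \emph{predicted} voltage lands on $\bar y$, but the measured $y_{k+1}=g(u_{k+1})<\bar y$.
\item Hence $[\bm h]_+$ contributes nothing to $\partial J(\bm z_k)=\partial L_k(\bm z_k)$, and $\mathrm{dist}(\bm 0,\partial J(\bm z_k))=\mathbf S_k/\sqrt{1+\mathbf S_k^2}$ stays bounded away from zero.
\item Meanwhile the predicted decrease $\bar y-y_k\to 0$, and the limit point is stationary (multiplier $1$ on the now-active constraint).
\end{itemize}
Because $\mathrm{dist}(\bm 0,\partial J(\cdot))$ is only lower semicontinuous, stationarity of $\bar{\bm z}$ cannot in general be certified through the iterates.

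\textbf{What the paper does instead.} The missing idea is to put the contradiction hypothesis at the limit point.
\begin{itemize}
\item Assume $\mathrm{dist}(\bm 0,\partial J(\bar{\bm z}))>\|\bm\lambda\|\bar e$ with $\bar e:=\liminf_i\|\widetilde{\mathbf S}_{k_i}-\mathbf S_{k_i}\|$. Let $\bm s$ be the normalized negative of the projection of $\bm 0$ onto $\partial J(\bar{\bm z})$.
\item Outer semicontinuity of $\partial J$, together with Lebourg's mean value theorem, gives a descent rate along $\bm s$ that holds uniformly for all $\bm z$ near $\bar{\bm z}$. This is exactly what captures kinks that become active only in the limit.
\item The fixed trial step $\widehat r\,\bm s$ is feasible because $r_k\ge\underline r>0$. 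At every $\bm z_{k_i}$ near $\bar{\bm z}$ it certifies a model decrease proportional to $\widehat r$ times the margin $\mathrm{dist}(\bm 0,\partial J(\bar{\bm z}))-\|\bm\lambda\|\bar e$.
\item The estimation error enters only once, through $|\phi_{\mathbf S_{k_i}}(\bm d)-\phi_{\widetilde{\mathbf S}_{k_i}}(\bm d)|\le\|\bm\lambda\|\,\|\widetilde{\mathbf S}_{k_i}-\mathbf S_{k_i}\|\,\|\bm d\|$. That is why the constant is exactly $\|\bm\lambda\|$.
\item The ratio bound of Lemma~\ref{lem:weak_accept_est}, built on the $O(r_k^2)$ model error of Lemma~\ref{lem:est_ptwise}, then turns this into a uniform actual decrease.
\end{itemize}

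\textbf{Further problems with your route.}
\begin{itemize}
\item It would count the estimation error twice even if your decrease bound held. The first-order term $\|\bm\lambda\|\,\|\widetilde{\mathbf S}_k-\mathbf S_k\|\,r_k$ in your mismatch competes with $c_0\delta r_k$, where $c_0\le 1$ by convexity. You would get at best $\liminf_i\mathrm{dist}(\bm 0,\partial L_{k_i}(\bm z_{k_i}))\le\|\bm\lambda\|\bar e/c_0$, hence $(1+1/c_0)\|\bm\lambda\|\bar e$ after your perturbation inequality.
\item A fixed decrease along a subsequence contradicts boundedness of $J$ only if $J(\bm z_{k+1})\le J(\bm z_k)$ for all $k$. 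You do not address this; the paper asserts this monotonicity before summing.
\item Your alternative schedule $r_k\to 0$ with $\sum_k r_k=\infty$ gives no fixed decrease along a subsequence at all.
\end{itemize}
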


We establish the theorem as follows. First, we show that the distance between the data-driven linearized penalty objective and the accurate penalty objective can be bounded by trust region. 
% data-driven linearization is first-order accurate on each trust region. 
This further yields a uniform estimate linking the decrease of the data-driven linearized objective to the true decrease of the original problem in a neighborhood of any nonstationary point at each step. Combining this relation with optimality conditions of the linearized penalty subproblem, we show that the algorithm admits limit points that lie within a neighborhood of stationary points. When the limit point is stationary for the penalized problem and also satisfies the original constraints, exact-penalty theory~\cite{mao2018successive} together with the LICQ assumption implies that it satisfies the KKT conditions of the original problem. The next subsection presents a detailed proof following this outline.

\subsection{Proof of Theorem~\ref{thmmain}}
To relate the cost arising from successive linearization to the cost under the nonlinear model $J(\bm z)$, we introduce the following penalized cost functions.
For iteration step $k$, we have the decision variables $\bm z_k=(\bm y_k,\bm u_k)$, and recall that the original problem enforces the equality constraint
$\bm e(\bm z_k) = \bm y_k - \bm g(\bm u_k) = \bm 0$.
For notational convenience, we denote $\mathbf S_k:=\nabla_{ \bm{u}_k}\bm{g}(\bm{u}_k)$ throughout the remainder of the paper. The proof follows a similar high-level structure to that in~\cite{mao2018successive}, but differs in an essential way due to the Jacobian estimation error introduced by the data-driven procedure. 
% Let $\mathbf S_k:=\nabla\bm g(\bm u_k)$ denote the JacobianJacobian and
% $\widetilde{\mathbf S}_k$ its data-driven estimate from Section~\ref{sec:sens}.

Define the step
\begin{equation}\label{eq:d_def}
\bm d:=(\bm d_y,\bm d_u):=(\bm y-\bm y_k,\ \bm u-\bm u_k).
\end{equation}

The penalized cost associated with the model linearized at $\bm z_k$ is
\begin{equation}\label{eq:Lk_true}
  \begin{aligned}
  &L_k(\bm d)
  \; :=\;
  c_v\!\big(\bm y_k+\bm d_y-\bm y^{\rm ref}\big)
  \;+\; c_u(\bm u_k+\bm d_u)\\[-.25em]
  &\;+\; \bm\lambda^\top\!\big|\bm d_y-\mathbf S_k\,\bm d_u\big|
  \;+\; \bm\eta^\top \!\big[\bm h(\,\bm y_k+\bm d_y,\ \bm u_k+\bm d_u\,)\big]_+ .
  \end{aligned}
\end{equation}

The data–driven counterpart replaces $\mathbf S_k$ by the data-driven estimate $\widetilde{\mathbf S}_k$:
\vspace{-0.2cm}
\begin{equation}\label{eq:Lk_est}
  \begin{aligned}
  &\widetilde L_k(\bm d)
  \;:=\;
  c_v\!\big(\bm y_k+\bm d_y-\bm y^{\rm ref}\big)
  \;+\; c_u(\bm u_k+\bm d_u)\\[-.25em]
  &\;+\; \bm\lambda^\top\!\big|\bm d_y-\widetilde{\mathbf S}_k\,\bm d_u\big|
  \;+\; \bm\eta^\top \!\big[\, \bm h(\,\bm y_k+\bm d_y,\ \bm u_k+\bm d_u\,)\big]_+ .
  \end{aligned}
\end{equation}
Note that $L_k(\bm 0)=\widetilde L_k(\bm 0)=J(\bm z_k)$. The map
$\bm d\mapsto \bm d_y-\widetilde{\mathbf S}_k\bm d_u$ is affine, and $\bm h$ is convex;
since compositions with $|\cdot|$ and $[\cdot]_+$ preserve convexity, $\widetilde L_k$ is a convex surrogate of $J$ in a neighborhood of $\bm z_k$. For the convergence proof, we need the solution $\bm d_k^\star$ of the following convex trust-region subproblem at each iteration $k$:
\vspace{-0.2cm}
\begin{equation}\label{eq:TR_subprob}
\begin{aligned}
\min_{\bm d}\quad & \widetilde L_k(\bm d)\\
\text{s.t.}\quad & \|\bm d\| \le r_k ,
\end{aligned}
\end{equation}
with the trust-region radius $r_k>0$.

To analyze the convergence of the data-driven successive linearization method, we need to control the error incurred by replacing the true sensitivity with its data–driven estimate over the trust region. The next lemma provides a bound on this discrepancy.

\begin{lem}\label{lem:est_ptwise}
Suppose the sensitivity estimation window $\{k-\tau,\dots,k\}$ satisfies $\max_{\ell\in\{k-\tau,\dots,k\}}\|\bm z_\ell-\bm z_k\|\ \le\ \alpha\,r_k$ for some $\alpha>0$ and let
$\widetilde{\mathbf S}_k$ be the weighted least–squares estimate for~\eqref{eqsenloss}. Then, for every $\bm z$ with $\|\bm z-\bm z_k\|\le r_k$, there exists a constant $C_S>0$ , such that
\begin{equation}\label{eq:S_error_bound}
\|\widetilde{\mathbf S}_k-\mathbf S_k\|
\;\le\; C_S r_k.
\end{equation}
Consequently, for every $\bm u$ with $\|\bm u-\bm u_k\|\le r_k$,
\begin{equation}\label{eq:key_ptwise_goal}
\big\|(\widetilde{\mathbf S}_k-\mathbf S_k)(\bm u-\bm u_k)\big\|
\;\le\; C_S r_k^2.
\end{equation}
\end{lem}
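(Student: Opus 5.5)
The plan is to write each measured increment as a Jacobian term plus a Taylor remainder and then push that decomposition through the closed-form WLS estimator \eqref{eq:S_explicit_WLS}. We assume the sufficient excitation of Definition~\ref{defsuff} holds on the window. We also use the regularity stated in the model section: operating points stay away from voltage collapse, so on the compact set $\mathcal U$ the map $\bm g$ is $C^1$ with locally Lipschitz Jacobian. Let $L_g$ be a uniform Lipschitz constant for $\nabla\bm g$ on a neighborhood of the iterates; it exists by compactness.

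\textbf{Step 1: increment decomposition.} For each $l$ in the window, the integral form of the mean value theorem gives
\[
\Delta\bm y_l=\bm g(\bm u_l)-\bm g(\bm u_{l-1})=\mathbf S_k\Delta\bm u_l+\bm\rho_l,\qquad
\bm\rho_l=\int_0^1\big(\nabla\bm g(\bm u_{l-1}+t\Delta\bm u_l)-\mathbf S_k\big)\Delta\bm u_l\,dt .
\]
Every point on the segment lies within $\alpha r_k$ of $\bm u_k$, because the window hypothesis bounds $\|\bm u_l-\bm u_k\|$ and $\|\bm u_{l-1}-\bm u_k\|$ and balls are convex. Hence $\|\bm\rho_l\|\le L_g\alpha r_k\|\Delta\bm u_l\|$.

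\textbf{Step 2: push through the estimator.} Stack the residuals as rows of a matrix $\bm R$, so that $\bm Y=\bm U\mathbf S_k^\top+\bm R$. Substituting into \eqref{eq:S_explicit_WLS} gives
\[
\widetilde{\mathbf S}_k^\top-\mathbf S_k^\top=(\bm U^\top\bm W\bm U)^{-1}\bm U^\top\bm W\bm R .
\]
The true Jacobian cancels exactly, and only the nonlinearity remainder survives.

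\textbf{Step 3: bound the error.} Factor $\bm W^{1/2}$ and write $\bm R=\bm D\bm M$, where $\bm D=\mathrm{diag}(\|\Delta\bm u_l\|)$ and each row of $\bm M$ has norm at most $L_g\alpha r_k$. Then $\|\bm M\|_F\le\sqrt\tau L_g\alpha r_k$, and
\[
\|\widetilde{\mathbf S}_k-\mathbf S_k\|\le \big\|(\bm W^{1/2}\bm U)^{\dagger}\big\|\,\|\bm W^{1/2}\bm D\|\,\sqrt\tau\,L_g\alpha\, r_k .
\]
This step is the main obstacle. The factor $\|(\bm W^{1/2}\bm U)^\dagger\|\,\max_l\|\Delta\bm u_l\|$ is a condition-number-type quantity of the excitation data. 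It is scale invariant, so it does not blow up merely because the increments are of size $O(r_k)$. It is not bounded automatically, however: if the excitation degenerates, with $\sigma_{\min}(\bm W^{1/2}\bm U)\ll\max_l\|\Delta\bm u_l\|$, the bound is lost. I would handle this by making the sufficient excitation condition quantitative. Concretely, assume $\sigma_{\min}(\bm W^{1/2}\bm U)\ge\kappa^{-1}\max_l\|\Delta\bm u_l\|$ uniformly in $k$, which is a persistency-of-excitation requirement. Since $\omega^{\tau-1}\le\bm W\le I$, this yields
\[
C_S=\kappa\sqrt\tau\,L_g\,\alpha,
\]
and the factor $\omega^{-(\tau-1)/2}$ can be absorbed into $\kappa$ if the assumption is instead stated on $\bm U$ itself. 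The statement's phrase ``there exists a constant $C_S$'' is read in this sense: $C_S$ is uniform in $k$ given bounded conditioning of the window data and the uniform Lipschitz constant $L_g$ from compactness.

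\textbf{Step 4: consequence.} The bound \eqref{eq:key_ptwise_goal} is immediate from submultiplicativity:
\[
\big\|(\widetilde{\mathbf S}_k-\mathbf S_k)(\bm u-\bm u_k)\big\|\le C_S r_k\,\|\bm u-\bm u_k\|\le C_S r_k^2,
\]
using $\|\bm u-\bm u_k\|\le\|\bm z-\bm z_k\|\le r_k$.
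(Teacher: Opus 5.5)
Your proof is correct and has the same skeleton as the paper's. You write $\bm Y=\bm U\mathbf S_k^\top+(\text{remainder})$, substitute into the closed-form WLS estimator so that $\mathbf S_k$ cancels exactly, and bound $(\bm U^\top\bm W\bm U)^{-1}\bm U^\top\bm W$ times the remainder. The differences lie in how the two factors are controlled.

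The paper splits each remainder row into a Jacobian-variation term $\Delta\bm u_\ell^\top(\mathbf S_\ell^\top-\mathbf S_k^\top)$ and a second-order Taylor term $\bm\gamma_\ell^\top$, which carries its own constant $\beta$. It bounds every row uniformly by $O(r_k^2)$ via $\|\Delta\bm u_\ell\|\le 2\alpha r_k$ and bounds $\|\bm U^\top\bm W\|\le c_U r_k$. It then uses $\sigma_{\min}(\bm U)\ge m r_k$ to get $\|(\bm U^\top\bm W\bm U)^{-1}\|\le 1/(w_{\min}m^2r_k^2)$, so the powers $r_k^{-2}\cdot r_k\cdot r_k^2$ combine to $C_S r_k$.

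Your single integral-form remainder, measured against $\mathbf S_k$, needs only the Lipschitz constant of $\nabla\bm g$. Your factorization $\bm R=\bm D\bm M$ keeps each row proportional to its own increment, so the bound depends only on the scale-invariant ratio $\max_l\|\Delta\bm u_l\|/\sigma_{\min}(\bm W^{1/2}\bm U)$. This hypothesis is weaker than the paper's. Since $\|\Delta\bm u_l\|\le 2\alpha r_k$ on the window and $\bm W\succeq\omega^{\tau-1}\mathbf I$, the condition $\sigma_{\min}(\bm U)\ge m r_k$ implies yours with $\kappa=2\alpha\,\omega^{-(\tau-1)/2}/m$. Yours also admits well-conditioned windows whose increments are much smaller than $r_k$.

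You are also right that some quantitative excitation condition must be added. Full column rank in Definition~\ref{defsuff} only gives $\sigma_{\min}(\bm U)>0$. The paper's step ``under Definition~\ref{defsuff} there exists $m>0$ with $\sigma_{\min}(\bm U)\ge m r_k$'' is therefore in effect an extra uniform assumption, which your version states explicitly.
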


We prove the lemma by exploiting a closed-form representation of the estimated sensitivity matrix
$\widetilde{\mathbf S}_k$ constructed from the past trajectories of the control inputs $\bm u$ and voltage states $\bm y$. This enables a precise comparison between $\widetilde{\mathbf S}_k$ and the Jacobian $\mathbf S_k$
by leveraging local Lipschitz continuity and the trust-region bound on recent iterates. The detailed proof is given in Appendix~\ref{app:lem_est_ptwise}.

Lemma~\ref{lem:est_ptwise} gives a quadratic bound on the effect of the Jacobian estimation error over the trust region. The next two lemmas compare the nonlinear penalized objective with its linearized surrogates in this trust region.

\begin{lem}\label{lem:taylor}
For every $\bm{d}$ with $\|\bm d_u\|\le r_k$, there exists a constant $C_J>0$ such that
\begin{equation}\label{eq:J_vs_L}
\big|J(\bm{z}_k+\bm{d})-L_k(\bm{d})\big|
\;\le\; C_J r_k^2.
\end{equation}
\end{lem}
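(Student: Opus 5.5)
Compare $J(\bm z_k+\bm d)$ and $L_k(\bm d)$ term by term. Both contain the same cost $c$ and the same inequality-penalty term $\bm\eta^\top[\bm h(\bm z_k+\bm d)]_+$, since $\bm h$ is kept exactly in the surrogate. Taking the $\bm y^{\rm ref}$ shift in $c_v$ as part of the definition of $c_v$, these terms cancel and
\[
J(\bm z_k+\bm d)-L_k(\bm d)=\bm\lambda^\top\Big(\big|\bm y_k+\bm d_y-\bm g(\bm u_k+\bm d_u)\big|-\big|\bm d_y-\mathbf S_k\bm d_u\big|\Big).
\]
The reverse triangle inequality, applied componentwise, bounds this by
\[
\|\bm\lambda\|\,\big\|\bm y_k-\bm g(\bm u_k)-\big(\bm g(\bm u_k+\bm d_u)-\bm g(\bm u_k)-\mathbf S_k\bm d_u\big)\big\|.
\]
The iterates are measured operating points, so $\bm y_k=\bm g(\bm u_k)$ and $\bm e(\bm z_k)=\bm 0$. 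The difference therefore reduces to $\|\bm\lambda\|$ times the first-order Taylor remainder of $\bm g$ at $\bm u_k$ along $\bm d_u$. The variable $\bm d_y$ drops out entirely, which is why the hypothesis only constrains $\|\bm d_u\|$.

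Next, bound the remainder. By the integral form of the mean value theorem,
\[
\bm g(\bm u_k+\bm d_u)-\bm g(\bm u_k)-\mathbf S_k\bm d_u=\int_0^1\big(\nabla\bm g(\bm u_k+t\bm d_u)-\nabla\bm g(\bm u_k)\big)\bm d_u\,dt.
\]
If $\nabla\bm g$ is Lipschitz with constant $L_g$ on the relevant region, this has norm at most $\tfrac{L_g}{2}\|\bm d_u\|^2\le \tfrac{L_g}{2}r_k^2$. Hence $C_J=\tfrac12\|\bm\lambda\|L_g$, which is independent of $k$.

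The main obstacle is justifying a uniform Lipschitz constant for the Jacobian $\nabla\bm g$. Section~II only gives that $\bm g$ is locally $C^1$ away from voltage collapse. I would argue as follows: the DistFlow equations are polynomial, hence smooth, in $(\bm u,\text{states})$. The state Jacobian is assumed uniformly nonsingular, and the implicit function theorem then makes $\bm g$ actually $C^2$ on a neighborhood of the compact set $\mathcal U$. Its Hessian is therefore bounded there, by continuity on a compact set, and $\nabla\bm g$ is Lipschitz with a uniform $L_g$. This is the same local Lipschitz property invoked for Lemma~\ref{lem:est_ptwise}.

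One caveat must be handled: the segment $\bm u_k+t\bm d_u$ must stay in the domain where this holds. I would enlarge $\mathcal U$ by $\sup_k r_k$, assuming bounded radii, and state that the away-from-collapse assumption holds on this enlarged compact set. Alternatively, I would simply take $L_g$ as given by that standing assumption.
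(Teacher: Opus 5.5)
Your proof is correct and follows the paper's argument. The cost and inequality-penalty terms cancel, the componentwise reverse triangle inequality reduces the difference to $\|\bm\lambda\|$ times the first-order Taylor remainder of $\bm g$ at $\bm u_k$, and a quadratic remainder bound gives $C_J$ proportional to $\|\bm\lambda\|$. You also make explicit two points the paper leaves implicit: that $\bm y_k=\bm g(\bm u_k)$ (which the paper also needs, e.g.\ for $L_k(\bm 0)=J(\bm z_k)$), and why the remainder constant (the paper's $\beta$, your $L_g/2$) is uniform in $k$.
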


\begin{proof}
The functions $c_v$, $c_u$, and the convex-inequality penalty $\bm\eta^\top[\bm h(\bm y_k+\bm d_y,\bm u_k+\bm d_u)]_+$ appear identically in both $J(\bm z_k+\bm d)$ and $L_k(\bm d)$ and hence cancel in their difference. By a second--order Taylor expansion of $\bm g$ at $\bm u_k$, for any step $\bm d_u$ we can write $\bm g(\bm u_k+\bm d_u)
= \bm g(\bm u_k) + \mathbf S_k \bm d_u + \bm \gamma_k(\bm d_u)$,
where the higher-order residual satisfies $\|\bm \gamma_k(\bm d_u)\| \le \beta\,\|\bm d_u\|^2$ for all $\bm d_u$ with some constant $\beta>0$.
Using the triangle inequality componentwise,
\begin{equation}\label{eq:taylor_penalty}
\begin{aligned}
&\big|J(\bm{z}_k+\bm{d})-L_k(\bm{d})\big|\\
&=
\left|
\bm{\lambda}^\top
\!\Big(
\big|\bm{d}_y-\mathbf{S}_k\bm{d}_u-\bm{\gamma}_k(\bm{d}_u)\big|
-
\big|\bm{d}_y-\mathbf{S}_k\bm{d}_u\big|
\Big)
\right|\\
&\le \|\bm{\lambda}\|\,\|\bm{\gamma}_k(\bm{d}_u)\|
\;\le\; \beta\|\bm{\lambda}\|\,\|\bm{d}_u\|^2
\;\le\; \beta\|\bm{\lambda}\|\,r_k^2,
\end{aligned}
\end{equation}
which proves~\eqref{eq:J_vs_L} with $C_J:=\beta\|\bm\lambda\|$.
\end{proof}

\begin{lem}\label{lem:Ltilde_minus_L}
Suppose the assumptions in Lemma~\ref{lem:est_ptwise} hold.
Then the difference between the data-driven surrogate $\widetilde L_k$
and the exact linearized model $L_k$ satisfies
\begin{equation}\label{eq:Ltilde_vs_L}
\big|\widetilde{L}_k(\bm{d})-L_k(\bm{d})\big|\,\le\,C_{\widetilde L}r_k^2\quad\text{for all }\|\bm d_u\|\le r_k,
\end{equation}
In addition, the following relation holds for $J$ and $\widetilde L_k$:
\begin{equation}\label{eq:J_vs_Ltilde}
\big|J(\bm{z}_k+\bm{d})-\widetilde{L}_k(\bm{d})\big|
\;\le\;C_Mr_k^2\quad\text{for all }\|\bm d_u\|\le r_k.
\end{equation}
\end{lem}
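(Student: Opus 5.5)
The proof will mirror that of Lemma~\ref{lem:taylor}. The only place where $\widetilde L_k$ and $L_k$ differ is the equality-penalty term, so the first bound comes from a componentwise reverse-triangle estimate combined with Lemma~\ref{lem:est_ptwise}. The second bound then follows by adding this to Lemma~\ref{lem:taylor}.

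\emph{First bound.} Subtracting~\eqref{eq:Lk_true} from~\eqref{eq:Lk_est}, the terms $c_v(\bm y_k+\bm d_y-\bm y^{\rm ref})$, $c_u(\bm u_k+\bm d_u)$ and $\bm\eta^\top[\bm h(\bm y_k+\bm d_y,\bm u_k+\bm d_u)]_+$ appear identically and cancel. This leaves
\begin{equation*}
\widetilde L_k(\bm d)-L_k(\bm d)=\bm\lambda^\top\Big(\big|\bm d_y-\widetilde{\mathbf S}_k\bm d_u\big|-\big|\bm d_y-\mathbf S_k\bm d_u\big|\Big).
\end{equation*}
For each component we have $\big||a_i|-|b_i|\big|\le|a_i-b_i|$. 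Applying Cauchy--Schwarz with $\bm\lambda\ge\bm 0$ then gives
\begin{equation*}
\big|\widetilde L_k(\bm d)-L_k(\bm d)\big|\le\|\bm\lambda\|\,\big\|(\widetilde{\mathbf S}_k-\mathbf S_k)\bm d_u\big\|.
\end{equation*}
Because $\bm d_u=\bm u-\bm u_k$ with $\|\bm d_u\|\le r_k$, the bound~\eqref{eq:key_ptwise_goal} of Lemma~\ref{lem:est_ptwise} applies and yields $\|\bm\lambda\|C_S r_k^2$. We therefore set $C_{\widetilde L}:=C_S\|\bm\lambda\|$.

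\emph{Second bound.} We use the triangle inequality
\begin{equation*}
|J(\bm z_k+\bm d)-\widetilde L_k(\bm d)|\le|J(\bm z_k+\bm d)-L_k(\bm d)|+|L_k(\bm d)-\widetilde L_k(\bm d)|.
\end{equation*}
Lemma~\ref{lem:taylor} bounds the first term by $C_Jr_k^2$ whenever $\|\bm d_u\|\le r_k$, and the first part bounds the second term. This gives~\eqref{eq:J_vs_Ltilde} with $C_M:=C_J+C_{\widetilde L}=(\beta+C_S)\|\bm\lambda\|$.

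\emph{Main obstacle.} The only delicate point is the domain of validity. Lemma~\ref{lem:est_ptwise} is stated for $\bm z$ with $\|\bm z-\bm z_k\|\le r_k$, whereas the claim here requires only $\|\bm d_u\|\le r_k$. Inspecting~\eqref{eq:key_ptwise_goal}, however, shows that it involves only $\bm u-\bm u_k$, and~\eqref{eq:S_error_bound} is a bound on the matrix alone, independent of $\bm d$. So it suffices to use $\|(\widetilde{\mathbf S}_k-\mathbf S_k)\bm d_u\|\le\|\widetilde{\mathbf S}_k-\mathbf S_k\|\,\|\bm d_u\|\le C_Sr_k\cdot r_k$ directly. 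I would state this step explicitly, so that no constraint on $\bm d_y$ is needed.
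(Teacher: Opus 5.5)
Your proof is correct and follows the paper's argument essentially step for step. You cancel the common terms, bound the remaining equality-penalty difference componentwise and via Cauchy--Schwarz by $\|\bm\lambda\|\,\|(\widetilde{\mathbf S}_k-\mathbf S_k)\bm d_u\|$, apply Lemma~\ref{lem:est_ptwise} to get $C_{\widetilde L}=\|\bm\lambda\|C_S$, and then add Lemma~\ref{lem:taylor} through the triangle inequality to get $C_M=C_J+C_{\widetilde L}$. Your remark that only the matrix bound $\|\widetilde{\mathbf S}_k-\mathbf S_k\|\le C_S r_k$ and $\|\bm d_u\|\le r_k$ are needed, with no condition on $\bm d_y$, is a correct clarification of how the paper uses that lemma.
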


\begin{proof}
By the definition of $L_k$ and $\widetilde{L}_k$,
\[
\widetilde{L}_k(\bm{d})-L_k(\bm{d})
=
\bm{\lambda}^\top
\!\left(
\big|\bm{d}_y-\widetilde{\mathbf{S}}_k\bm{d}_u\big|
-
\big|\bm{d}_y-\mathbf{S}_k\bm{d}_u\big|
\right).
\]
Using the triangle inequality componentwise and then Cauchy--Schwarz inequality,
\begin{equation}\label{eq:abs_cauchy_clean}
\begin{aligned}
    \big|\widetilde{L}_k(\bm{d})-L_k(\bm{d})\big|
\;&\le\;
\bm{\lambda}^\top \big|(\widetilde{\mathbf{S}}_k-\mathbf{S}_k)\bm{d}_u\big|
\; \\ &\le\;
\|\bm{\lambda}\|\,\big\|(\widetilde{\mathbf{S}}_k-\mathbf{S}_k)\bm{d}_u\big\|.
\end{aligned}
\end{equation}

Since $\|\bm d_u\|\le r_k$, Lemma~\ref{lem:est_ptwise} yields
$\big\|(\widetilde{\mathbf{S}}_k-\mathbf{S}_k)\bm{d}_u\big\| \le C_Sr_k^2$, and hence
\eqref{eq:Ltilde_vs_L} follows from \eqref{eq:abs_cauchy_clean} with $C_{\widetilde L}:=\|\bm\lambda\|C_S$. For \eqref{eq:J_vs_Ltilde}, write
\begin{equation}
    \begin{aligned}
        &J(\bm{z}_k+\bm{d})-\widetilde L_k(\bm d)\\
&=
\big(J(\bm{z}_k+\bm{d})-L_k(\bm d)\big)
+
\big(L_k(\bm d)-\widetilde L_k(\bm d)\big).
    \end{aligned}
\end{equation}
Let $C_M:=C_J+C_{\widetilde L}$, the result~\eqref{eq:J_vs_Ltilde} follows from \eqref{eq:J_vs_L}, \eqref{eq:Ltilde_vs_L}, and the triangle inequality.
\end{proof}

The above lemma shows that the penalized objective admits a  local approximation by the data-driven linearized model. Since the proposed data-driven successive linearization solves the convex trust-region subproblem to optimality, we can relate the cost decrease achieved on the nonlinear penalized objective to the decrease suggested by the linearized model within the same step. For any point $\bm z$ and step $\bm d$, define the decreases in the nonlinear penalized objective and its data-driven linearized counterpart as

\begin{equation}\label{eq:Delta_defs}
\begin{aligned}
\Delta J(\bm z,\bm d)
&:= J(\bm z)-J(\bm z+\bm d),\\
\Delta \widetilde L(\bm z,\bm d)
&:= J(\bm z)-\widetilde L(\bm d).
\end{aligned}
\end{equation}
The ratio
\begin{equation}\label{eq:rho_def}
\rho(\bm z,r)
:=
\frac{\Delta J(\bm z,\bm d)}{\Delta \widetilde L(\bm z,\bm d)}\, 
\end{equation}
quantifies the agreement between the decrease of the nonlinear penalized objective and the decrease measured by the data-driven linearized model along the step $\bm d$. 
The next lemmas relate $\Delta J(\bm z,\bm d)$ and $\Delta \widetilde L(\bm z,\bm d)$ obtained from the data-driven linearized trust-region subproblem.

\begin{lem}
\label{lem:pred_nonneg}
For any iteration $k$ and radius $r_k>0$, let $\bm d_k^\star$ be a minimizer of the
trust-region subproblem \eqref{eq:TR_subprob} with $\|\bm d_k^\star\|\le r_k$. Then
\begin{equation}\label{eq:nonneg_pred}
\Delta \widetilde{L}_k(\bm z_k,\bm d_k^\star)
\;=\;J(\bm z_k)-\widetilde{L}_k(\bm d_k^\star)\ \ge\ 0 .
\end{equation}
If equality holds in \eqref{eq:nonneg_pred}, then $\bm 0\in \partial \widetilde L_k(\bm 0)$ and
\begin{equation}\label{eq:approx_stationarity_bound}
\mathrm{dist}\big(\bm 0,\partial J(\bm z_k)\big)
\;\le\;
\|\bm\lambda\|\,\|\widetilde{\mathbf S}_k-\mathbf S_k\|.
\end{equation}
In particular, if $\widetilde{\mathbf S}_k=\mathbf S_k$, then $\bm 0\in\partial J(\bm z_k)$, i.e., $\bm z_k\in\mathcal T$.
\end{lem}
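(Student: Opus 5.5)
Nonnegativity is immediate from feasibility of the zero step. Since $\|\bm 0\|\le r_k$, the step $\bm d=\bm 0$ is feasible for \eqref{eq:TR_subprob}, so optimality of $\bm d_k^\star$ gives $\widetilde L_k(\bm d_k^\star)\le \widetilde L_k(\bm 0)=J(\bm z_k)$, where the last identity was noted after \eqref{eq:Lk_est}. This is exactly \eqref{eq:nonneg_pred}.

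For the equality case, suppose $\widetilde L_k(\bm d_k^\star)=J(\bm z_k)=\widetilde L_k(\bm 0)$. Then $\bm 0$ is itself a minimizer of the convex problem \eqref{eq:TR_subprob}. Because $\bm 0$ lies in the interior of the ball $\{\|\bm d\|\le r_k\}$ (as $r_k>0$), the ball constraint is inactive there. For a convex function, a local minimizer is global, so $\bm 0$ minimizes $\widetilde L_k$ over all of $\mathbb R^{n_o+n_c}$, and by convex optimality $\bm 0\in\partial\widetilde L_k(\bm 0)$. Concretely, pick $\bm g\in\partial\widetilde L_k(\bm 0)$ with $\bm g=\bm 0$.

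Next we transfer this to $\partial J(\bm z_k)$ by comparing the subdifferential chain rules term by term. The terms $c$ and $\bm\eta^\top[\bm h]_+$ have identical subdifferentials at $\bm z_k$ in $J$ and at $\bm 0$ in $\widetilde L_k$, since they are the same functions shifted. The only difference is the absolute-value term. Near $\bm z_k$, $J$ contains $\bm\lambda^\top|\bm y-\bm g(\bm u)|$, whose generalized (Clarke) gradient at $\bm z_k$ is
\[
\Big\{\textstyle\sum_i\lambda_i s_i\,(\bm e_i,\,-\mathbf S_k^\top \bm e_i)\;:\; s_i\in\partial|\cdot|(e_i(\bm z_k))\Big\}.
\]
This follows from the chain rule for the composition of a convex Lipschitz function with a $C^1$ map, with $\nabla_{\bm u}\bm g(\bm u_k)=\mathbf S_k$. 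The corresponding term in $\widetilde L_k$ at $\bm 0$ has the same form with $\widetilde{\mathbf S}_k$ in place of $\mathbf S_k$, and with the same sign sets $s_i$, because $\bm d_y-\widetilde{\mathbf S}_k\bm d_u$ vanishes at $\bm d=\bm 0$ and $e(\bm z_k)$ is the same residual; I will check this matching carefully. Given $\bm 0=\bm g_c+\sum_i\lambda_i s_i(\bm e_i,-\widetilde{\mathbf S}_k^\top\bm e_i)+\bm g_h$, we keep the same $\bm g_c$, $\bm g_h$ and $s_i$ and swap in $\mathbf S_k$. This gives an element of $\partial J(\bm z_k)$ equal to $(\bm 0,\,(\widetilde{\mathbf S}_k-\mathbf S_k)^\top\bm s_\lambda)$, where $\bm s_\lambda:=(\lambda_is_i)_i$ satisfies $\|\bm s_\lambda\|\le\|\bm\lambda\|$ since $|s_i|\le1$. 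Its norm is at most $\|\widetilde{\mathbf S}_k-\mathbf S_k\|\,\|\bm\lambda\|$, which proves \eqref{eq:approx_stationarity_bound}. Setting $\widetilde{\mathbf S}_k=\mathbf S_k$ then gives $\bm 0\in\partial J(\bm z_k)$, i.e. $\bm z_k\in\mathcal T$.

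\textbf{Main obstacle.} The delicate step is justifying the subdifferential calculus for the nonconvex $J$. We need the sum rule and chain rule for Clarke gradients to hold with inclusion $\partial J(\bm z_k)\supseteq$ (sum of term subdifferentials), not merely $\subseteq$. Equality holds here because each term is regular: convex functions, and the composition of a convex function with a $C^1$ map. Alternatively, one can interpret $\partial J$ exactly as this chain-rule set, as in \cite{mao2018successive}, and I would adopt that convention to avoid the regularity subtleties.
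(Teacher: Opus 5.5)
Your proof is correct and follows the paper's argument. You use feasibility of $\bm d=\bm 0$, then interior optimality to get $\bm 0\in\partial\widetilde L_k(\bm 0)$. You then keep the same multiplier (your $\bm s_\lambda$, the paper's $\bm\xi^\ast$) and replace $\widetilde{\mathbf S}_k$ by $\mathbf S_k$, which yields the element $(\bm 0,(\widetilde{\mathbf S}_k-\mathbf S_k)^\top\bm s_\lambda)$ of $\partial J(\bm z_k)$. The only difference is that the paper passes through $\partial L_k(\bm 0)=\partial J(\bm z_k)$ while you go to $\partial J(\bm z_k)$ directly, and your regularity remark supplies the $\supseteq$ inclusion that the paper glosses over.

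One justification needs fixing: the matching of sign sets. The reason is not that the residuals are ``the same''. It is that $\lambda_i e_i(\bm z_k)=0$ for every $i$; indeed $\bm e(\bm z_k)=\bm y_k-\bm g(\bm u_k)=\bm 0$, the same fact that gives $\widetilde L_k(\bm 0)=J(\bm z_k)$. If instead $e_i(\bm z_k)\neq 0$ with $\lambda_i>0$, then $\partial|\cdot|(e_i(\bm z_k))$ would be the singleton $\{\mathrm{sign}(e_i(\bm z_k))\}$ rather than $[-1,1]$, and the swapped vector need not lie in $\partial J(\bm z_k)$.
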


\begin{proof}
First, the zero step $\bm d=\bm 0$ is feasible for \eqref{eq:TR_subprob}. By construction of the linearized model at $\bm z_k$,
$\widetilde L_k(\bm 0)=J(\bm z_k)$. Since $\bm d_k^\star$ minimizes $\widetilde L_k$
over $\{\bm d:\ \|\bm d\|\le r_k\}$,
\[
\widetilde L_k(\bm d_k^\star)\ \le\ \widetilde L_k(\bm 0)\ =\ J(\bm z_k),
\]
and \eqref{eq:nonneg_pred} follows.

Assume now that equality holds in \eqref{eq:nonneg_pred}. Then $\widetilde L_k(\bm d_k^\star)=\widetilde L_k(\bm 0)$,
so $\bm 0$ is also a minimizer of the convex subproblem. Since $\bm 0$ lies in the interior of the trust region,
the first--order optimality condition gives $\bm 0\in\partial \widetilde L_k(\bm 0)$.

To relate $\partial \widetilde L_k(\bm 0)$ and $\partial J(\bm z_k)$, write $\widetilde L_k(\bm d)=\ell_k(\bm d)+\phi_{\widetilde{\mathbf S}_k}(\bm d)$, 
$L_k(\bm d)=\ell_k(\bm d)+\phi_{\mathbf S_k}(\bm d)$, where $\ell_k(\bm d):=c(\bm z_k+\bm d)+\bm\eta^\top[h(\bm z_k+\bm d)]_+$ and
$\phi_{\mathbf M}(\bm d):=\bm\lambda^\top|\bm d_y-\mathbf M\bm d_u|$ for any matrix $\mathbf M$.
Let $\Xi:=\{\bm\xi\in\mathbb R^{n}:\ |\xi_i|\le \lambda_i,\ i=1,\dots,n\}$. Then, a direct characterization of the generalized differential at $\bm d=\bm 0$ yields
\vspace{-0.2cm}
\[
\partial \phi_{\mathbf M}(\bm 0)
=
\left\{
\begin{bmatrix}
\bm\xi\\
-\mathbf M^\top\bm\xi
\end{bmatrix}
:\ \bm\xi\in\Xi
\right\}.
\]

Since $\ell_k$ is identical in $L_k$ and $\widetilde L_k$, we have $\partial \widetilde L_k(\bm 0)=\partial \ell_k(\bm 0)+\partial \phi_{\widetilde{\mathbf S}_k}(\bm 0)$, $\partial L_k(\bm 0)=\partial \ell_k(\bm 0)+\partial \phi_{\mathbf S_k}(\bm 0)$. Because $\bm 0\in\partial\widetilde L_k(\bm 0)$, there exist
$\bm s^\ast\in\partial\ell_k(\bm 0)$ and
$\bm\xi^\ast\in\Xi$ such that
\[
\bm s^\ast
+
\begin{bmatrix}
\bm\xi^\ast\\
-\widetilde{\mathbf S}_k^\top\bm\xi^\ast
\end{bmatrix}
=
\bm 0.
\]

The complete characterization of the generalized differential of
$L_k$ at $\bm 0$ is
\[
\partial L_k(\bm 0)
=
\left\{
\bm s+
\begin{bmatrix}
\bm\xi\\
-\mathbf S_k^\top\bm\xi
\end{bmatrix}
:
\bm s\in\partial\ell_k(\bm 0),\
\bm\xi\in\Xi
\right\}.
\]
Using the same pair $(\bm s^\ast,\bm\xi^\ast)$, we construct the
following element of $\partial L_k(\bm 0)$:
\[
\begin{aligned}
\bm v
=
\bm s^\ast+
\begin{bmatrix}
\bm\xi^\ast\\
-\mathbf S_k^\top\bm\xi^\ast
\end{bmatrix}=
\begin{bmatrix}
\bm 0\\
(\widetilde{\mathbf S}_k-\mathbf S_k)^\top\bm\xi^\ast
\end{bmatrix}.
\end{aligned}
\]
It follows that $\|\bm v\|
\le
\|\widetilde{\mathbf S}_k-\mathbf S_k\|\,
\|\bm\xi^\ast\|
\le
\|\bm\lambda\|\,
\|\widetilde{\mathbf S}_k-\mathbf S_k\|$, where the last inequality follows from
$\bm\xi^\ast\in\Xi$.

Finally, since $\bm e(\bm z)=\bm y-\bm g(\bm u)$ is continuously differentiable at $\bm z_k$ with Jacobian
$[\mathbf I,\,-\mathbf S_k]$, the Clarke chain rule implies that the generalized differential of the
equality penalty term in $J$ at $\bm z_k$ coincides with that of $\phi_{\mathbf S_k}$ at $\bm d=\bm 0$;
combined with the identical $\ell_k$ part, this yields $\partial L_k(\bm 0)=\partial J(\bm z_k)$.
Therefore, $\mathrm{dist}\big(\bm 0,\partial J(\bm z_k)\big)
=
\mathrm{dist}\big(\bm 0,\partial L_k(\bm 0)\big)
\le \|\bm v\|
\le \|\bm\lambda\|\,\|\widetilde{\mathbf S}_k-\mathbf S_k\|$, which proves \eqref{eq:approx_stationarity_bound}. The special case $\widetilde{\mathbf S}_k=\mathbf S_k$
gives $\bm 0\in\partial J(\bm z_k)$.
\end{proof}

Lemma~\ref{lem:pred_nonneg} shows that the decrease measured by the data-driven linearized model produced by
\eqref{eq:TR_subprob} is always nonnegative. If the algorithm terminates in finitely many iterations because
$\Delta \widetilde L_k(\bm z_k,\bm d_k^\star)=0$, then \eqref{eq:approx_stationarity_bound} implies that the termination point lies in an
$O(\|\widetilde{\mathbf S}_k-\mathbf S_k\|)$-neighborhood of $\mathcal T$. In the case where $\widetilde{\mathbf S}_k=\mathbf S_k$,
this residual vanishes and $\bm z_k\in\mathcal T$; with feasibility for \eqref{eq:Optimization}, Theorem~3.9 of~\cite{mao2018successive} implies that $\bm z_k$ is a KKT point of the original problem.

When the proposed algorithm generates an infinite sequence $\{\bm z_k\}$, the next lemma establishes a lower bound
on the ratio $\rho(\bm z_k,r_k)$ for nonstationary points, which will be used to depict the stationarity of every limit point of $\{\bm z_k\}$.

\begin{lem}\label{lem:weak_accept_est}
Let $\bar{\bm z}$ be feasible for \eqref{eq:penalty_cost} and not stationary. Then there exist
$\bar\epsilon>0$, $\bar r>0$, and $\kappa>0$ such that for any $\bm z_k\in N(\bar{\bm z},\bar\epsilon)$ and any $r_k\in(0,\bar r]$, if
$\bm d_k^\star$ minimizes $\widetilde L_k(\bm d)$ over $\{\bm d:\ \|\bm d\|\le r_k\}$, then
\[
\rho(\bm z_k,r_k)
:=\frac{\Delta J_k(\bm z_k,\bm d_k^\star)}{\Delta \widetilde L_k(\bm z_k,\bm d_k^\star)}
\;\ge
1-
\frac{C_Mr_k}
{\kappa/2-C_Mr_k}.
\]
\end{lem}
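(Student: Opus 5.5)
The argument has two parts. First, a uniform lower bound on the predicted decrease, $\Delta\widetilde L_k(\bm z_k,\bm d_k^\star)\ge \kappa r_k/2$, near the nonstationary point $\bar{\bm z}$. Second, this bound is combined with Lemma~\ref{lem:Ltilde_minus_L} to control the gap between predicted and actual decrease. The algebra of the second part is direct. Since
\[
\Delta J_k-\Delta\widetilde L_k=\widetilde L_k(\bm d_k^\star)-J(\bm z_k+\bm d_k^\star),
\]
and $\|(\bm d_k^\star)_u\|\le\|\bm d_k^\star\|\le r_k$, Lemma~\ref{lem:Ltilde_minus_L} gives $|\Delta J_k-\Delta\widetilde L_k|\le C_M r_k^2$. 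It follows that
\[
\rho\ \ge\ 1-\frac{C_M r_k^2}{\Delta\widetilde L_k}.
\]
If $\Delta\widetilde L_k\ge \kappa r_k/2 - C_M r_k^2$, then $\rho\ge 1-C_Mr_k/(\kappa/2-C_Mr_k)$, which is the stated bound. This requires $\bar r<\kappa/(2C_M)$ so that the denominator is positive, and $\bar r$ is shrunk accordingly. The $-C_Mr_k^2$ slack indicates how to obtain the decrease: prove it first for the exact model $L_k$, then transfer it to $\widetilde L_k$.

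For the decrease bound I would follow the argument of~\cite{mao2018successive}. Because $\bar{\bm z}$ is not stationary, $\bm 0\notin\partial J(\bar{\bm z})$. Since $\partial J(\bar{\bm z})$ is a compact convex set, there exist a unit direction $\bar{\bm p}$ and $\kappa>0$ such that the directional derivative satisfies $J'(\bar{\bm z};\bar{\bm p})=\max_{\bm v\in\partial J(\bar{\bm z})}\bm v^\top\bar{\bm p}\le -\kappa$. By Lemma~\ref{lem:pred_nonneg}'s identification, $\partial L_k(\bm 0)=\partial J(\bm z_k)$. The map $\bm z\mapsto\partial J(\bm z)$ is upper semicontinuous, because $c$, $\bm g$ and $\bm h$ are $C^1$ and the outer functions $|\cdot|$ and $[\cdot]_+$ are fixed convex functions. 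Hence there is $\bar\epsilon>0$ such that $L_k'(\bm 0;\bar{\bm p})\le-3\kappa/4$ for all $\bm z_k\in N(\bar{\bm z},\bar\epsilon)$, after renaming $\kappa$ if needed.

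The decrease along $\bar{\bm p}$ then needs to hold at a finite step. Since $L_k$ is convex, $L_k(t\bar{\bm p})\le L_k(\bm 0)+t\,L_k'(\bm 0;\bar{\bm p})+o(t)$. The $o(t)$ term must be uniform in $\bm z_k$, and I expect this to be the main obstacle. I would handle it by writing $L_k(t\bar{\bm p})-L_k(\bm 0)$ as a composition of fixed convex outer functions with $C^1$ inner maps that are jointly continuous in $(\bm z_k,t)$. Uniform continuity of the gradients of $c$, $\bm g$ and $\bm h$ on a compact neighborhood then gives a remainder at most $\kappa t/4$ for $t\le\bar r$, uniformly over $\bm z_k\in N(\bar{\bm z},\bar\epsilon)$, after shrinking $\bar\epsilon$ and $\bar r$ if necessary. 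This yields $L_k(r_k\bar{\bm p})\le J(\bm z_k)-\kappa r_k/2$.

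Finally, $r_k\bar{\bm p}$ is feasible for~\eqref{eq:TR_subprob} and $\bm d_k^\star$ minimizes $\widetilde L_k$ over the trust region. Applying~\eqref{eq:Ltilde_vs_L} once gives
\[
\widetilde L_k(\bm d_k^\star)\le\widetilde L_k(r_k\bar{\bm p})\le L_k(r_k\bar{\bm p})+C_{\widetilde L}r_k^2\le J(\bm z_k)-\kappa r_k/2+C_Mr_k^2,
\]
using $C_{\widetilde L}\le C_M$. This is exactly $\Delta\widetilde L_k\ge\kappa r_k/2-C_Mr_k^2$, which completes the argument.
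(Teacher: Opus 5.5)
Your skeleton matches the paper's proof. You take a separating unit direction from $\bm 0\notin\partial J(\bar{\bm z})$ and a trial step of length $r_k$ along it. Optimality of $\bm d_k^\star$ then gives $\Delta\widetilde L_k(\bm z_k,\bm d_k^\star)\ge \kappa r_k/2-C_Mr_k^2$, and Lemma~\ref{lem:Ltilde_minus_L} turns this into the ratio bound. Your explicit requirement $\bar r<\kappa/(2C_M)$ is one the paper leaves implicit.

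\textbf{The gap.} The problem is the step you flag as the main obstacle. You derive the finite-step decrease of $L_k$ from $L_k'(\bm 0;\bar{\bm p})\le-3\kappa/4$ together with a remainder bound $L_k(t\bar{\bm p})-L_k(\bm 0)-t\,L_k'(\bm 0;\bar{\bm p})\le\kappa t/4$ that you claim is uniform in $\bm z_k$. That uniform bound is false in general, for two reasons.
\begin{itemize}
\item Uniform continuity of $\nabla c,\nabla\bm g,\nabla\bm h$ only controls the linearization of the inner maps. It says nothing about the kinks of the outer function $[\cdot]_+$, across which $L_k'(\bm 0;\bar{\bm p})$ jumps as $\bm z_k$ varies.
\item For a convex function along a ray, the one-sided derivative at the base point only bounds the increment from below, never from above.
\end{itemize}
A scalar instance: take $J(z)=-2z+[z]_+$ (so $c(z)=-2z$, $h(z)=z$, $\eta=1$, no equality term), with $\bar z=0$, $\partial J(0)=[-2,-1]$, $\bar p=1$, $\kappa=1$. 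At $z_k=-\delta$ you have $L_k'(0;1)=-2$, but the remainder is $L_k(t)-L_k(0)+2t=[t-\delta]_+$. At $t=\bar r$ this exceeds $\kappa\bar r/4$ as soon as $\delta<3\bar r/4$, so no choice of $\bar\epsilon,\bar r$ works. The target inequality $L_k(t\bar p)\le J(z_k)-\kappa t/2$ does hold in this example, since $-2t+[t-\delta]_+\le -t$. Your argument just does not establish it.

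\textbf{The fix.} What is missing is control of subgradients away from the base point, not only at it. There are two ways to supply it.
\begin{itemize}
\item \emph{Within your route.} By convexity, $L_k(t\bar{\bm p})-L_k(\bm 0)\le t\,\bm v^\top\bar{\bm p}$ for every $\bm v\in\partial L_k(t\bar{\bm p})$. You then check that $\partial L_k(t\bar{\bm p})$ lies within $\kappa/2$ of $\partial J(\bar{\bm z})$ whenever $\bm z_k$ is near $\bar{\bm z}$ and $t$ is small. This uses continuity of $\nabla c$, $\nabla\bm h$, $\nabla\bm g$; the fact that $\bm\xi\in\Xi$; outer semicontinuity of $\partial[\cdot]_+$; and $\bm e(\bar{\bm z})=\bm 0$.
\item \emph{The paper's route}, which is simpler. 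It proves the decrease for the single fixed function $J$, via Clarke's generalized directional derivative $J^\circ(\bar{\bm z};\bm s)=\limsup_{\bm z\to\bar{\bm z},\,r\downarrow0}(J(\bm z+r\bm s)-J(\bm z))/r=\max_{\bm\varphi\in\partial J(\bar{\bm z})}\bm\varphi^\top\bm s\le-\kappa$. Because this is by definition a limsup over nearby base points, uniformity comes for free. The paper then passes to $\widetilde L_k$ with \eqref{eq:J_vs_Ltilde} rather than \eqref{eq:Ltilde_vs_L}, which yields the $-C_Mr_k^2$ slack directly.
\end{itemize}
With either repair, the rest of your argument goes through unchanged.
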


The detailed proof of this lemma generally follows arguments in Lemma 3.11 of~\cite{mao2018successive} and is provided in Appendix~\ref{app:lem_weak_accept_est} for completeness. 
Lemma~\ref{lem:weak_accept_est} implies that, for sufficiently small trust-region, solving the convex linearized subproblem~\eqref{eq:TR_subprob} yields steps whose performance closely reflects the behavior of the original objective, thereby justifying the use of data-driven successive linearization. This result underpins the proof of
the next lemma,
% the Theorem~\ref{thmmain}, 
which establishes convergence to a neighborhood of the stationary point when the proposed algorithm generates an infinite sequence $\{\bm z_k\}$.

\begin{lem}\label{lem:limit points}
Let Assumptions~\ref{assump:fixed_exogenous}--\ref{assump:cost} hold,
then the data-driven successive linearization algorithm always has limit points.
Let $\bar{\bm z}$ be any limit point and $\{\bm z_{k_i}\}$ be a subsequence converging to it, then $\bar{\bm z}$ satisfies the stationarity bound for the penalty problem~\eqref{eq:penalty_cost}: 
\[\mathrm{dist}\big(\bm 0,\partial J(\bar{\bm z})\big)
\ \le\ \|\bm\lambda\|\,\liminf_{i\to\infty}\|\widetilde{\mathbf S}_{k_i}-\mathbf S_{k_i}\|.\] 

\end{lem}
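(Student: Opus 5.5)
Existence of limit points comes from compactness. Every iterate is produced by the subproblem \eqref{eq:Optimization-linData}, so $\bm u_{k}\in\mathcal U$ and the predicted output lies in $\mathcal Y$. The measured $\bm y_{k}=\bm g(\bm u_{k})$ is a continuous image of the compact set $\mathcal U$, so it is bounded. Hence $\{\bm z_k\}$ stays in a compact set, and Bolzano--Weierstrass gives a convergent subsequence $\bm z_{k_i}\to\bar{\bm z}$. If the algorithm stops after finitely many steps because $\Delta\widetilde L_k(\bm z_k,\bm d_k^\star)=0$, the claim is exactly \eqref{eq:approx_stationarity_bound} from Lemma~\ref{lem:pred_nonneg}. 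So from now on I assume the sequence is infinite with $\Delta\widetilde L_k>0$.

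I would argue by contradiction. Set $\delta:=\|\bm\lambda\|\liminf_i\|\widetilde{\mathbf S}_{k_i}-\mathbf S_{k_i}\|$ and suppose $\mathrm{dist}(\bm 0,\partial J(\bar{\bm z}))>\delta$. In particular $\bar{\bm z}$ is not stationary, so Lemma~\ref{lem:weak_accept_est} applies. It gives $\bar\epsilon,\bar r,\kappa$ such that, for $\bm z_k\in N(\bar{\bm z},\bar\epsilon)$ and $r_k\le\bar r$,
\[
\rho(\bm z_k,r_k)\ge 1-\frac{C_Mr_k}{\kappa/2-C_Mr_k}.
\]
By shrinking $\bar r$ I can make this ratio at least $1/2$. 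Inside the proof of that lemma, $\kappa$ should come from a uniform lower bound on the predicted decrease: $\Delta\widetilde L_k(\bm z_k,\bm d_k^\star)\ge \tfrac{\kappa}{2}r_k$ near a nonstationary point. I would rederive this with the perturbation term made explicit. The convexity of $\widetilde L_k$, together with the subgradient comparison used in the proof of Lemma~\ref{lem:pred_nonneg}, shows that the steepest-descent rate of $\widetilde L_k$ at $\bm 0$ differs from that of $J$ at $\bm z_k$ by at most $\|\bm\lambda\|\|\widetilde{\mathbf S}_k-\mathbf S_k\|$. Along the subsequence, outer semicontinuity of $\partial J$ gives a descent rate of $J$ near $\bar{\bm z}$ of at least $\mathrm{dist}(\bm 0,\partial J(\bar{\bm z}))-o(1)$. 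Since this exceeds $\delta$, the gap leaves a strictly positive $\kappa$ for infinitely many $k_i$.

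Next comes the sufficient-decrease step. For those indices, $J(\bm z_{k_i})-J(\bm z_{k_i+1})\ge \tfrac12\cdot\tfrac{\kappa}{2}r_{k_i}$. For all other $k$, I would show that $J(\bm z_{k+1})\le J(\bm z_k)+C_Mr_k^2$, using Lemma~\ref{lem:pred_nonneg} and Lemma~\ref{lem:Ltilde_minus_L}. Because $J$ is bounded below on the compact set, summing these inequalities forces $\sum_i r_{k_i}<\infty$ as long as $\sum_k r_k^2<\infty$ (or some similar summability of the radius schedule) holds. That contradicts infinitely many decreases of size $\gtrsim\kappa r_{k_i}$ with $r_{k_i}$ bounded away from zero. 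If $r_k\to0$, I would instead use the finer statement: the accumulated decrease $\tfrac{\kappa}{4}\sum r_{k_i}$ against the increase $C_M\sum r_k^2$ yields a contradiction provided $\sum r_k=\infty$ and $\sum r_k^2<\infty$.

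The main obstacle is this last step. The paper leaves the trust-region radius schedule unspecified, and no acceptance or rejection rule is stated. I would therefore first try to lean on the monotone-decrease mechanism of Lemma 3.11 and Theorem 3.12 in \cite{mao2018successive}, with the radius implicitly updated in the standard trust-region way. Under that scheme, $J(\bm z_k)$ is nonincreasing on accepted steps, and radii near a nonstationary point never shrink below $\bar r$ times the contraction factor. That yields the contradiction directly. The perturbation $\|\bm\lambda\|\|\widetilde{\mathbf S}_k-\mathbf S_k\|$ enters only through $\kappa$, which is why the final bound uses the $\liminf$ along the subsequence.
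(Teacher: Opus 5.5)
\textbf{There is a genuine gap, mainly in the final step.} Your skeleton matches the paper's: contradiction, passing to a subsequence on which $\|\widetilde{\mathbf S}_{k_i}-\mathbf S_{k_i}\|$ attains its $\liminf$, the ratio bound from Lemma~\ref{lem:weak_accept_est}, a predicted-decrease bound in which the mismatch enters additively, then telescoping. But the argument does not close.

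Your summability route does not work. The condition $\sum_k r_k^2<\infty$ already forces $r_k\to 0$, so it cannot coexist with $r_{k_i}$ bounded away from zero. If instead $r_k\to 0$, then $\sum_k r_k=\infty$ says nothing about $\sum_i r_{k_i}$ along a possibly sparse subsequence. The accumulated decrease $\tfrac{\kappa}{4}\sum_i r_{k_i}$ may then be finite, and no contradiction arises.

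The paper closes this step differently. It uses a radius bounded below, $0<\underline r\le r_k\le\bar r$, and a fixed trial step $\bm d'_{k_i}=\widehat r\,\bm s$ with $\widehat r\le\min\{\underline r,\epsilon_1'\}$, which is feasible for every subproblem on the subsequence. This gives a \emph{constant} lower bound $\theta\ge\tfrac{\delta}{4}\widehat r$ on the predicted decrease, where the paper's $\delta$ is $\mathrm{dist}(\bm 0,\partial J(\bar{\bm z}))$ minus your $\delta$. It then gets a constant actual decrease $\rho_0\theta/2$ in \eqref{eq:actual-dec}. Monotonicity of $J(\bm z_k)$ together with $J(\bm z_{k_i})\to J(\bar{\bm z})$ forces these decreases to vanish.

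Your fallback to the acceptance rule of \cite{mao2018successive} supplies the same two ingredients: radii bounded below near $\bar{\bm z}$ and monotone $J$. However, it changes Algorithm~\ref{alg:ddsl}, which has no acceptance test or radius update. The paper simply posits the radius bounds and asserts $J(\bm z_{k+1})\le J(\bm z_k)$ for all $k$. Note that \eqref{eq:actual-dec} only covers subsequence indices, so this is effectively an extra assumption in the paper as well. Either way, these assumptions have to be stated; the sketch you give does not produce them.

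\textbf{A second, repairable weakness is how you obtain $\Delta\widetilde L_{k}(\bm z_k,\bm d_k^\star)\ge\tfrac{\kappa}{2}r_k$.} Comparing steepest-descent rates at $\bm d=\bm 0$ is a pointwise first-order statement. For a convex model it points the wrong way: $\widetilde L_k(r\bm s)\ge\widetilde L_k(\bm 0)+r\,\widetilde L_k'(\bm 0;\bm s)$, so the rate at $\bm 0$ bounds the achievable decrease over the trust region from above, not from below. For example, kinks of $[\bm h]_+$ with $h_j(\bm z_k)<0=h_j(\bar{\bm z})$ can switch on arbitrarily close to $\bm 0$.

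The paper gets a uniform finite-step decrease in three moves:
\begin{itemize}
\item It fixes a single direction $\bm s=-\bm s'/\|\bm s'\|$, where $\bm s'$ is the projection of $\bm 0$ onto $\partial J(\bar{\bm z})$, rather than the steepest direction at each $\bm z_k$.
\item It uses outer semicontinuity of $\partial J$ on a whole neighborhood together with Lebourg's mean value theorem to obtain \eqref{eq:dir_drop_main}.
\item Only then does it transfer this decrease to $\widetilde L_{k_i}$, via $|J(\bm z_{k_i}+\bm d)-L_{k_i}(\bm d)|\le\tfrac{\delta}{4}\|\bm d\|$ (Lemma~\ref{lem:taylor} for small $\|\bm d\|$) and $|L_{k_i}(\bm d)-\widetilde L_{k_i}(\bm d)|\le\|\bm\lambda\|\,\|\widetilde{\mathbf S}_{k_i}-\mathbf S_{k_i}\|\,\|\bm d\|$.
\end{itemize}
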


The proof is established by contradiction, showing that any limit point cannot lie outside the corresponding neighborhood of the stationary set of the penalized problem. The detailed proof is given in Appendix~\ref{app:lem_limit points}.

 The above results establish convergence to a neighborhood of the KKT point.  In particular,
 when the Jacobian estimation error vanishes, $\bar{\bm z}$ is a stationary point of~\eqref{eq:penalty_cost}, i.e.,
$\|\widetilde{\mathbf S}_{k_i}-\mathbf S_{k_i}\|\to 0$ implies
$\bar{\bm z}\in\mathcal T$. If $\bar{\bm z}$ is feasible for the original problem~\eqref{eq:Optimization}, it means the iterates converge to an exact KKT point of~\eqref{eq:Optimization}. This concludes the proof for Theorem~\ref{thmmain}. To align with the convergence analysis, the trust region should be sufficiently small while being selected jointly with the sensitivity estimation window to provide sufficient excitation. In practice, the trust region radius can be adjusted based on the magnitude of recent control increments and the magnitude and variation of the estimated sensitivity matrix, so as to balance sufficient excitation for Jacobian estimation with local linearization accuracy.
In addition, to make $\|\widetilde{\mathbf S}_{k_i}-\mathbf S_{k_i}\|$ sufficiently close to zero, whenever the algorithm stalls (e.g., the update $\|\bm z_{k+1}-\bm z_k\|$ falls below a prescribed tolerance), one could choose to inject small zero-mean random perturbations into the control input and then recompute $\widetilde{\mathbf S}_k$. The magnitude of the injected perturbations should be chosen to provide sufficient excitation for Jacobian estimation without excessively degrading voltage regulation performance. More detailed analysis regarding the impact of measurement noise as well as the injected action perturbation, are presented in the Appendix~\ref{app:measurement_noise} and~\ref{app:perturbation_excitation}.

%% file: simulation_journal.tex
To validate the performance of the proposed data-driven successive linearization method for voltage control, we conduct experiments on the IEEE 33-bus system \cite{baran1989network} with  both time-invariant and time-varying net load. All electrical quantities are expressed in per unit (p.u.) on a 100 MVA, 12.66 kV base. 
Each bus has controllable reactive power within $\pm0.05$ p.u., with a voltage reference of $1.00$~p.u. 
At each step we solve the DistFlow~\cite{chiang2002existence} to obtain voltage magnitudes resulting from the current active and reactive power injections. The cost value at each time $t$ is the quadratic voltage-tracking error plus a penalty on the control effort:
$
c (\mathbf{y}_t, \bm{u}_t)  = 0.5\bigl\|\mathbf{y}_t-\mathbf{1}\bigr\|^2+0.1\bigl\|\bm{u}_t\bigr\|^2$,  where $\mathbf{1}$ denotes the all-one vector. To evaluate voltage regulation performance across the entire network, the voltage tracking error in this section includes all buses, regardless of whether their voltages are observed by the controller. We compare the proposed data–driven successive linearization (DDSL) with the following voltage control methods:
\newcommand{\StaticControllerBlock}[2]{%
  \begin{minipage}[t]{0.49\linewidth}
    \centering
    \begin{minipage}[t]{0.49\linewidth}
      \centering
      \includegraphics[width=\linewidth]{figures_static/#1_voltages_static_9nodes.pdf}
    \end{minipage}\hfill
    \begin{minipage}[t]{0.49\linewidth}
      \centering
      \includegraphics[width=\linewidth]{figures_static/#1_actions_static_9nodes.pdf}
    \end{minipage}\\[-0.45em]
    \parbox[t]{0.98\linewidth}{\centering\scriptsize #2: voltage (left) and action (right).}
  \end{minipage}%
}

\begin{figure*}[t]
  \centering

  \StaticControllerBlock{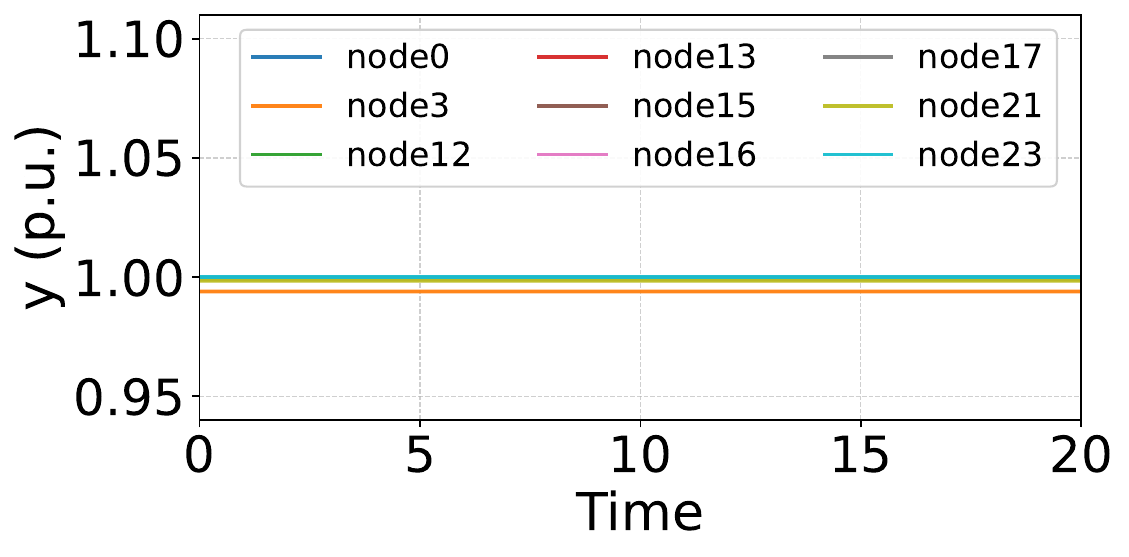}{(a) Convex Relaxation}\hfill
  \StaticControllerBlock{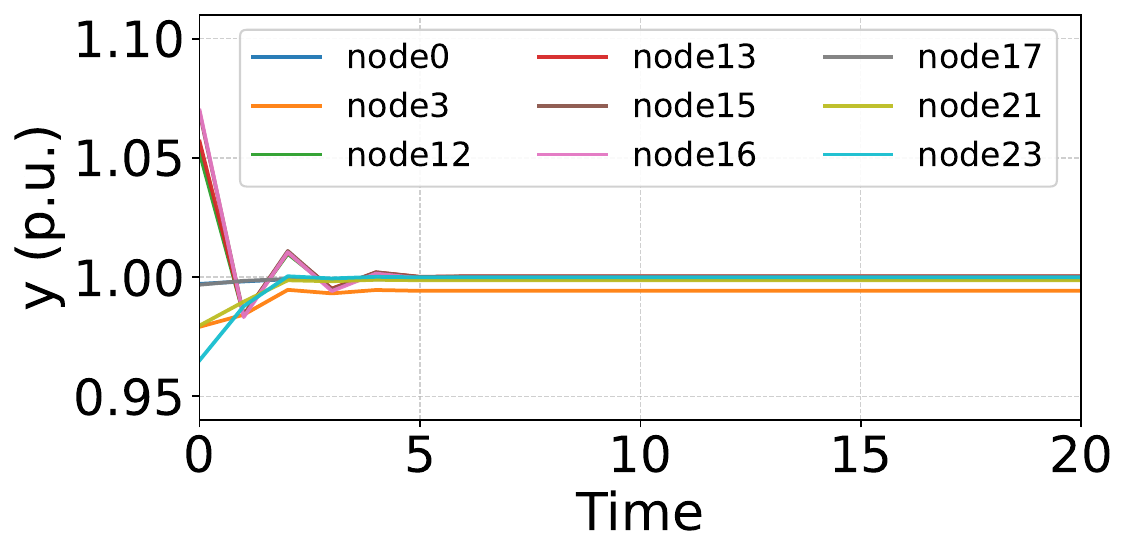}{(b) Data-driven Successive Linearization}\\[0.75em]

  \StaticControllerBlock{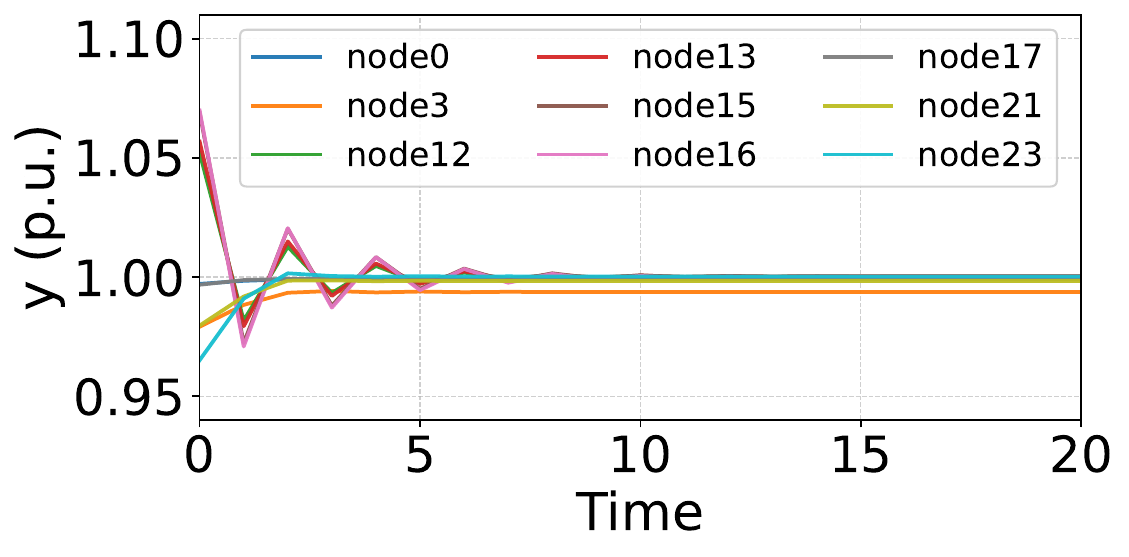}{(c) Fixed Linearization}\hfill
  \StaticControllerBlock{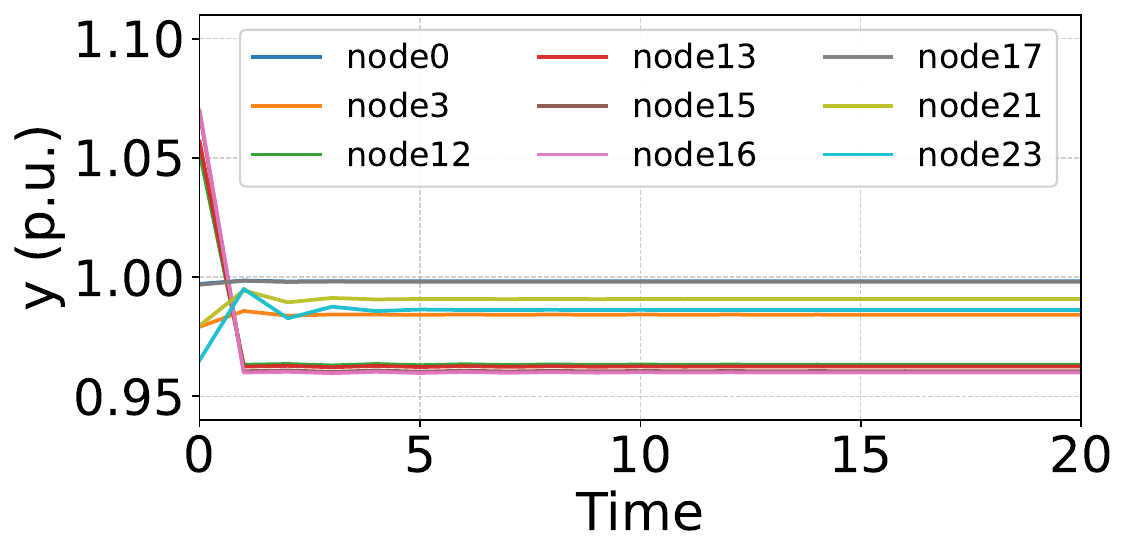}{(d) Linear Control}\\[0.75em]

  \StaticControllerBlock{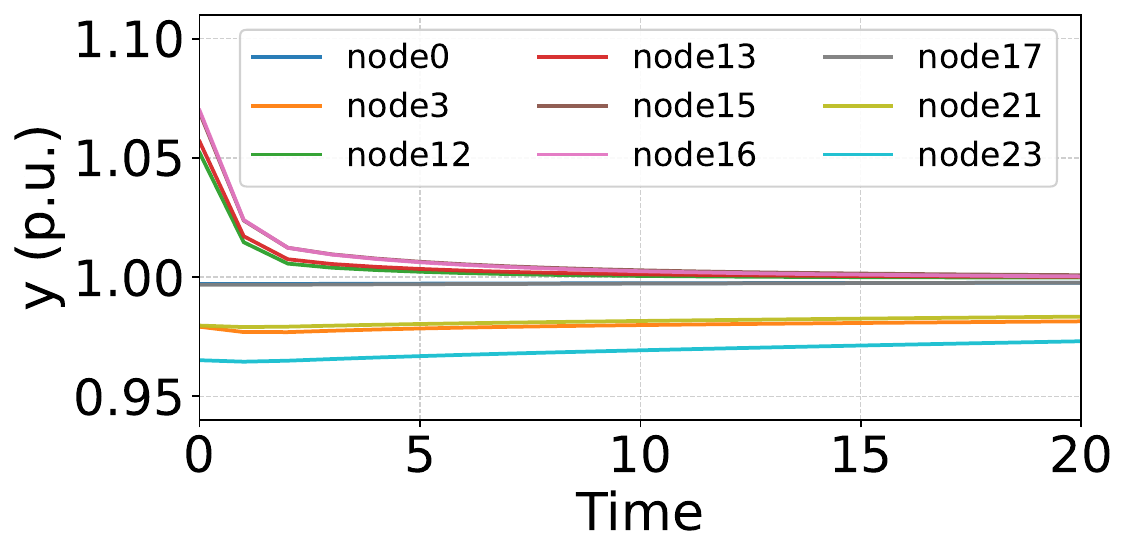}{(e) Gradient Descent-based Feedback Optimization}\hfill
  \StaticControllerBlock{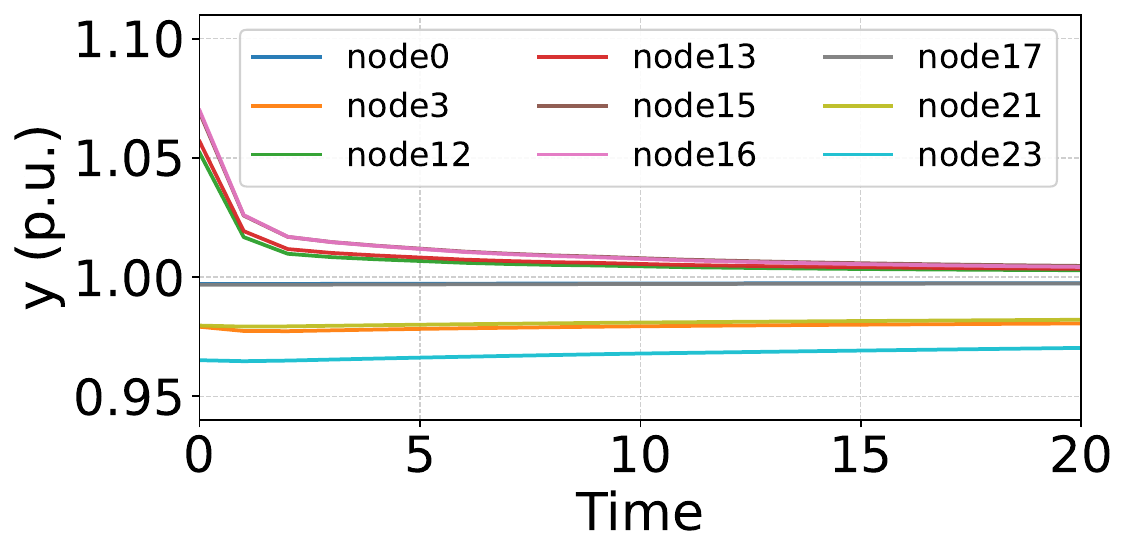}{(f) Primal-dual Feedback Optimization}\\[0.75em]

  \StaticControllerBlock{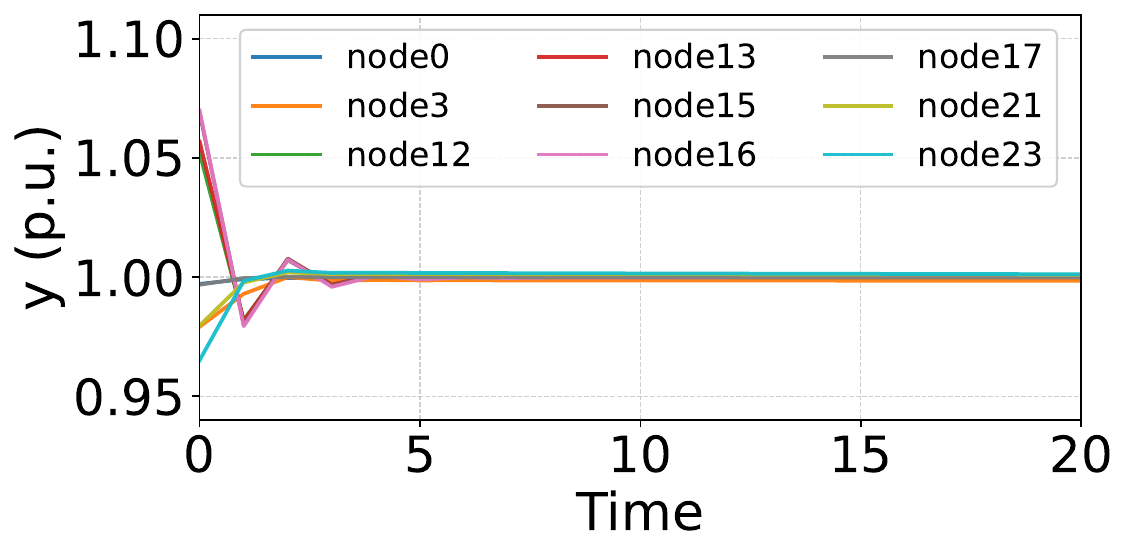}{(g) Projected Gradient Flow}\hfill
  \StaticControllerBlock{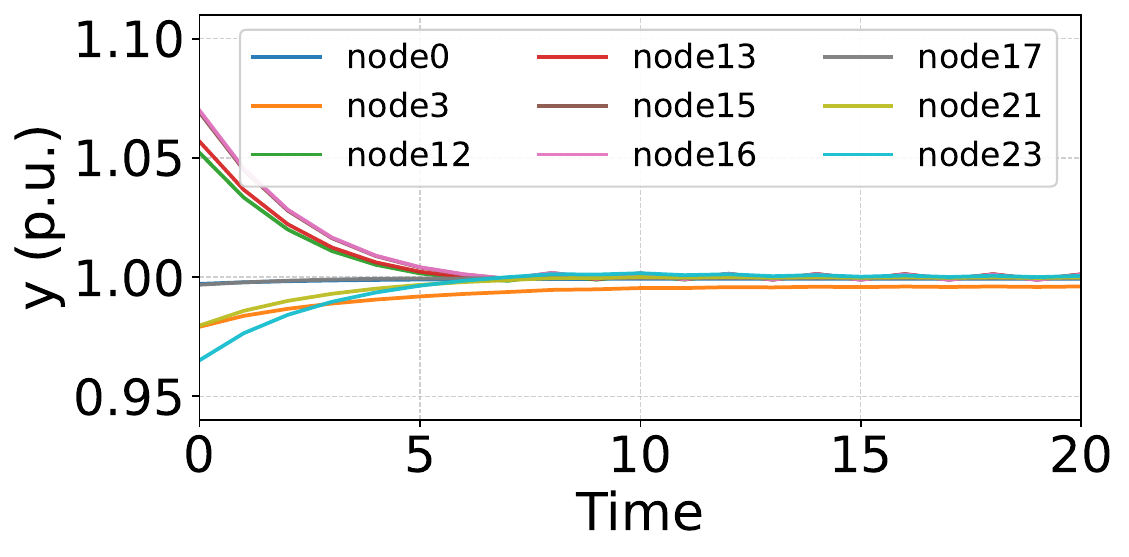}{(h) Feedback-based Safe Gradient Flow}

  \caption{
  Voltage and reactive power trajectories at 9 selected nodes in the time-invariant load setting.
  }
  \vspace{-0.4cm}
  \label{fig:static_traj_all_methods}
\end{figure*}
\begin{itemize}
  \item \textbf{Convex Relaxation (CR)~\cite{farivar2013branch}.} 
  Optimal power flow-based voltage control, which solves formulation in~\eqref{eq:Optimization} using the DistFlow model. Quadratic equality constraints corresponding to line losses are relaxed to be convex inequalities, resulting in a second-order cone program. Details can be found in~\cite{farivar2013branch}.

    \item \textbf{Fixed Linearization (FL).}
  This method uses the same linearized optimization framework as our proposed DDSL but keeps
  the voltage sensitivity matrix fixed throughout the
  control process.

  \item \textbf{Linear Controller (LC)~\cite{photovoltaics2018ieee}.} Linear control law following the IEEE 1547‑2018 standard, expressed as $u_{(i)}= \varphi_i(y_{(i)}-1)$ with $\varphi_i$ being the linear control gain optimized through a learning-based framework in~\cite{cui_leveraging_2025}. This is a fully decentralized controller, and it
coincides with the optimum in special cases~\cite{farivar2013equilibrium}.

\item \textbf{Gradient Descent-Based Feedback Optimization
  (GDFO)~\cite{dominguez2023online}.} Update the control input via projected gradient descent $\bm{u}_{k+1}=\text{Proj}_{\mathcal{U}}(\bm{u}_{k}-\eta \nabla_{ \bm{u}_k}c(\bm{u}_k))$ with $\eta$ being the learning rate~\cite{dominguez2023online}. The gradient of the cost function is computed through the chain rule as $\nabla_{ \bm{u}_k}c(\bm{u}_k)=
\nabla_{ \bm{u}_k}\bm{g}(\bm{u}_k)\nabla_{ \bm{y}_k} c_v(\bm{y}_k)+\nabla_{ \bm{u}_k}c_u(\bm{u}_k)$, where the Jacobian matrix $\nabla_{ \bm{u}_k}\bm{g}(\bm{u}_k)$ is calculated from the online input-output data through~\eqref{eqsenloss}.  

\item \textbf{Primal Dual Feedback Optimization
(PDFO)~\cite{ortmann2020experimental}.}
This method updates the dual variables using measured voltage constraint
residuals as feedback and then determines the control input by minimizing
the corresponding Lagrangian over the control feasible set.

\item \textbf{Projected Gradient Flow
(PGF)~\cite{HAUSWIRTH2024100941}.}
At each iteration, this method projects the negative feedback gradient onto
the feasible set defined by the input constraints and linearized output
constraints. Specifically, it computes
$\bm{\theta}_k=\arg\min_{\bm{\theta}}
\|\bm{\theta}+\eta\nabla_{\bm{u}_k}c(\bm{u}_k)\|^2$
subject to $\bm{u}_k+\bm{\theta}\in\mathcal U$ and
$\bm y_k+\widetilde{\mathbf S}_k\bm\theta\in\mathcal Y$,
and then updates $\bm{u}_{k+1}=\bm{u}_k+\bm{\theta}_k$. The sensitivity matrix $\widetilde{\mathbf S}_k$ is estimated in the same way as DDSL through~\eqref{eqsenloss}.

\item \textbf{Feedback-Based Safe Gradient Flow
(FSGF)~\cite{10777921}.}
This method computes a direction close to the negative feedback gradient while
enforcing control barrier function constraints. Specifically, it solves
$\bm{\theta}_k=\arg\min_{\bm{\theta}}
\|\bm{\theta}+\nabla_{\bm u_k}c(\bm u_k)\|^2$ subject to
$-\widetilde{\mathbf S}_{k,i}\bm{\theta}
\leq-\zeta(\underline y_i-y_{(i),k})$,
$\widetilde{\mathbf S}_{k,i}\bm{\theta}
\leq-\zeta(y_{(i),k}-\overline y_i)$, and the corresponding input safety
constraints, where $\zeta>0$ is the safety parameter, $\underline y_i$ and $\overline y_i$ are the predefined voltage bounds, and
$\widetilde{\mathbf S}_k$ is the Jacobian estimated from online input-output
measurements through~\eqref{eqsenloss}. The control input is then updated as
$\bm u_{k+1}=\bm u_k+\eta\bm{\theta}_k$.

\end{itemize}
In the above methods, the convex relaxation requires exact parameters of the DistFlow model, whereas the others are model-free. We first conduct experiments on a system with time-invariant loads to demonstrate the convergence of the proposed method and the optimality of the cost at convergence. Measurement noise is then added to the voltage measurements to evaluate the robustness of data-driven successive linearization to measurement errors. Subsequently, experiments with time-varying loads are conducted to demonstrate the ability of the proposed method to rapidly adapt to changes in net load. The proposed method also accommodates scenarios in which only a subset of buses is measured and controllable. To evaluate its performance under partial controllability, we use the same time-varying setting and vary the fraction of controllable buses. Actual costs and voltages are reported. The code for this work is publicly available at \url{https://github.com/nyudyw/Data-Driven-Successive-Linearization}.

\subsection{Time-Invariant Load}

Under a fixed load, Fig.~\ref{fig:static_traj_all_methods} demonstrates the voltage and reactive power for 9 selected nodes under all eight methods. The solution of convex relaxation is shown in Fig.~\ref{fig:static_traj_all_methods}(a), which recovers voltages close to 1 p.u.  It is noteworthy that this solution is feasible for the original nonconvex voltage control problem, indicating that this solution is the global optimum of the original nonconvex problem~\eqref{eq:Optimization}, and thus it serves as the model-based benchmark. The proposed data-driven successive linearization in Fig.~\ref{fig:static_traj_all_methods}(b) converges within five control updates, where the voltage and actions at convergence  closely match those produced by convex relaxation. This suggests that the proposed approach reaches a solution very close to the global optimum of the original nonconvex voltage control problem. The fixed linearization method in Fig.~\ref{fig:static_traj_all_methods}(c) attains a similar final state but converges more slowly and exhibits more pronounced oscillations during the initial convergence phase. Among the remaining methods, feedback-based safe gradient flow in Fig.~\ref{fig:static_traj_all_methods}(h) settles closest to these three, while all the other feedback optimization-based methods and linear control exhibit slower convergence and actions at convergence are far from the global optimum.

\begin{figure}[t]
  \centering
\includegraphics[width=0.98\linewidth]{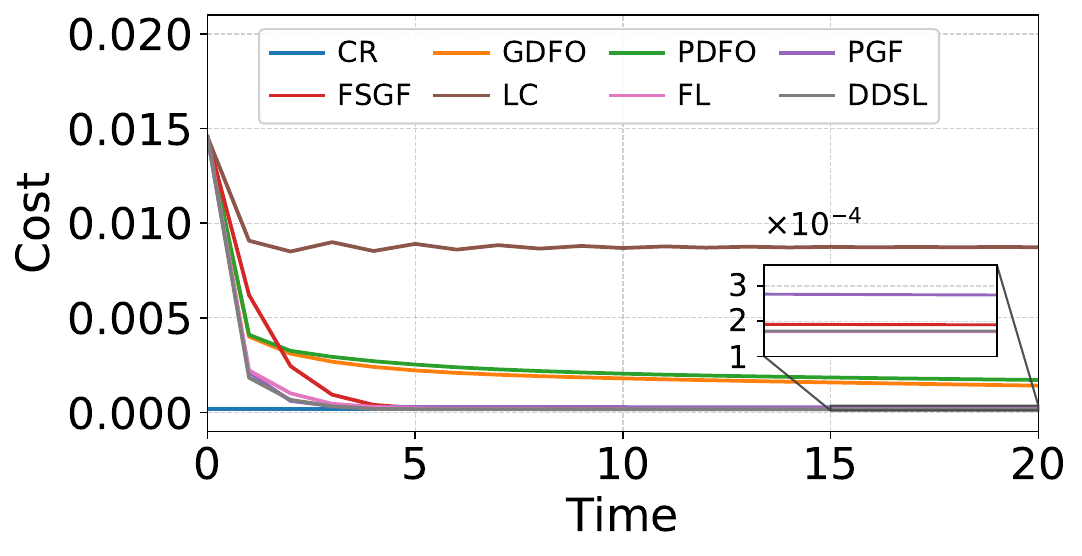}
  \vspace{-0.4cm}
  \caption{Cost trajectories corresponding to Fig.~\ref{fig:static_traj_all_methods}.}
  \vspace{-0.4cm}
\label{fig:coststatic_all_methods}
\end{figure}

Fig.~\ref{fig:coststatic_all_methods} compares the cost functions of all eight methods across iterations. The proposed data-driven successive linearization rapidly drives the cost to the convex relaxation level within a few iterations, indicating fast convergence. The cost achieved by data-driven successive linearization (gray) converges to the global optimum obtained through the convex relaxation (blue). At convergence, the cost achieved by data-driven successive linearization is 88.38\%, 90.25\%, 37.51\%, 9.62\%, and 98.03\% lower than those of gradient descent-based feedback optimization, primal-dual feedback optimization, projected gradient flow, feedback-based safe gradient flow, and linear control, respectively.

To evaluate performance under different load levels, we further collect voltage and control action trajectories for 100 cases with randomly perturbed loads lasting 20 steps. Fig.~\ref{fig:tail5_all_methods} summarizes the mean costs over the last five steps of the trajectories under different methods. For visualization purposes, all costs are normalized using the same basis and displayed on a logarithmic y-axis. The proposed data-driven successive linearization achieves the lowest mean cost across the 100 cases, which is 83.76\%, 23.03\%, 91.31\%, 93.00\%, 49.16\%, 17.86\%, and 99.40\% lower than those achieved by convex relaxation, fixed linearization, gradient descent-based feedback optimization, primal-dual feedback optimization, projected gradient flow, feedback-based safe gradient flow, and linear control, respectively.  Interestingly, the proposed approach can even achieve lower cost than convex relaxation. This occurs because the relaxation may not be exact when the system is lightly loaded, so the solution from convex relaxation can incur higher cost when applied to the system using the original DistFlow model. Consequently, the proposed approach consistently outperforms feedback optimization methods and linear control, while performing comparable or even better than the model-based voltage control approach.

\begin{figure}[t]
  \centering
  \includegraphics[width=0.95\linewidth]{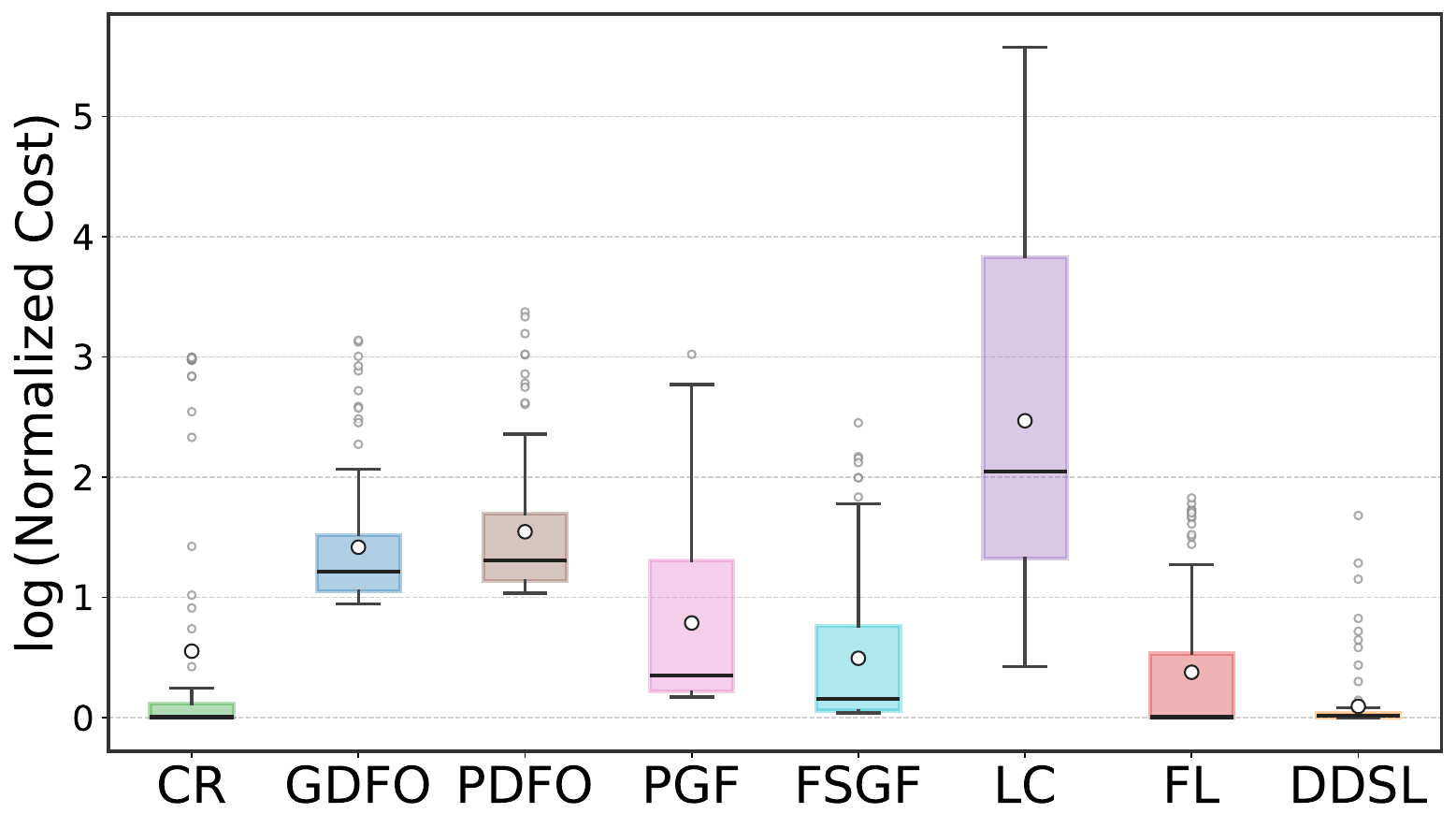}
  \vspace{-0.2cm}
  \caption{Comparison of log-normalized mean cost of the last 5 steps in the time-invariant load setting over 100 trials.}
  \vspace{-0.4cm}
  \label{fig:tail5_all_methods}
\end{figure}

The robustness of data-driven successive linearization to measurement noise is further evaluated in the time-invariant load setting by adding independent zero-mean Gaussian noise to the voltage measurements, with standard deviations ranging from \(0\) to \(0.008\) p.u. at all nodes. For each noise level, the results are averaged over 10 independent noise realizations, with the same realization applied to all methods for a fair comparison. The cost of each realization is evaluated using the true system voltages and averaged over the trajectory. 

\begin{figure}[t]
  \centering
  \includegraphics[width=0.95\linewidth]{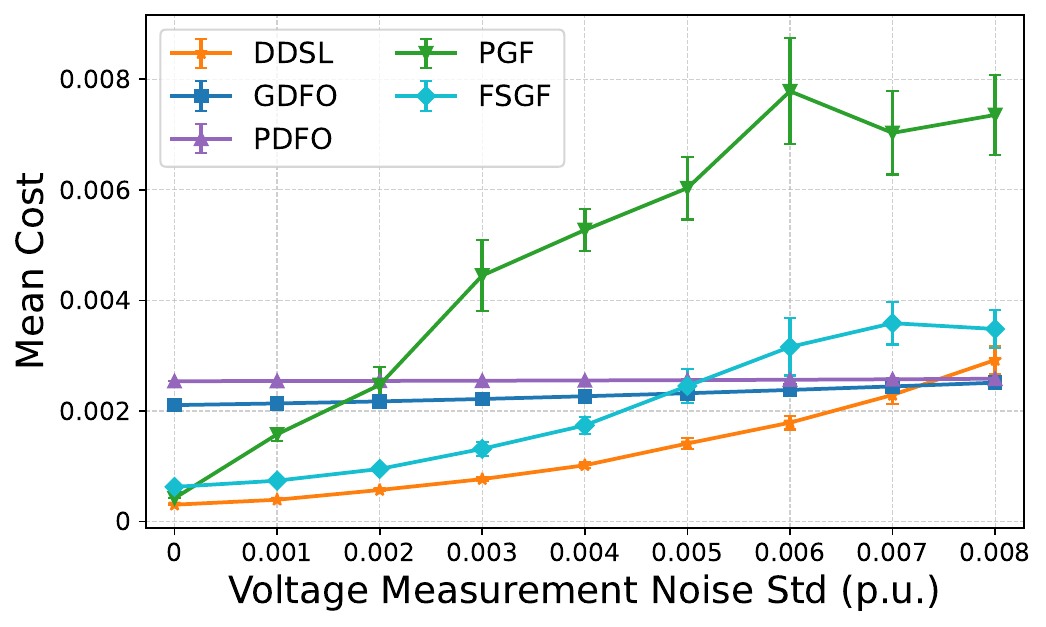}
  \vspace{-0.3cm}
  \caption{Comparison of mean costs under different levels of voltage measurement noise in the time-invariant load setting. Markers and error bars represent the mean and standard error over 10 independent noise realizations, respectively.}
  \vspace{-0.4cm}
  \label{fig:measurement_noise_cost}
\end{figure}

\newcommand{\VaryingControllerBlock}[2]{%
  \begin{minipage}[t]{0.49\linewidth}
    \centering
    \begin{minipage}[t]{0.49\linewidth}
      \centering
      \includegraphics[width=\linewidth]
      {figures_varying/#1_voltages_Varying_8nodes.pdf}
    \end{minipage}\hfill
    \begin{minipage}[t]{0.49\linewidth}
      \centering
      \includegraphics[width=\linewidth]
      {figures_varying/#1_actions_Varying_8nodes.pdf}
    \end{minipage}\\[-0.45em]
    \parbox[t]{0.98\linewidth}{
      \centering\scriptsize #2: voltage (left) and action (right).
    }
  \end{minipage}%
}

\begin{figure*}[t]
  \centering

  \VaryingControllerBlock{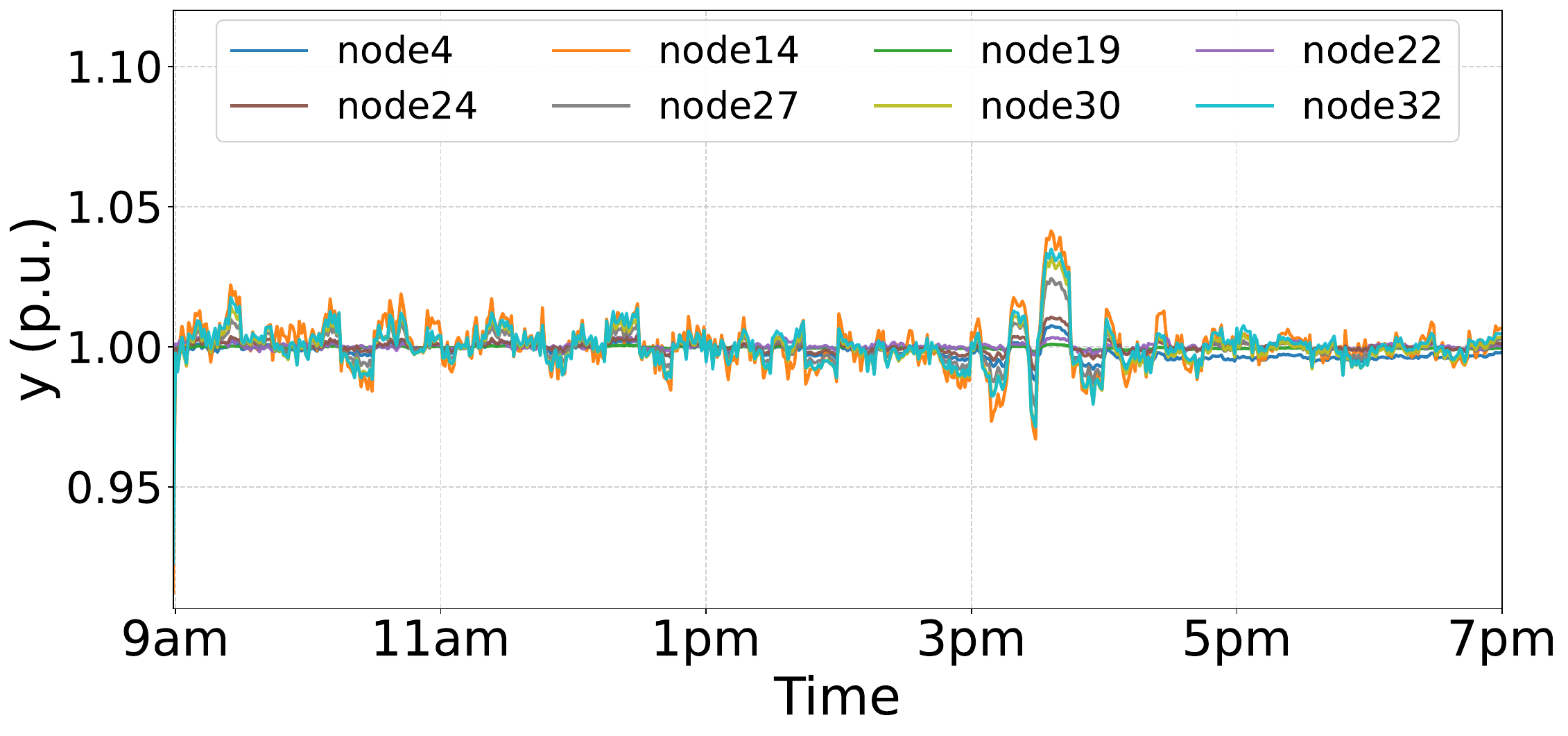}{(a) Convex Relaxation}\hfill
  \VaryingControllerBlock{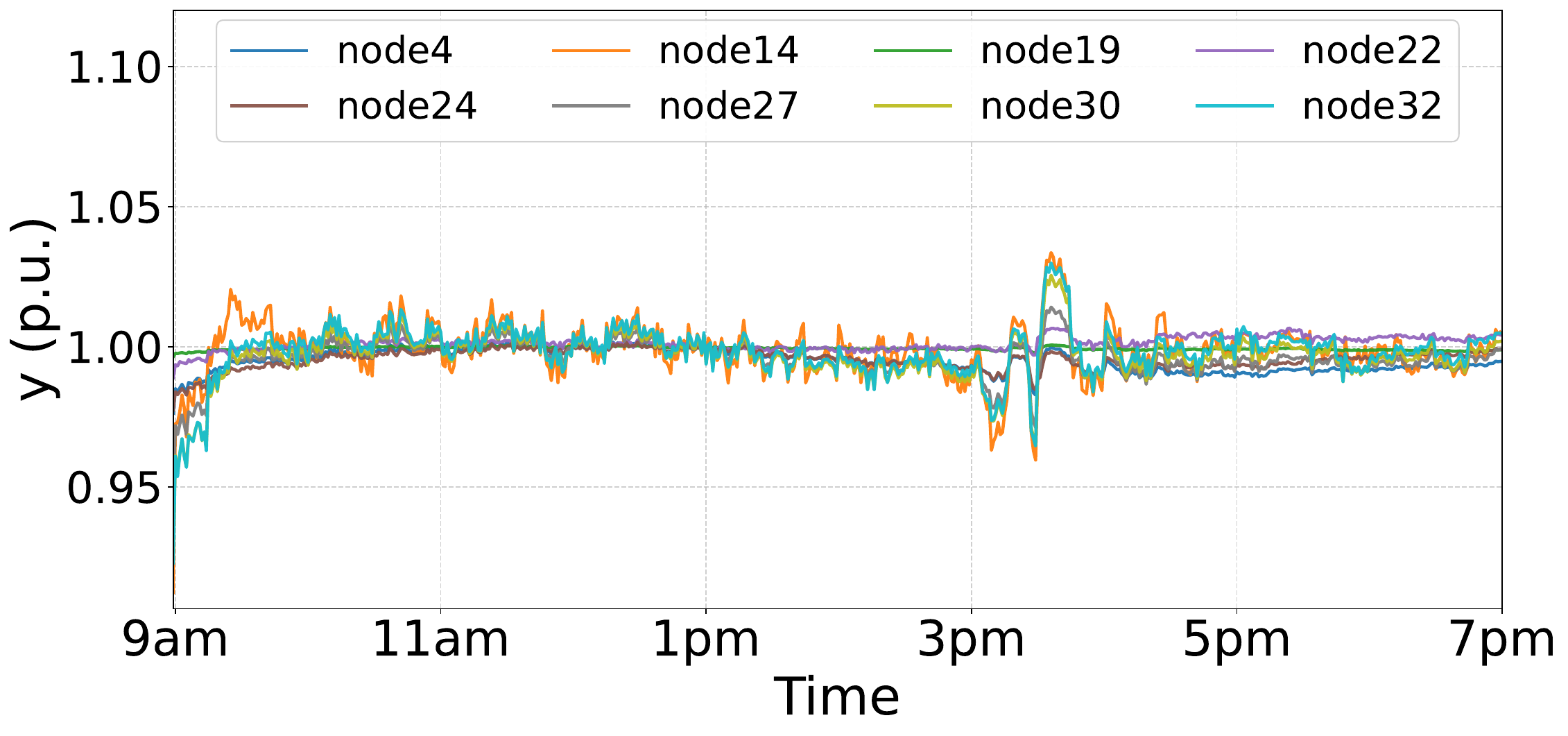}{(b) Data-driven Successive Linearization}\\[0.75em]

  \VaryingControllerBlock{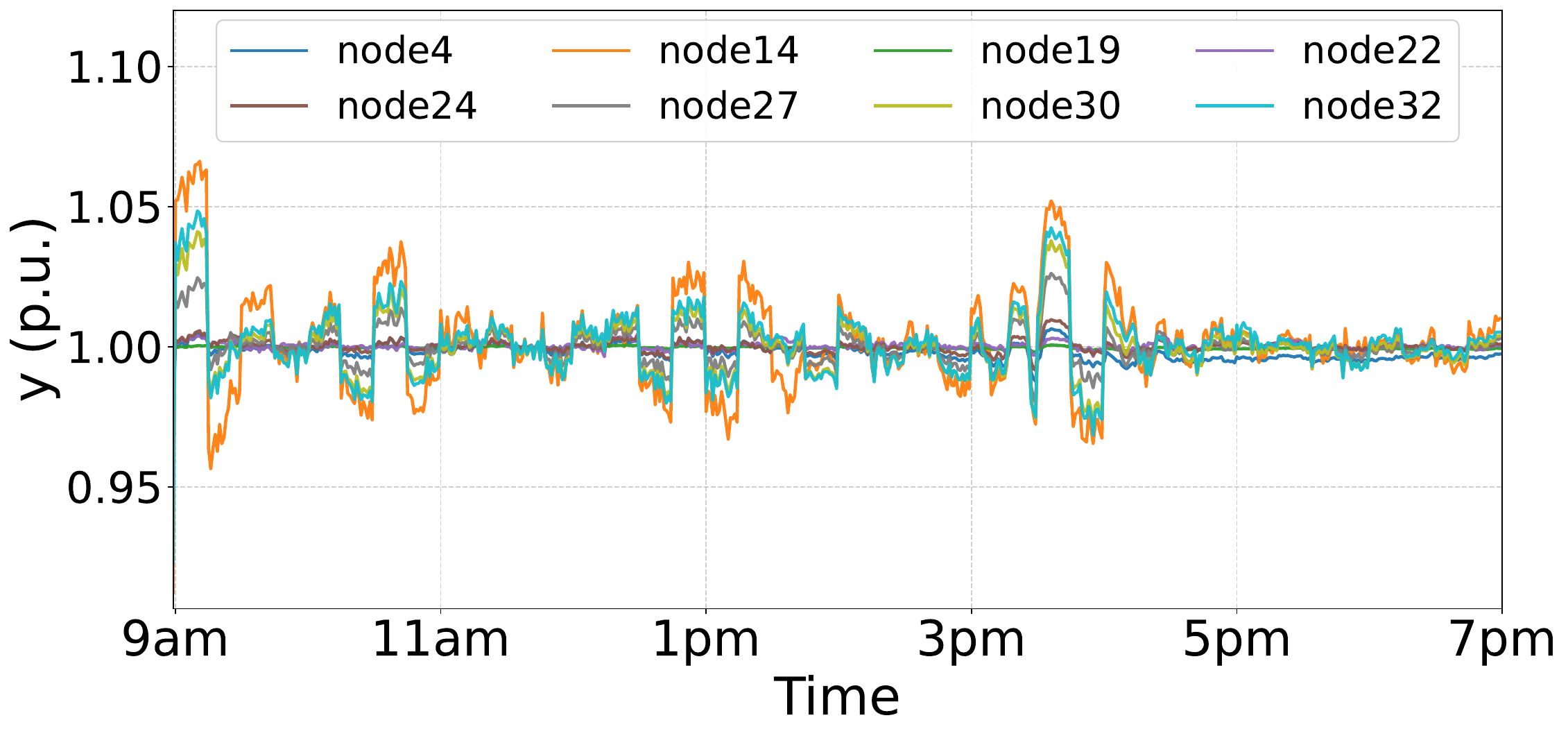}{(c) Fixed Linearization}\hfill
  \VaryingControllerBlock{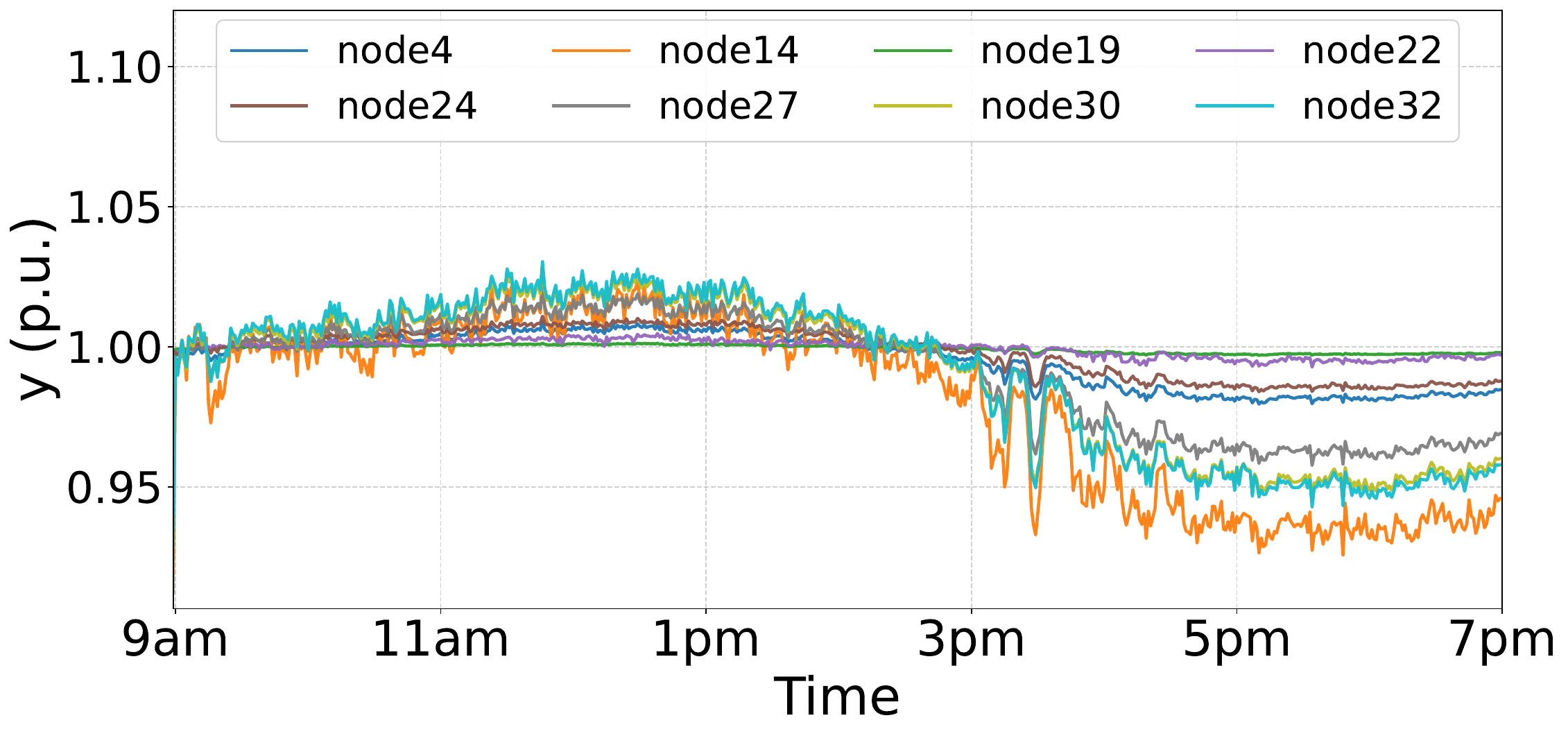}{(d) Linear Control}\\[0.75em]

  \VaryingControllerBlock{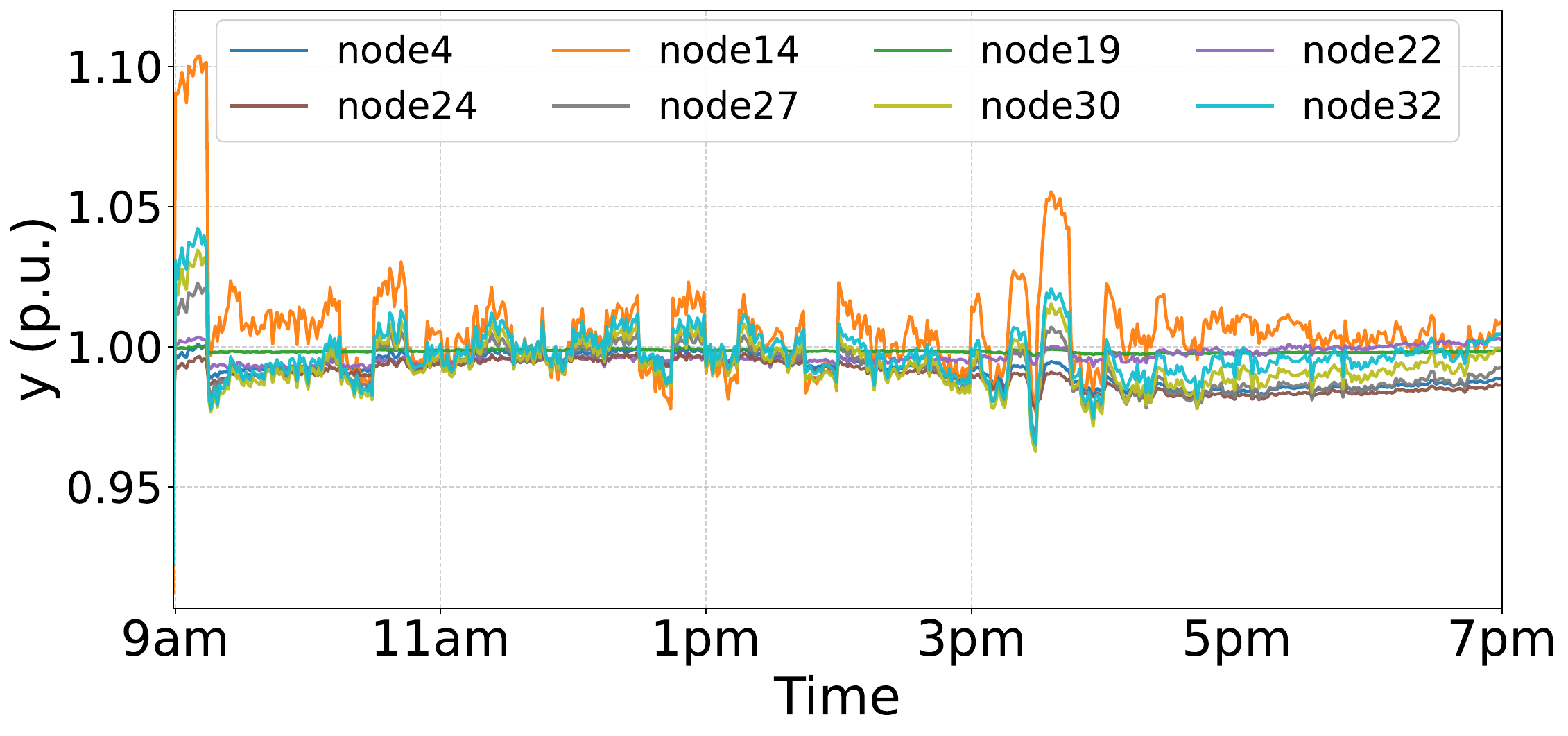}{(e) Gradient Descent-based Feedback Optimization}\hfill
  \VaryingControllerBlock{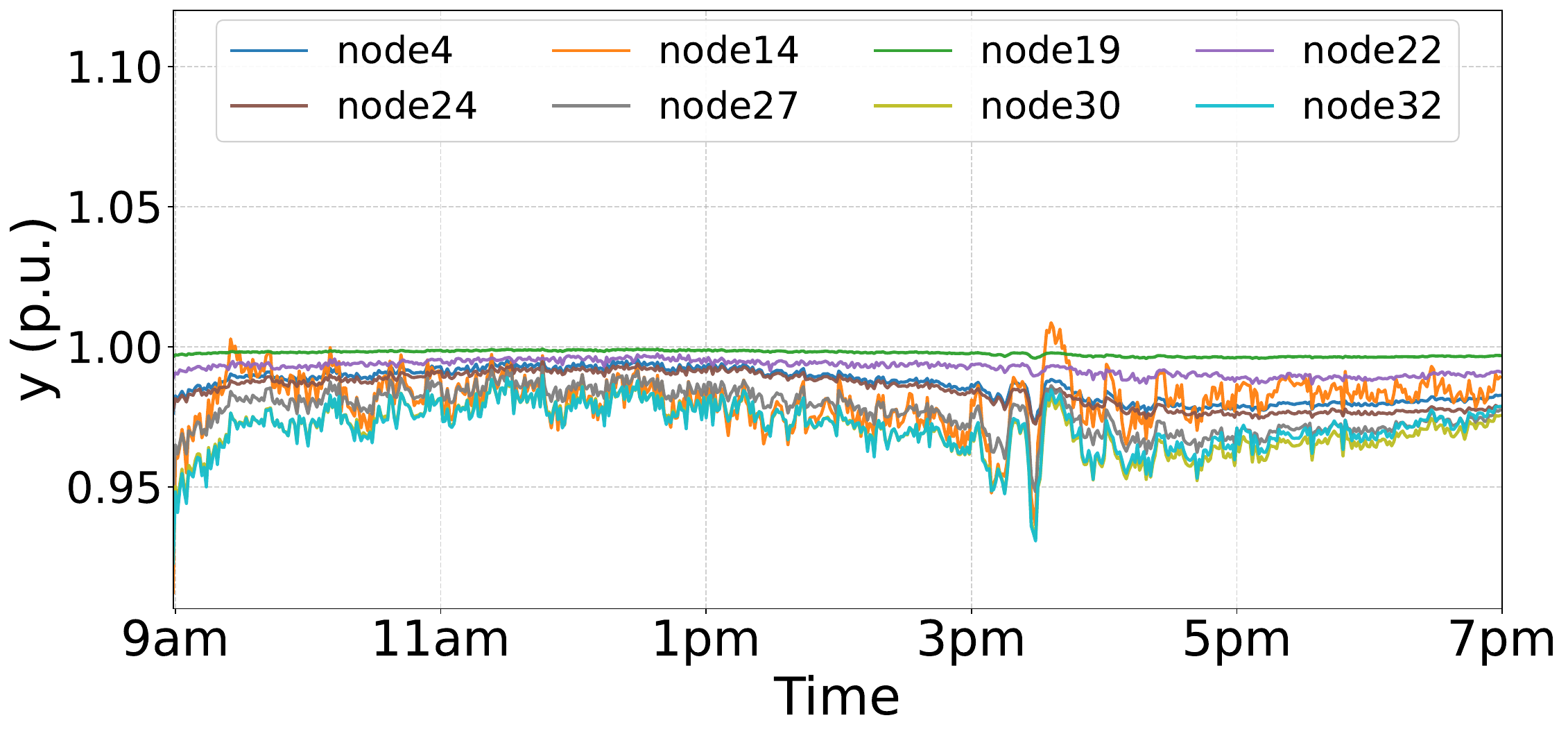}{(f) Primal-dual Feedback Optimization}\\[0.75em]

  \VaryingControllerBlock{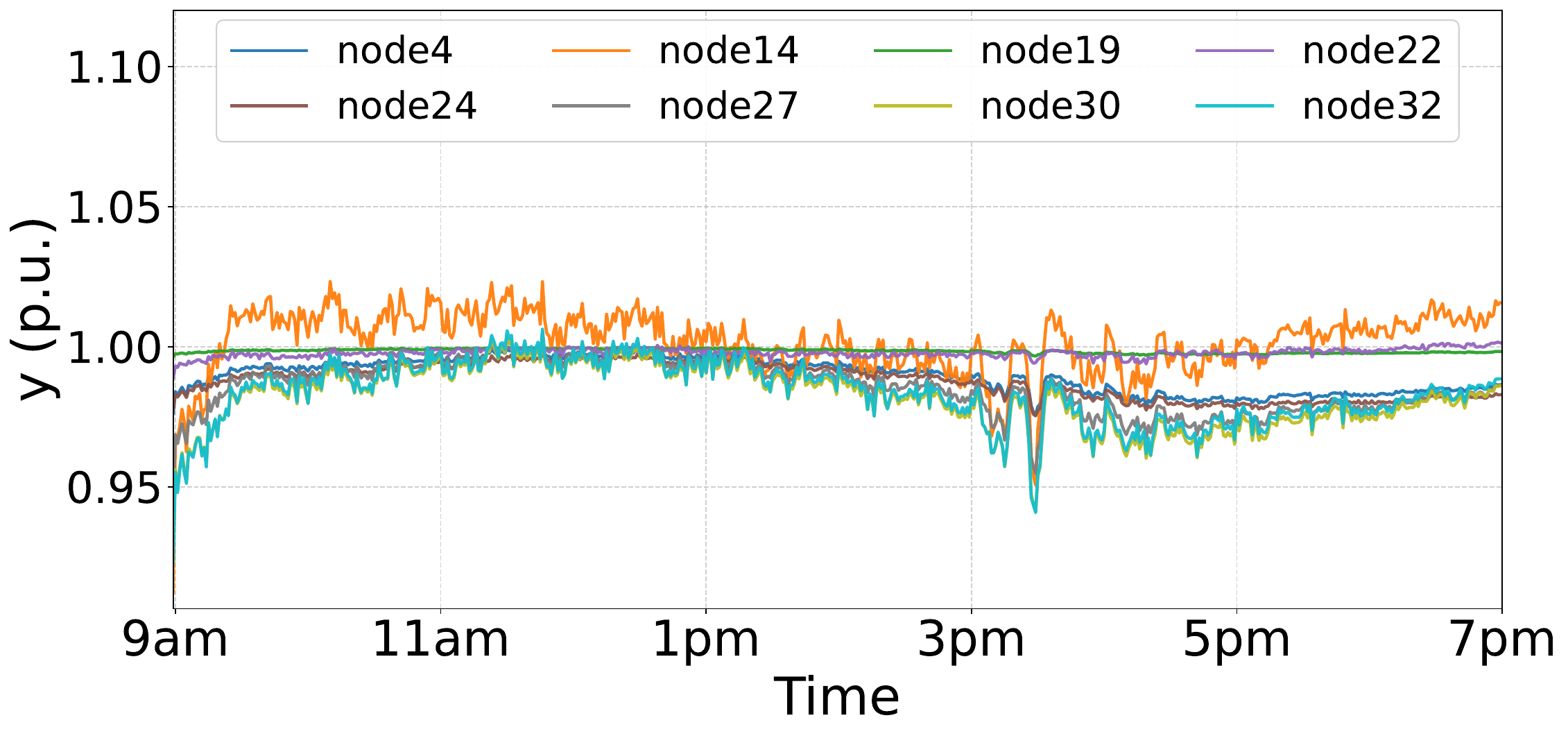}{(g) Projected Gradient Flow}\hfill
  \VaryingControllerBlock{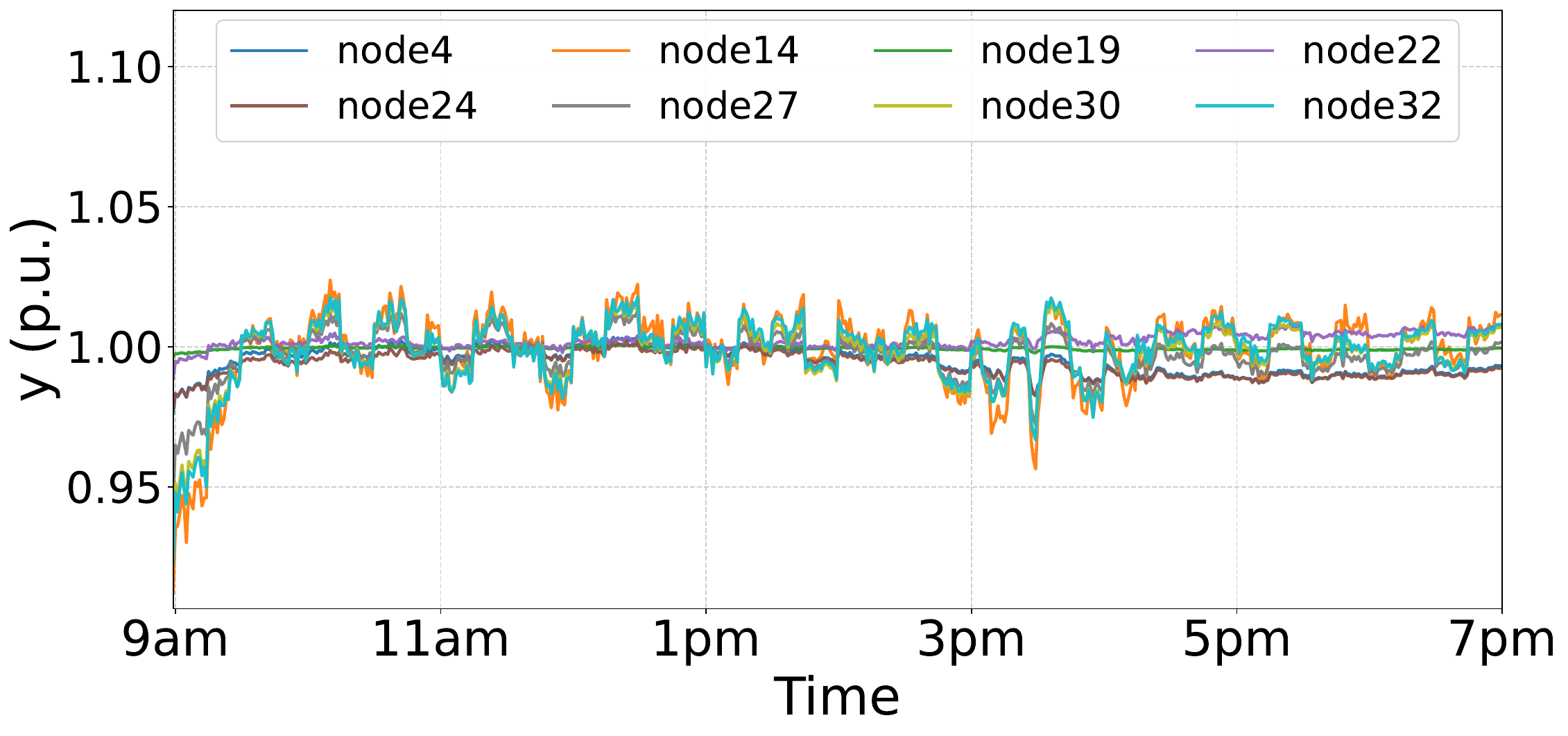}{(h) Feedback-based Safe Gradient Flow}

  \caption{
  Voltage and reactive power trajectories at 8 selected nodes in the
  time-varying load setting.
  }
  \vspace{-0.4cm}
  \label{fig:tv_traj_all_methods}
\end{figure*}

As shown in Fig.~\ref{fig:measurement_noise_cost}, the costs of all controllers generally increase with the measurement-noise level. The proposed data-driven successive linearization achieves the lowest mean cost for noise standard deviations up to \(0.007\) p.u., demonstrating strong robustness over most of the tested range. Projected gradient flow exhibits the largest overall cost increase as the measurement noise grows. Although the costs of gradient descent-based and primal-dual feedback optimization increase more gradually, both methods maintain comparatively high costs across the tested range because they do not fully eliminate voltage deviations from nominal, resulting in persistent voltage-tracking errors. At the highest noise level of \(0.008\) p.u., which is an extreme case in practical power system operation, the mean cost of the proposed method is slightly higher than those of gradient descent-based and primal-dual feedback optimization, but remains \(16.11\%\) and \(60.29\%\) lower than those of feedback-based safe gradient flow and projected gradient flow, respectively. Overall, the proposed method maintains favorable cost performance and satisfactory robustness under moderate measurement noise.

\subsection{Time-Varying Load}

We further evaluate the methods under time-varying load scenarios.
The time-varying load is constructed using minute-resolution power
measurements from an operational distribution grid~\cite{xie2025digital}.
The voltage and reactive power measurements are sampled every minute, and reported to controllers every
15 minutes, at which point each controller updates its reactive power actions. 

Fig.~\ref{fig:tv_traj_all_methods} shows the voltage and reactive power
trajectories on one day.
Convex relaxation and data-driven successive linearization in
Fig.~\ref{fig:tv_traj_all_methods}(a) and
Fig.~\ref{fig:tv_traj_all_methods}(b), respectively, maintain all bus
voltages within the prescribed 0.95--1.05~p.u. limits throughout
the period.
Their action trajectories adapt closely to the load variations and produce
prompt voltage corrections at the reporting times.
For example, following the pronounced load change at approximately 09:06,
data-driven successive linearization keeps the voltages within the limits
and reduces the maximum voltage deviation to below \(0.013\)~p.u. at the
next update at 09:15.

In contrast, fixed linearization in
Fig.~\ref{fig:tv_traj_all_methods}(c) produces larger voltage fluctuations
and occasional voltage violations.
Linear control in Fig.~\ref{fig:tv_traj_all_methods}(d) changes its actions
only gradually and exhibits drastic voltage deviations under
changing loads.
Gradient descent-based feedback optimization in
Fig.~\ref{fig:tv_traj_all_methods}(e) experiences pronounced voltage
overshoots and requires more time to recover after large load variations.
Primal-dual feedback optimization, projected gradient flow, and
feedback-based safe gradient flow in
Fig.~\ref{fig:tv_traj_all_methods}(f)--(h) reduce the large deviations but still
exhibit persistent over- or under- voltage in some periods.
 This experiment highlights the capability of data-driven successive linearization to quickly restore voltages under dynamic load conditions. 

\begin{figure}[t]
  \centering
  \includegraphics[width=0.95\linewidth]{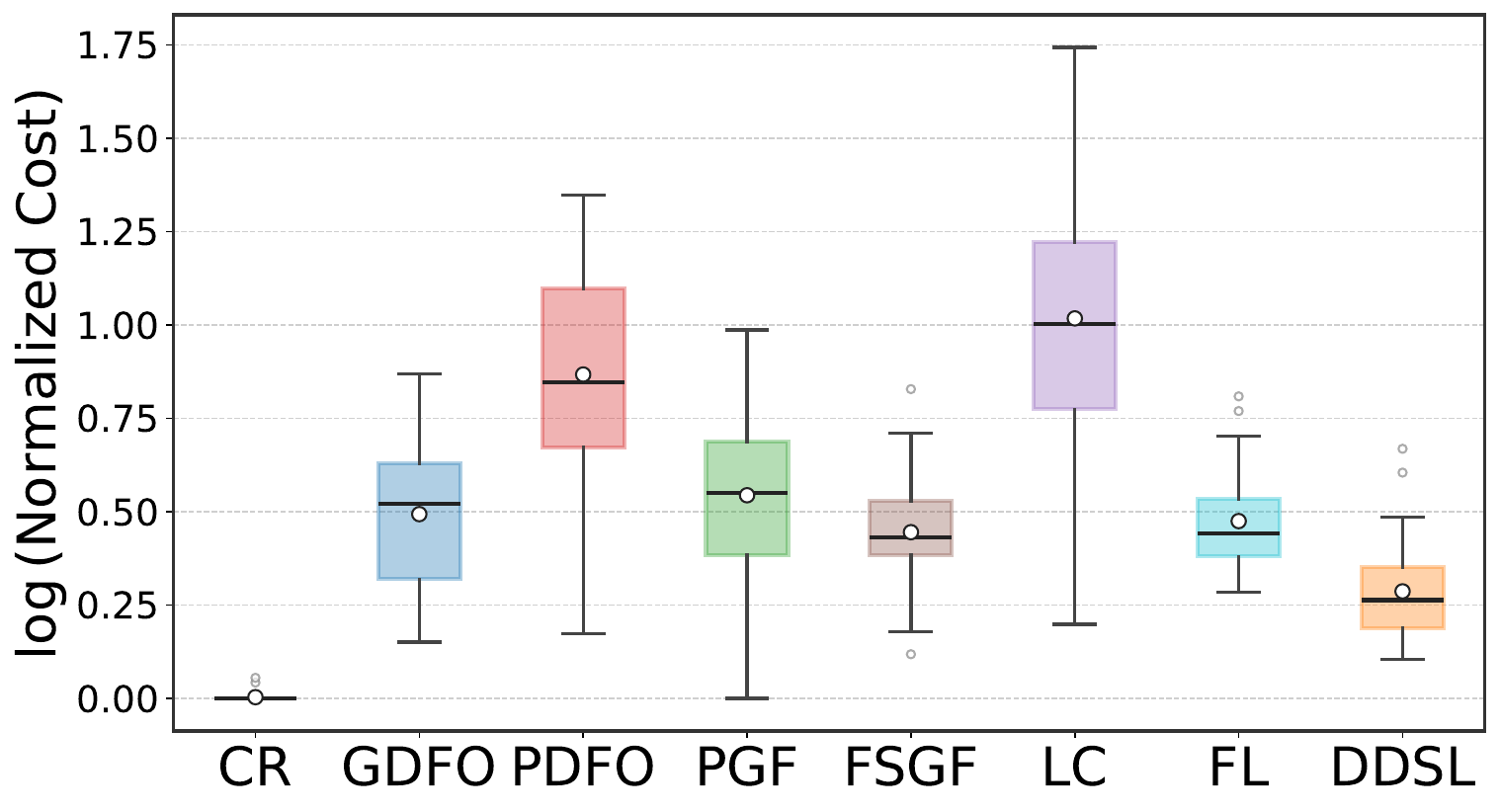}
  \vspace{-0.2cm}
  \caption{Comparison of log-normalized mean costs in the time-varying load setting over 30 days.}
  \vspace{-0.4cm}
  \label{fig:cost_varying_all_methods}
\end{figure}

Across 30 time-varying load cases,
Fig.~\ref{fig:cost_varying_all_methods} compares the costs achieved by
the eight controllers.
Convex relaxation achieves the lowest mean cost and performs best on
28 of the 30 days.
Data-driven successive linearization achieves the second-lowest mean cost
and ranks second on 21 days.
Its mean cost is \(35.24\%\), \(75.73\%\), \(26.17\%\), \(64.57\%\),
\(28.03\%\), and \(29.15\%\) lower than those of fixed linearization,
linear control, gradient descent-based feedback optimization, primal-dual
feedback optimization, projected gradient flow, and feedback-based safe
gradient flow, respectively.
These results demonstrate that the proposed method can quickly adapt to a wide
range of time-varying load conditions and achieves a lower mean
cost than the other feedback optimization-based controllers.

The proposed framework can also be applied when sensing and reactive power
actuation are available only at a subset of buses. We evaluate this setting
using the same measured time-varying load profile as in the above experiment. Firstly, 20 of the 32 non-slack
buses are set to be both metered and equipped with controllable reactive power
resources. The proposed data-driven successive linearization and all
feedback optimization-based baselines use voltage and reactive power
measurements only from this designated subset and apply control actions only
at the same buses every 15 minutes. Convex relaxation is retained as a full information benchmark: it uses the
full network active and reactive power injections to compute its recommended
setpoints and evaluates all bus voltages, while its control actions are
restricted to the same controlled subset. For a consistent comparison, the
costs of all methods are evaluated offline using voltages at all 32 non-slack
buses.

\begin{figure}[t]
  \centering
  \includegraphics[width=0.95\linewidth]{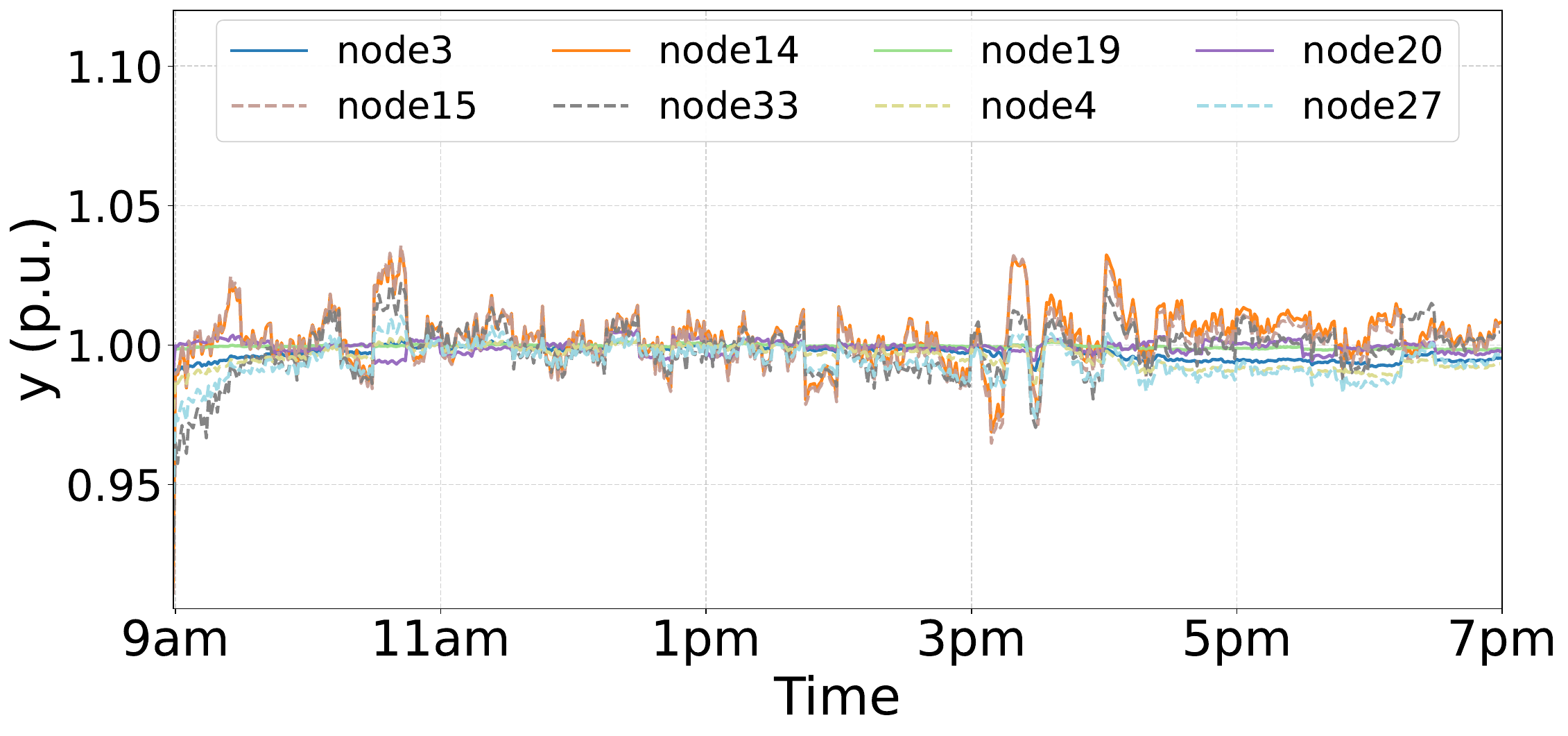}
  \vspace{-0.3cm}
  \caption{Voltage trajectories produced by data-driven successive linearization under time-varying loads when 20 buses are measured and controlled. Solid lines represent selected measured-and-controlled buses, whereas dashed lines represent selected unmeasured-and-uncontrolled buses.}
  \vspace{-0.1cm}
  \label{fig:partial20_ddsl_voltage}
\end{figure}

\begin{figure}[t]
  \centering
  \includegraphics[width=0.95\linewidth]{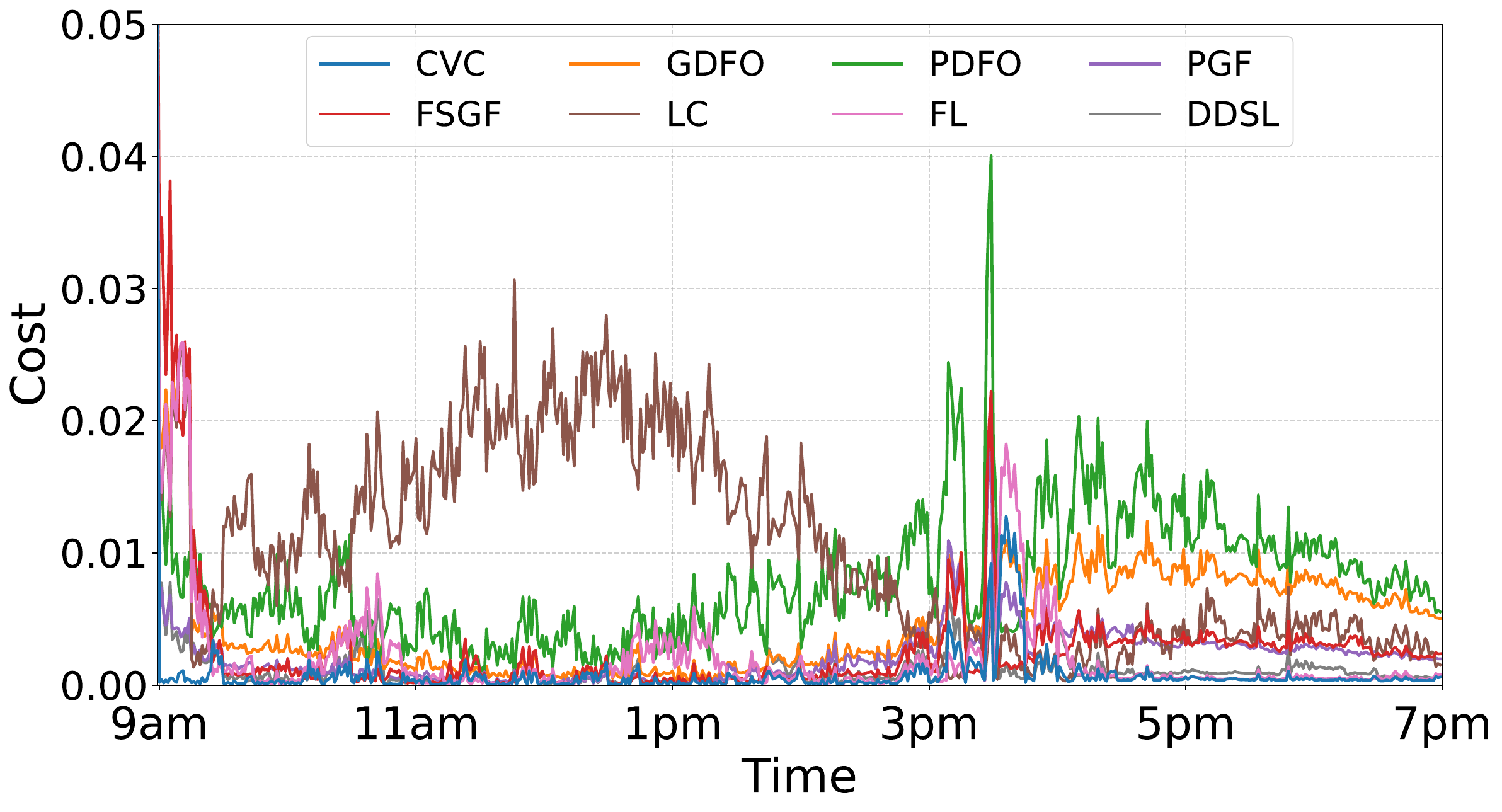}
  \vspace{-0.3cm}
  \caption{Cost trajectories of the eight controllers under time-varying loads when 20 buses are measured and controlled. Convex relaxation is included as a full-information benchmark, while the other controllers use measurements only from the designated subset. All methods apply reactive power actions only at the same 20 controlled buses every 15 minutes.}
  \vspace{-0.4cm}
  \label{fig:partial20_cost}
\end{figure}

As shown in Fig.~\ref{fig:partial20_ddsl_voltage}, the proposed method maintains all measured and controlled buses within the prescribed voltage limits throughout the period. Although the unmeasured buses exhibit larger residual deviations from 1~p.u., their voltages remain within 0.95--1.05~p.u. over the entire period. The cost trajectories in Fig.~\ref{fig:partial20_cost} further show that the proposed method maintains a low full network cost as the load varies over time. Its mean full network cost is close to the full information convex relaxation benchmark and outperforms the other methods, including the linear control that intrinsically requires no communication and
no observability beyond the bus at which it controls. A more restrictive 10-bus case experiment is also conducted and the result is reported in Appendix~\ref{app:partial}. 

Although the convergence analysis is established under fixed exogenous conditions, these experiments demonstrate that the proposed method can effectively adapt to time-varying operating conditions and retain favorable voltage regulation performance even with partial measurement and control availability.

%% file: appendix.tex
\subsection{Proof for Lemma~\ref{lem:est_ptwise}}\label{app:lem_est_ptwise}
% \begin{proof}

% We prove the lemma by exploiting a closed-form representation of the estimated sensitivity matrix
% $\widetilde{\mathbf S}_k$ constructed from the past trajectories of the control inputs $\bm u$ and voltage states $\bm y$. This enables a precise comparison between $\widetilde{\mathbf S}_k$ and the Jacobian $\mathbf S_k$
% by leveraging local Lipschitz continuity and the trust-region bound on recent iterates. 

From the trust region condition $\|\bm z_\ell-\bm z_k\|\le \alpha r_k$ we obtain $\|\bm u_\ell-\bm u_k\|\le \alpha r_k$, and therefore
$\|\Delta\bm u_\ell\|\le \|\bm u_\ell-\bm u_k\|+\|\bm u_{\ell-1}-\bm u_k\|\le 2\alpha r_k$ on the window. As $\Delta\bm y_\ell=\mathbf S_\ell\,\Delta\bm u_\ell+\bm \gamma_\ell$, where $\|\bm \gamma_\ell\|\le \beta\,\|\Delta\bm u_\ell\|^2$
for some constant $\beta>0$, we can write
\begin{equation}\label{eq:Y_decomp}
\bm Y \;=\; \bm U\,\mathbf S_k^\top + \bm E,
\end{equation}
where
$$
\bm E:=
\begin{bmatrix}
\Delta\bm u_{k-\tau+1}^\top(\mathbf S_{k-\tau+1}^\top-\mathbf S_k^\top)+\bm \gamma_{k-\tau+1}^\top\\[-.25em]
\vdots\\[-.25em]
\Delta\bm u_{k}^\top(\mathbf S_{k}^\top-\mathbf S_k^\top)+\bm \gamma_{k}^\top
\end{bmatrix}.
$$
For the weighted least–squares problem, the estimator satisfies $\widetilde{\mathbf S}_k^\top
=
({\bm U}^\top \bm W \bm U)^{-1}{\bm U}^\top \bm W \bm Y$,
where $\bm W\in\mathbb R^{\tau\times\tau}$ is the diagonal weight matrix defined in~\eqref{eq:Wk_diag}.
Substituting \eqref{eq:Y_decomp} into this expression yields
\begin{equation}\label{eq:S_diff_exact}
\widetilde{\mathbf S}_k^\top-\mathbf S_k^\top
\;=\; ({\bm U}^\top \bm W \bm U)^{-1}{\bm U}^\top \bm W \bm E.
\end{equation}

Note that when the operating point
is not near the voltage-collapse boundary, the Jacobian $\nabla_{ \bm{u}}\bm{g}(\bm{u})$ is locally Lipschitz. 
Denote the local Lipschitz constant of the Jacobian as $L_S$. We have
$\|\mathbf S_\ell-\mathbf S_k\|\le L_S\,\|\bm u_\ell-\bm u_k\|\le L_S\,\alpha r_k$; together with
$\|\Delta\bm u_\ell\| \le 2\alpha r_k$, this implies $\big\|\Delta\bm u_\ell^\top(\mathbf S_\ell^\top-\mathbf S_k^\top)\big\|\le\|\Delta\bm u_\ell\|\,\|\mathbf S_\ell-\mathbf S_k\| \le (2\alpha r_k)(L_S\,\alpha r_k)
=
2L_S\alpha^2 r_k^2$.
Moreover, from the remainder bound $\|\bm \gamma_\ell\|\le \beta\|\Delta\bm u_\ell\|^2$, we obtain $\|\bm \gamma_\ell\|\le \beta(2\alpha r_k)^2 = 4\beta\alpha^2 r_k^2$. Hence each row of $\bm E$ has norm at most
$2L_S\alpha^2 r_k^2+4\beta\alpha^2 r_k^2\le c_E' r_k^2$ for some constant $c_E'>0$, and therefore
$\|\bm E\|_F^2 \le \tau (c_E')^2 r_k^4$, so $\|\bm E\|_F \le \sqrt{\tau}\,c_E' r_k^2 \le c_E r_k^2$ for some constant $c_E>0$.
In particular, $\|\bm E\|\le \|\bm E\|_F\le c_E r_k^2$. Also,
\[
\|\bm U\|\;\le\;\|\bm U\|_F
\;\le\;\Big(\sum_{\ell=k-\tau+1}^{k}\|\Delta\bm u_\ell\|^2\Big)^{1/2}
\;\le\;2\alpha\sqrt{\tau}\,r_k .
\]
Consequently, $\|\bm U^\top \bm W\|
\;\le\;\|\bm W\|\,\|\bm U\|
\;\le\;c_U\,r_k$ for some constant $c_U>0$.

Under the sufficient excitation condition in Definition~\ref{defsuff}, there exists \(m>0\) such that \(\sigma_{\min}(\bm U)\ge m\,r_k\). Since $\bm W$ is positive definite, there exists $w_{\min}>0$ with $\lambda_{\min}(\bm W)\ge w_{\min}$, so $\lambda_{\min}({\bm U}^\top \bm W \bm U)\ge w_{\min}\,\sigma_{\min}^2(\bm U)\ge w_{\min} m^2 r_k^2$. Hence
\[
\|({\bm U}^\top \bm W \bm U)^{-1}\|
=\frac{1}{\lambda_{\min}({\bm U}^\top \bm W \bm U)}
\;\le\;\frac{1}{w_{\min} m^2 r_k^2},
\]
where \(\lambda_{\min}(\cdot)\) denotes the smallest eigenvalue and
\(\sigma_{\min}(\cdot)\) denotes the smallest singular value.
Therefore, from \eqref{eq:S_diff_exact},
\begin{equation}\label{eq:S_operator_bound}
\begin{aligned}
&\|\widetilde{\mathbf S}_k-\mathbf S_k\|
\;\le\;\|({\bm U}^\top \bm W \bm U)^{-1}\|\,\|{\bm U}^\top \bm W\|\,\|\bm E\|
\\ &\le\;\frac{1}{w_{\min} m^2 r_k^2}\,\big(c_U r_k\big)\,\big(c_E r_k^2\big)
\;=\;\frac{c_U c_E}{w_{\min} m^2}\,r_k .
\end{aligned}
\end{equation}
Thus, \eqref{eq:S_error_bound} holds with
$C_S:=c_Uc_E/(w_{\min}m^2)$. Consequently, for any $\|\bm u-\bm u_k\|\le r_k$,
\[
\big\|(\widetilde{\mathbf S}_k-\mathbf S_k)(\bm u-\bm u_k)\big\|
\;\le\;C_S r_k\cdot r_k
\;=\;C_Sr_k^2,
\]
which proves \eqref{eq:key_ptwise_goal}. 
% \end{proof}

\subsection{Proof for Lemma~\ref{lem:weak_accept_est}}\label{app:lem_weak_accept_est}

% \begin{proof}
Since $\bar{\bm z}$ is feasible and not stationary, $\bm 0\notin \partial J(\bar{\bm z})$. According to the definition of the generalized differential, $\partial J(\bar{\bm z})$ is a closed convex set; hence, by the separating
hyperplane theorem to $\partial J(\bar{\bm z})$ and the origin, there exist a unit vector $\bm s$ and a constant $\kappa>0$ such that
$\bm{\varphi}^\top \bm s\le -\kappa$ for all $\bm{\varphi}\in\partial J(\bar{\bm z})$. Thus, by the upper semicontinuity of Clarke’s directional derivative, we have
\[
\limsup_{\substack{\bm z\to \bar{\bm z}\\ r\to 0}}
\frac{J(\bm z+r\bm s)-J(\bm z)}{r}
\;\le\; -\kappa,
\]
which implies that there exists 
$\bar\epsilon>0$ and $\bar r>0$ for which
\begin{equation}\label{eq:dir_drop_unit}
J(\bm z+r\bm s)-J(\bm z)\ \le\ -\tfrac{\kappa}{2}\,r
\quad
\forall\,\bm z\in N(\bar{\bm z},\bar\epsilon),\ \forall\,r\in(0,\bar r].
\end{equation}
For any $k$ with $\bm z_k\in N(\bar{\bm z},\bar\epsilon)$ and any $r_k\in(0,\bar r]$.
Consider the trial step $\bm d'_k:=r_k\bm s$, which locates within the trust region of $\bar r$
since $\|\bm d'_k\|=r_k$. Evaluating \eqref{eq:dir_drop_unit} at $\bm z=\bm z_k$ and $r=r_k$ gives
\begin{equation}\label{eq:DeltaJ_trial_k}
\Delta J_k(\bm z_k,\bm d'_k)
:=J(\bm z_k)-J(\bm z_k+\bm d'_k)
\ \ge\ \tfrac{\kappa}{2}\,r_k .
\end{equation}
By Lemma~\ref{lem:Ltilde_minus_L},  for all $\bm d$ with $\|\bm d\|\le r_k$, $J(\bm z_k+\bm d)-\widetilde L_k(\bm d) \ge -C_Mr_k^2.$
Applying this with $\bm d=\bm d'_k$ and rearranging yields
\begin{equation}\label{eq:DeltaL_trial_k}
\begin{aligned}
    \Delta \widetilde L_k(\bm z_k,\bm d'_k)
&= J(\bm z_k)-\widetilde L_k(\bm d'_k) \\
&\ge \Delta J_k(\bm z_k,\bm d'_k)-C_Mr_k^2
\\ &\ge
\tfrac{\kappa}{2}r_k-C_Mr_k^2.
\end{aligned}
\end{equation}
Because $\bm d_k^\star$ minimizes $\widetilde L_k$ over $\{\bm d:\ \|\bm d\|\le r_k\}$,
\begin{equation}\label{eq:predicted_lb_k}
\Delta \widetilde L_k(\bm z_k,\bm d_k^\star)
\ \ge\ \Delta \widetilde L_k(\bm z_k,\bm d'_k)
\ \ge\ \tfrac{\kappa}{2}\,r_k-C_Mr_k^2.
\end{equation}
Again by Lemma~\ref{lem:Ltilde_minus_L}, evaluating \eqref{eq:J_vs_Ltilde} at $\bm d=\bm d_k^\star$ and subtracting from $J(\bm z_k)$ gives
\begin{equation}\label{eq:actual_vs_pred_k}
    \begin{aligned}
    \Delta J_k(\bm z_k,\bm d_k^\star)
&=J(\bm z_k)-J(\bm z_k+\bm d_k^\star) \\
&=
J(\bm z_k)-\widetilde L_k(\bm d_k^\star)
-
\Big(
J(\bm z_k+\bm d_k^\star)
-\widetilde L_k(\bm d_k^\star)
\Big) \\
&\ge
\Delta\widetilde L_k(\bm z_k,\bm d_k^\star)
-
C_Mr_k^2.
\end{aligned}
\end{equation}
Dividing \eqref{eq:actual_vs_pred_k} by \eqref{eq:predicted_lb_k} yields
\[
\rho(\bm z_k,r_k)
=\frac{\Delta J_k(\bm z_k,\bm d_k^\star)}{\Delta \widetilde L_k(\bm z_k,\bm d_k^\star)}
\ \ge
1-
\frac{C_Mr_k^2}
{(\kappa/2)r_k-C_Mr_k^2}.
\]
Dividing the numerator and denominator of the second term by $r_k$ completes the proof.
% \end{proof}

\subsection{Proof for Lemma~\ref{lem:limit points}}\label{app:lem_limit points}
% \begin{proof}
Since the constraint set is compact, the sequence $\{\bm z_k\}$ admits a convergent
subsequence by the Bolzano--Weierstrass theorem. Let $\{\bm z_{k_i}\}$ be any such subsequence with
$\bm z_{k_i}\to\bar{\bm z}$. Define the subsequence Jacobian mismatch level
\[
\bar e := \liminf_{i\to\infty}\|\widetilde{\mathbf S}_{k_i}-\mathbf S_{k_i}\|.
\]
By the definition of the lower limit, there exists a subsequence such that
$\|\widetilde{\mathbf S}_{k_i}-\mathbf S_{k_i}\|\to \bar e$.
We could still denote this subsequence by $\{k_i\}$ for simplicity.

Assume, for contradiction, that $\bar{\bm z}$ lies outside the neighborhood of the stationary set of $J$, namely $\mathrm{dist}\big(\bm 0,\partial J(\bar{\bm z})\big) > \|\bm\lambda\|\,\bar e$. 
% Define $\delta := \mathrm{dist}\big(\bm 0,\partial J(\bar{\bm z})\big)-\|\bm\lambda\|\,\bar e > 0$. 
In particular, $\bm 0\notin\partial J(\bar{\bm z})$.
Since $J$ is locally Lipschitz, $\partial J(\bar{\bm z})$ is a nonempty closed convex set.
Let $\bm s'$ be the projection of $\bm 0$ onto $\partial J(\bar{\bm z})$, so
$\|\bm s'\|=\mathrm{dist}\big(\bm 0,\partial J(\bar{\bm z})\big)$ and $\bm s'\neq \bm 0$.
Set $\bm s:=-\bm s'/\|\bm s'\|$, so $\|\bm s\|=1$.
By the projection variational inequality, for all $\bm\varphi\in\partial J(\bar{\bm z})$, $(\bm\varphi-\bm s')^\top(\bm 0-\bm s')\le 0$, from which we have
\begin{equation}\label{eq:sep_hyp_main}
\bm\varphi^\top \bm s
\le
-\|\bm s'\|
= -\,\mathrm{dist}\big(\bm 0,\partial J(\bar{\bm z})\big) \quad \forall\,\bm\varphi\in\partial J(\bar{\bm z}).
\end{equation}
By outer semicontinuity of the Clarke subdifferential~\cite{RockafellarWets1998VariationalAnalysis}, there exists $\epsilon_1>0$ such that for all $\bm z\in N(\bar{\bm z},\epsilon_1)$ and $ \bm\varphi\in\partial J(\bm z),$
\begin{equation}\label{eq:osc_support}
\bm\varphi^\top \bm s
\le
-\Big(\mathrm{dist}\big(\bm 0,\partial J(\bar{\bm z})\big)-\tfrac{\delta}{4}\Big),
\end{equation}
where we take $\delta := \mathrm{dist}\big(\bm 0,\partial J(\bar{\bm z})\big)-\|\bm\lambda\|\,\bar e > 0$.

Let $\epsilon_1^\prime:=\epsilon_1/2$. For any $\bm z\in N(\bar{\bm z},\epsilon_1^\prime)$ and any $r\in(0,\epsilon_1^\prime]$,
the segment $\{\bm z+t r\bm s:\ t\in[0,1]\}$ is contained in $N(\bar{\bm z},\epsilon_1)$.
By Lebourg's mean value theorem for locally Lipschitz functions\cite{lebourg1979generic}, there exist
$t\in(0,1)$ and $\bm\varphi\in\partial J(\bm z+t r\bm s)$ such that $J(\bm z+r\bm s)-J(\bm z) = r\,\bm\varphi^\top \bm s$. Combining this with \eqref{eq:osc_support} yields
\begin{equation}\label{eq:dir_drop_main}
\begin{aligned}
J(\bm z+r\bm s)-J(\bm z)
&\le
-\Big(\mathrm{dist}\big(\bm 0,\partial J(\bar{\bm z})\big)-\tfrac{\delta}{4}\Big)\,r,
\\
&\forall\,\bm z\in N(\bar{\bm z},\epsilon_1^\prime),\ \forall\,r\in(0,\epsilon_1^\prime].
\end{aligned}
\end{equation}

By Lemma~\ref{lem:weak_accept_est}, there exist $\epsilon_2>0$ and $\bar r>0$ such that
for every $k_i$ with $\bm z_{k_i}\in N(\bar{\bm z},\epsilon_2)$ and every radius
$r_{k_i}\in(0,\bar r]$, any optimal solution $\bm d_{k_i}^\star$ of the trust-region subproblem at
$\bm z_{k_i}$ satisfies
\begin{equation}\label{eq:rho0-k}
\frac{\Delta J_{k_i}(\bm z_{k_i},\bm d_{k_i}^\star)}{\Delta \widetilde L_{k_i}(\bm z_{k_i},\bm d_{k_i}^\star)}
\ge
\rho_0>0,
\end{equation}
where $\rho_0$ is a positive constant. Without loss of generality, assume each $\bm z_{k_i}$ in the subsequence
locates within $N(\bar{\bm z},\bar\epsilon)$ with $\bar\epsilon=\min\{\epsilon_1^\prime,\epsilon_2\}$, and the trust-region radius
$r_k$ satisfies $0<\underline r \le r_{k} \le \bar r$ for all $k$.

Next, fix a constant step length $\widehat r>0$ satisfying
\[
\widehat r \le \min\{\underline r,\,\epsilon_1^\prime\}.
\]
Then $\|\widehat r\,\bm s\|=\widehat r\le r_{k_i}$, so the trial step $\bm d'_{k_i}:=\widehat r\,\bm s$ is feasible for every trust region along the subsequence. Since $\|\widetilde{\mathbf S}_{k_i}-\mathbf S_{k_i}\|\to\bar e$, there exists $i_0$ such that, for all $i\ge i_0$,
\begin{equation}\label{eq:mismatch_close}
\|\widetilde{\mathbf S}_{k_i}-\mathbf S_{k_i}\|
\le
\bar e+\frac{\delta}{4\|\bm\lambda\|}.
\end{equation}
Define
\begin{equation}\label{eq:theta-def}
\begin{aligned}
\theta
:&= \liminf_{i\to\infty}\Delta \widetilde L_{k_i}(\bm z_{k_i},\bm d_{k_i}^\star)
\\
&= \liminf_{i\to\infty}\Big(J(\bm z_{k_i})-\widetilde L_{k_i}(\bm d_{k_i}^\star)\Big).
\end{aligned}
\end{equation}
By Lemma~\ref{lem:pred_nonneg}, $\Delta \widetilde L_{k_i}(\bm z_{k_i},\bm d_{k_i}^\star)\ge 0$ for all $i$, hence $\theta\ge 0$.
We now show that $\theta>0$. Fix $i\ge i_0$. As $\bm z_{k_i}\in N(\bar{\bm z},\bar\epsilon)$,
evaluating \eqref{eq:dir_drop_main} at $\bm z=\bm z_{k_i}$ and $r=\widehat r$ gives
\begin{equation}\label{eq:DeltaJ_trial_main}
\begin{aligned}
\Delta J_{k_i}(\bm z_{k_i},\bm d'_{k_i})
&= J(\bm z_{k_i})-J(\bm z_{k_i}+\bm d'_{k_i})
\\
&\ge
\Big(\mathrm{dist}\big(\bm 0,\partial J(\bar{\bm z})\big)-\tfrac{\delta}{4}\Big)\,\widehat r.
\end{aligned}
\end{equation}
To relate $J(\bm z_{k_i}+\bm d'_{k_i})$ and $\widetilde L_{k_i}(\bm d'_{k_i})$, write
\begin{equation}\label{eq:decomp_err}
\begin{aligned}
J(\bm z_{k_i}+\bm d'_{k_i})-\widetilde L_{k_i}(\bm d'_{k_i})
&=
\Big(J(\bm z_{k_i}+\bm d'_{k_i})-L_{k_i}(\bm d'_{k_i})\Big)
\\
&\ +
\Big(L_{k_i}(\bm d'_{k_i})-\widetilde L_{k_i}(\bm d'_{k_i})\Big).
\end{aligned}
\end{equation}
By Lemma~\ref{lem:taylor}, there exists $r_0>0$ such that for all $\|\bm d\|\le r_0$,  $\Big|J(\bm z_{k_i}+\bm d)-L_{k_i}(\bm d)\Big|
\le
\tfrac{\delta}{4}\,\|\bm d\|$. Take $\widehat r\le r_0$, so this applies to $\bm d=\bm d'_{k_i}$ and yields
\begin{equation*}
    \Big|J(\bm z_{k_i}+\bm d'_{k_i})-L_{k_i}(\bm d'_{k_i})\Big|
\le
\tfrac{\delta}{4}\,\widehat r.
\end{equation*}
% \[
% \Big|J(\bm z_{k_i}+\bm d'_{k_i})-L_{k_i}(\bm d'_{k_i})\Big|
% \le
% \tfrac{\delta}{4}\,\widehat r.
% \]
Since $\phi_{\mathbf M}(\bm d)=\bm\lambda^\top|\bm d_y-\mathbf M\bm d_u|$ is Lipschitz in $\mathbf M$,
\begin{equation}\label{eq:model_bd_main}
\begin{aligned}
\Big|L_{k_i}(\bm d'_{k_i})-\widetilde L_{k_i}(\bm d'_{k_i})\Big|
&=
\Big|\phi_{\mathbf S_{k_i}}(\bm d'_{k_i})-\phi_{\widetilde{\mathbf S}_{k_i}}(\bm d'_{k_i})\Big|
\\
&\le
\|\bm\lambda\|\,\|\widetilde{\mathbf S}_{k_i}-\mathbf S_{k_i}\|\,\|\bm d'_{k_i}\|
\\
&\le
\Big(\|\bm\lambda\|\,\bar e+\tfrac{\delta}{4}\Big)\,\widehat r,
\end{aligned}
\end{equation}
where the last line uses \eqref{eq:mismatch_close}. Now, use $\Delta \widetilde L_{k_i}(\bm z_{k_i},\bm d'_{k_i})
=
\Delta J_{k_i}(\bm z_{k_i},\bm d'_{k_i}) +
\Big(J(\bm z_{k_i}+\bm d'_{k_i})-\widetilde L_{k_i}(\bm d'_{k_i})\Big)$,
together with \eqref{eq:DeltaJ_trial_main}, \eqref{eq:decomp_err},  and the bounds above, to obtain
\begin{equation}\label{eq:DeltaL_trial_main}
\begin{aligned}
\Delta \widetilde L_{k_i}(\bm z_{k_i},\bm d'_{k_i})
&\ge
\Big(\mathrm{dist}\big(\bm 0,\partial J(\bar{\bm z})\big)-\tfrac{\delta}{4}\Big)\widehat r
\\
&\quad
-\tfrac{\delta}{4}\widehat r
-\Big(\|\bm\lambda\|\,\bar e+\tfrac{\delta}{4}\Big)\widehat r
\\
&=
\Big(\mathrm{dist}\big(\bm 0,\partial J(\bar{\bm z})\big)-\|\bm\lambda\|\,\bar e-\tfrac{3\delta}{4}\Big)\widehat r
\\
&=
\tfrac{\delta}{4}\,\widehat r
>0.
\end{aligned}
\end{equation}
Since $\bm d_{k_i}^\star$ minimizes $\widetilde L_{k_i}$ over $\{\bm d:\ \|\bm d\|\le r_{k_i}\}$, we have $\Delta \widetilde L_{k_i}(\bm z_{k_i},\bm d_{k_i}^\star)
\ge
\Delta \widetilde L_{k_i}(\bm z_{k_i},\bm d'_{k_i})
\ge
\tfrac{\delta}{4}\,\widehat r$. Taking the lower limit yields $\theta\ge \tfrac{\delta}{4}\widehat r>0$, hence $\theta>0$.

By the definition of $\theta$ in \eqref{eq:theta-def} and $\theta>0$, there exists $i_2$ such that for all $i\ge i_2$, $\Delta \widetilde L_{k_i}(\bm z_{k_i},\bm d_{k_i}^\star)
\ge
\theta/2
>0$.
Since $\bm z_{k_i}\in N(\bar{\bm z},\bar\epsilon)$ and $r_{k_i}\in(0,\bar r]$, the lower bound \eqref{eq:rho0-k} applies; therefore,
\begin{equation}\label{eq:actual-dec}
\begin{aligned}
J(\bm z_{k_i})-J(\bm z_{k_i+1})
&=\Delta J_{k_i}(\bm z_{k_i},\bm d_{k_i}^\star)
\\
&\ge
\rho_0\,\Delta \widetilde L_{k_i}(\bm z_{k_i},\bm d_{k_i}^\star)
\\
&\ge
\rho_0\,\theta/2
>0,
\end{aligned}
\end{equation}
where the first equality uses the update $\bm z_{k_i+1}=\bm z_{k_i}+\bm d_{k_i}^\star$.
From \eqref{eq:actual-dec}, we obtain $J(\bm z_{k+1})\le J(\bm z_k)$ for all $k$.
As $k_i+1\le k_{i+1}$, we have $J(\bm z_{k_i+1})\ge J(\bm z_{k_{i+1}})$.
Hence, for any $N\ge i_2$, $\sum_{i=i_2}^{N}\big(J(\bm z_{k_i})-J(\bm z_{k_i+1})\big)
\le
J(\bm z_{k_{i_2}})-J(\bm z_{k_{N+1}})$. Letting $N\to\infty$ and using $\bm z_{k_i}\to\bar{\bm z}$ yields
\[
\sum_{i=i_2}^{\infty}\big(J(\bm z_{k_i})-J(\bm z_{k_i+1})\big)
\le
J(\bm z_{k_{i_2}})-J(\bar{\bm z})
<\infty.
\]
Therefore the series on the left converges, and necessarily $J(\bm z_{k_i})-J(\bm z_{k_i+1})\to 0$, which contradicts the uniform lower bound \eqref{eq:actual-dec}.
Consequently, the contradiction assumption is false. Thus,
\[
\mathrm{dist}\big(\bm 0,\partial J(\bar{\bm z})\big)
\le
\|\bm\lambda\|\,\liminf_{i\to\infty}\|\widetilde{\mathbf S}_{k_i}-\mathbf S_{k_i}\|.
\]
Equivalently, $\bar{\bm z}$ is $\|\bm\lambda\|\liminf_{i\to\infty}\|\widetilde{\mathbf S}_{k_i}-\mathbf S_{k_i}\|$-close to the stationary set $\mathcal T$ of the penalty problem \eqref{eq:penalty_cost}.
% \end{proof}

% ==================== Appendix ====================
\subsection{Time-Varying Case Study with 10 Measured and Controlled Buses}\label{app:partial}

The following figures provide the voltage and full-network cost trajectories
for the more restrictive case in which only 10 buses are measured and
controlled. The proposed data-driven successive linearization continues to maintain a relatively low full network cost, while a small number of voltage violations begin to appear. The deviations are more pronounced at unmeasured and uncontrolled buses, illustrating the degradation in regulation performance under more limited system information and control capability. However, voltages of data-driven successvie linearization still generally remain within 0.95--1.05~p.u. limits.

\begin{figure}[h]
  \centering
  \includegraphics[width=0.95\linewidth]{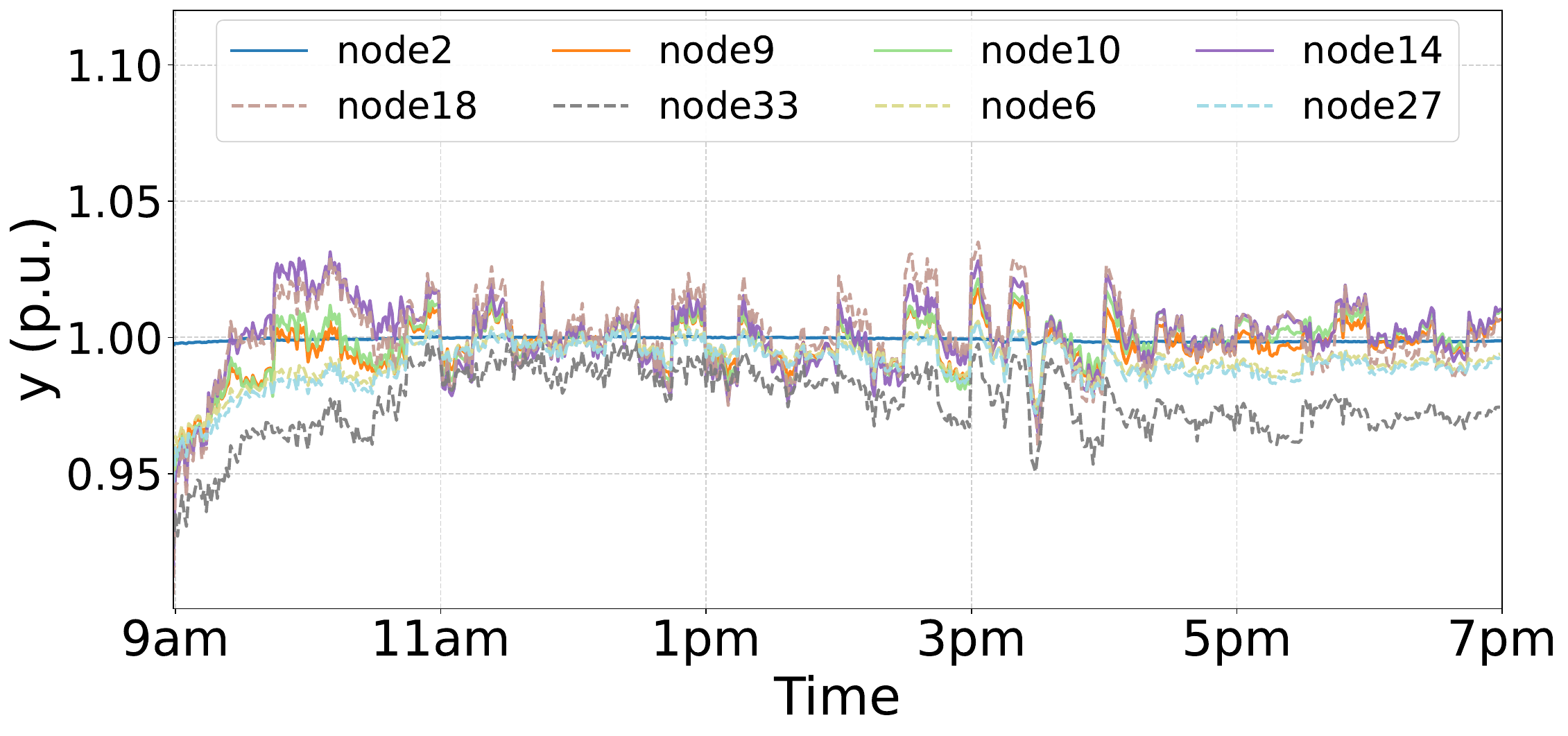}
  \vspace{-0.3cm}
  \caption{Voltage trajectories produced by data-driven successive linearization under time-varying loads when 10 buses are measured and controlled. Solid lines represent selected measured-and-controlled buses, whereas dashed lines represent selected unmeasured-and-uncontrolled buses.}
  \vspace{-0.4cm}
  \label{fig:partial10_ddsl_voltage}
\end{figure}

\begin{figure}[h]
  \centering
  \includegraphics[width=0.95\linewidth]{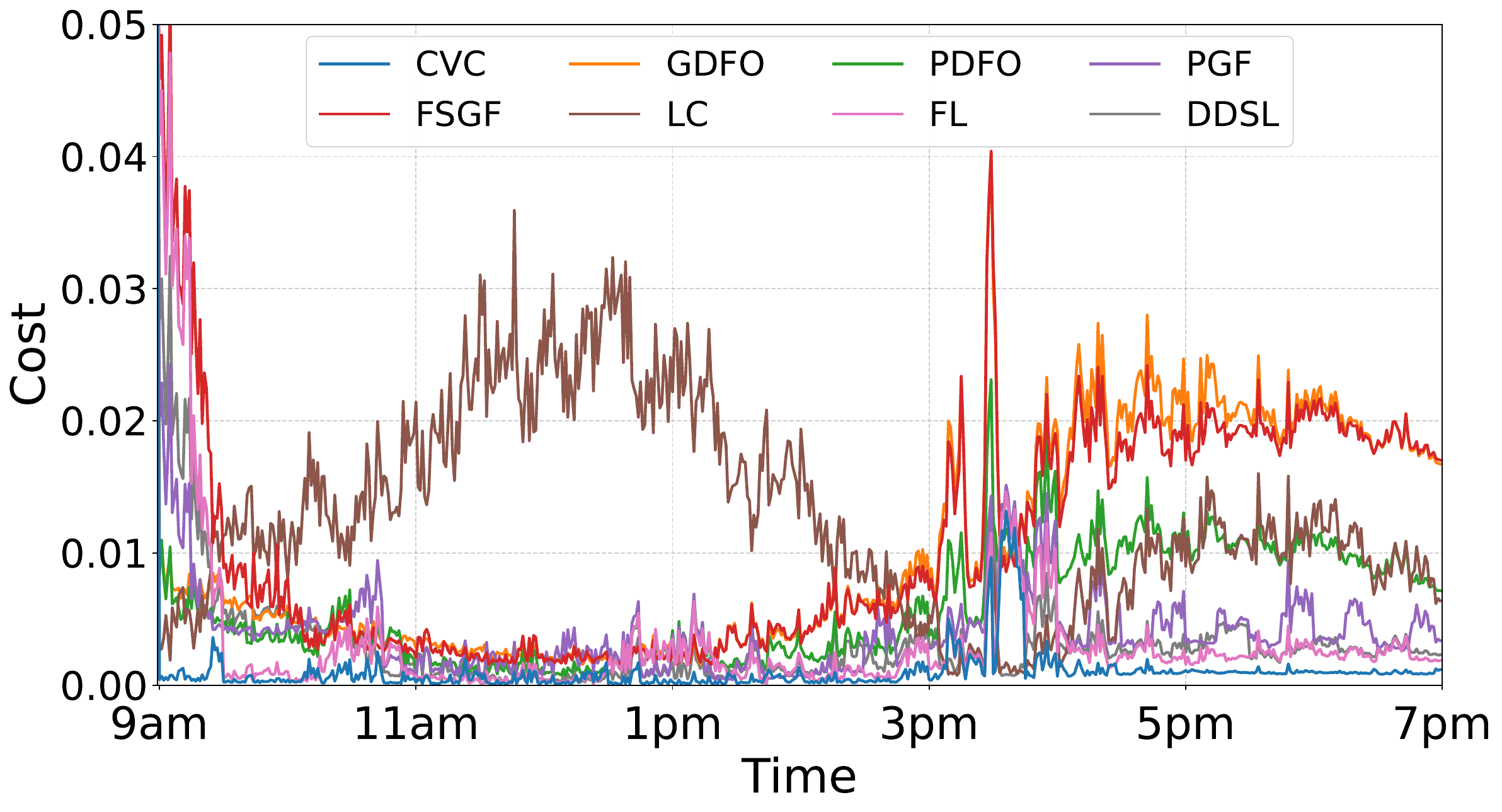}
  \vspace{-0.3cm}
  \caption{Cost trajectories of the eight controllers under time-varying loads when 10 buses are measured and controlled. Convex relaxation is included as a full-information benchmark, while the other controllers use measurements only from the designated subset. All methods apply reactive power actions only at the same 10 controlled buses every 15 minutes.}
  \vspace{-0.4cm}
  \label{fig:partial10_cost}
\end{figure}

\subsection{Measurement Noise Analysis}
\label{app:measurement_noise}

This section analyzes the effect of measurement noise on Jacobian estimation. We consider measured voltages $\widehat{\bm y}_\ell=\bm y_\ell+\bm e_\ell$, where $\bm y_\ell$ is the true voltage and $\bm e_\ell$ is the measurement noise. The effect of noise on Jacobian estimation is
summarized in the following proposition.

\noindent\textbf{Proposition.}
Suppose the assumptions of Lemma~1 hold. Assume that the voltage measurement noise $\bm e_\ell$ is bounded by $r_m\ge0$, namely  $\|\bm e_\ell\|\le r_m$ for all $\ell$. Then, the Jacobian estimate under measurement noise satisfies
\begin{equation}\label{eq:app_noise_jacobian_bound}
\|\widetilde{\mathbf S}_k-\mathbf S_k\|
\le C_Sr_k+C_m\frac{r_m}{r_k},
\end{equation}
where $r_k>0$ is the trust region radius and $C_S,C_m>0$ are constants.

The first term is associated with the local approximation error, while the second is caused by measurement noise. Thus, for a fixed
trust region radius, increasing the noise level $r_m$ increases the Jacobian
estimation error. The bound also shows that choosing $r_k$ too small can
amplify the effect of measurement noise on Jacobian estimation. The detailed proof of this proposition is
provided below.
\begin{proof}
Let $\widehat{\bm y}_\ell$ denote the measured voltage vector at iteration
$\ell$. The hat symbol distinguishes a measured quantity from its true,
noise-free value $\bm y_\ell$. We assume bounded additive measurement noise:
$\widehat{\bm y}_\ell=\bm y_\ell+\bm e_\ell$,
$\|\bm e_\ell\|\le r_m$, where $\bm e_\ell\in\mathbb R^n$ is the voltage
measurement error and $r_m\ge0$ is a uniform bound on its Euclidean norm.
The control inputs are the reactive power setpoints issued by the
controller and are assumed to be accurately recorded. The measured voltage
increment is
$\Delta\widehat{\bm y}_\ell
:=\widehat{\bm y}_\ell-\widehat{\bm y}_{\ell-1}$.
Substituting the measurement model gives
$\Delta\widehat{\bm y}_\ell
=(\bm y_\ell+\bm e_\ell)-(\bm y_{\ell-1}+\bm e_{\ell-1})
=\Delta\bm y_\ell+\bm\nu_\ell$,
where $\bm\nu_\ell:=\bm e_\ell-\bm e_{\ell-1}$ is the measurement error
affecting the voltage increment. Since each individual voltage measurement
error is bounded by $r_m$,
$\|\bm\nu_\ell\|
\le\|\bm e_\ell\|+\|\bm e_{\ell-1}\|
\le2r_m$.

Define the control-increment matrix
$\bm U_k\in\mathbb R^{\tau\times n_c}$ by stacking the rows
$\Delta\bm u_\ell^\top$ for $\ell=k-\tau+1,\ldots,k$, and we assume
$\|\Delta\bm u_\ell\|\le2\alpha r_k$ as in Lemma~1.
Define the true voltage-increment matrix
$\bm Y_k\in\mathbb R^{\tau\times n_o}$ by stacking the corresponding rows
$\Delta\bm y_\ell^\top$. Similarly, define the measured voltage-increment matrix and  the measurement-noise matrix
\[
\widehat{\bm Y}_k:=
\begin{bmatrix}
\Delta\widehat{\bm y}_{k-\tau+1}^{\top}\\[-0.25em]
\vdots\\[-0.25em]
\Delta\widehat{\bm y}_k^{\top}
\end{bmatrix}
\in\mathbb R^{\tau\times n_o}, \bm N_k:=
\begin{bmatrix}
\bm\nu_{k-\tau+1}^{\top}\\[-0.25em]
\vdots\\[-0.25em]
\bm\nu_k^{\top}
\end{bmatrix}
\in\mathbb R^{\tau\times n_o}.
\]

By construction, $\widehat{\bm Y}_k=\bm Y_k+\bm N_k$. Moreover, since
each of the $\tau$ rows of $\bm N_k$ has norm at most $2r_m$,
$\|\bm N_k\|
\le\|\bm N_k\|_F
=\bigl(\sum_{\ell=k-\tau+1}^{k}\|\bm\nu_\ell\|^2\bigr)^{1/2}
\le\bigl(\tau(2r_m)^2\bigr)^{1/2}
=2\sqrt{\tau}\,r_m$.

We next express the true voltage-increment matrix in terms of the current
Jacobian $\mathbf S_k$. For each $\ell$,
\begin{align*}
\Delta\bm y_\ell
&=\mathbf S_\ell\Delta\bm u_\ell+\bm\gamma_\ell\\
&=\mathbf S_k\Delta\bm u_\ell
+(\mathbf S_\ell-\mathbf S_k)\Delta\bm u_\ell
+\bm\gamma_\ell.
\end{align*}
After transposing this identity and stacking all $\tau$ rows, we obtain
$\bm Y_k=\bm U_k\mathbf S_k^\top+\bm E_k$, where
\[
\bm E_k:=
\begin{bmatrix}
\Delta\bm u_{k-\tau+1}^{\top}
(\mathbf S_{k-\tau+1}^{\top}-\mathbf S_k^\top)
+\bm\gamma_{k-\tau+1}^{\top}\\[-0.1em]
\vdots\\[-0.1em]
\Delta\bm u_k^{\top}
(\mathbf S_k^\top-\mathbf S_k^\top)
+\bm\gamma_k^{\top}
\end{bmatrix}
\in\mathbb R^{\tau\times n_o}.
\]

Combining $\widehat{\bm Y}_k=\bm Y_k+\bm N_k$ with
$\bm Y_k=\bm U_k\mathbf S_k^\top+\bm E_k$ gives the noisy data
decomposition
\[
\widehat{\bm Y}_k
=\bm U_k\mathbf S_k^\top+\bm E_k+\bm N_k.
\]

We now bound the local linearization-error matrix $\bm E_k$. The local
Lipschitz continuity of the Jacobian gives
$\|\mathbf S_\ell-\mathbf S_k\|
\le L_S\|\bm u_\ell-\bm u_k\|
\le L_S\alpha r_k$.
Together with $\|\Delta\bm u_\ell\|\le2\alpha r_k$, this implies
$\bigl\|\Delta\bm u_\ell^\top
(\mathbf S_\ell^\top-\mathbf S_k^\top)\bigr\|
\le\|\Delta\bm u_\ell\|\|\mathbf S_\ell-\mathbf S_k\|
\le(2\alpha r_k)(L_S\alpha r_k)
=2L_S\alpha^2r_k^2$.
Combining this bound with
$\|\bm\gamma_\ell\|\le4\beta\alpha^2r_k^2$ shows that every row of
$\bm E_k$ has norm bounded by
$2L_S\alpha^2r_k^2+4\beta\alpha^2r_k^2
=(2L_S+4\beta)\alpha^2r_k^2$.
Define $c_{E,0}:=(2L_S+4\beta)\alpha^2$ and
$c_E:=\sqrt{\tau}\,c_{E,0}$, then
$\|\bm E_k\|
\le\|\bm E_k\|_F
\le\sqrt{\tau}\,c_{E,0}r_k^2
=c_Er_k^2$.

Let
$\bm W_k=\operatorname{diag}(w_{k-\tau+1},\ldots,w_k)
\in\mathbb R^{\tau\times\tau}$
be the diagonal weight matrix used in weighted least squares. Assume that
the weights are strictly positive and uniformly bounded:
$0<w_{\min}\le\lambda_{\min}(\bm W_k)$,
$\|\bm W_k\|\le w_{\max}$ for some constants $w_{\min},w_{\max}>0$.
The Jacobian estimate under measurement noise is now
\[
\widetilde{\mathbf S}_k^\top
=(\bm U_k^\top\bm W_k\bm U_k)^{-1}
\bm U_k^\top\bm W_k\widehat{\bm Y}_k.
\]

Substituting
$\widehat{\bm Y}_k=\bm U_k\mathbf S_k^\top+\bm E_k+\bm N_k$
into the estimator gives
\begin{align*}
\widetilde{\mathbf S}_k^\top
&=(\bm U_k^\top\bm W_k\bm U_k)^{-1}
\bm U_k^\top\bm W_k\bigl(\bm U_k\mathbf S_k^\top+\bm E_k+\bm N_k\bigr)\\
&=\mathbf S_k^\top
+(\bm U_k^\top\bm W_k\bm U_k)^{-1}
\bm U_k^\top\bm W_k
(\bm E_k+\bm N_k).
\end{align*}
Therefore,
\[
\widetilde{\mathbf S}_k^\top-\mathbf S_k^\top
=(\bm U_k^\top\bm W_k\bm U_k)^{-1}
\bm U_k^\top\bm W_k
(\bm E_k+\bm N_k).
\]

We next bound each factor on the right-hand side. First,
$\|\bm U_k\|
\le\|\bm U_k\|_F
=\bigl(\sum_{\ell=k-\tau+1}^{k}
\|\Delta\bm u_\ell\|^2\bigr)^{1/2}
\le\bigl(\tau(2\alpha r_k)^2\bigr)^{1/2}
=2\alpha\sqrt{\tau}\,r_k$.
It follows that
$\|\bm U_k^\top\bm W_k\|
\le\|\bm U_k\|\,\|\bm W_k\|
\le2\alpha\sqrt{\tau}\,w_{\max}r_k$.
Define $c_U:=2\alpha\sqrt{\tau}\,w_{\max}$. Then
$\|\bm U_k^\top\bm W_k\|\le c_Ur_k$.

As the sufficient excitation condition holds,
$\sigma_{\min}(\bm U_k)\ge mr_k$ for some constant $m>0$, where
$\sigma_{\min}(\bm U_k)$ denotes the smallest singular value of
$\bm U_k$.

Since $\bm W_k$ is positive definite,
$\lambda_{\min}(\bm U_k^\top\bm W_k\bm U_k)
\ge\lambda_{\min}(\bm W_k)\sigma_{\min}^2(\bm U_k)
\ge w_{\min}m^2r_k^2$. Consequently,
\begin{align*}
\bigl\|(\bm U_k^\top\bm W_k\bm U_k)^{-1}\bigr\|
=\frac{1}
{\lambda_{\min}(\bm U_k^\top\bm W_k\bm U_k)}\le\frac{1}{w_{\min}m^2r_k^2}.
\end{align*}

Combining all of the preceding bounds yields
\begin{align*}
\|\widetilde{\mathbf S}_k-\mathbf S_k\|
&=\|\widetilde{\mathbf S}_k^\top-\mathbf S_k^\top\|\\
&\le
\bigl\|(\bm U_k^\top\bm W_k\bm U_k)^{-1}\bigr\|
\|\bm U_k^\top\bm W_k\|\bigl(\|\bm E_k\|+\|\bm N_k\|\bigr)\\
&\le
\frac{1}{w_{\min}m^2r_k^2}
(c_Ur_k)
\bigl(c_Er_k^2+2\sqrt{\tau}\,r_m\bigr)\\
&=
\frac{c_Uc_E}{w_{\min}m^2}r_k
+\frac{2\sqrt{\tau}\,c_U}{w_{\min}m^2}
\frac{r_m}{r_k}.
\end{align*}
Define $C_S:=c_Uc_E/(w_{\min}m^2)$ and
$C_m:=2\sqrt{\tau}\,c_U/(w_{\min}m^2)$.
We therefore obtain the Jacobian estimation error bound
\[
\|\widetilde{\mathbf S}_k-\mathbf S_k\|
\le C_Sr_k+C_m\frac{r_m}{r_k},
\]
\end{proof}

\subsection{Discussion on Perturbation Injection}
\label{app:perturbation_excitation}

We characterize how injecting perturbations into control action affects Jacobian estimation. In summary, if the
perturbation scale $\widetilde r$ is too small, the control
increments may become small, which can reduce $\sigma_{\min}(\bm U)$ and  lead to ill-conditioned sensitivity estimation. Conversely, increasing
$\widetilde r$ can avoid ill-conditioned sensitivity estimation, but it also enlarges the range of data used for estimation and may also increase Jacobian estimation error. In practice, variations in load and distributed generation in power system can naturally serve as perturbations around the operating points, which provides the excitation for Jacobian estimation.

\noindent\textbf{Proposition.}
Suppose $\|\bm u_\ell-\bm u_k\|\le\alpha\widetilde r$ for
$\ell=k-\tau,\ldots,k$, where $\alpha,\widetilde r>0$, and $\bm U$
has full column rank. Then the Jacobian estimation error satisfies
\begin{equation}\label{eq:app_perturbation_bound}
\|\widetilde{\mathbf S}_k-\mathbf S_k\|
\le C\frac{\widetilde r^3}{\sigma_{\min}^2(\bm U)},
\end{equation}
where $C>0$ is a constant and $\sigma_{\min}(\bm U)$ is the smallest
singular value of $\bm U$.

The bound reflects the tradeoff between excitation and locality.
The perturbation scale $\widetilde r$ determines the neighborhood covered
by the estimation data, while $\sigma_{\min}(\bm U)$ measures the
excitation provided by the control increments. When $\widetilde r$ is chosen very small, the resulting control increments
also become small because $\|\Delta\bm u_\ell\|\le 2\alpha\widetilde r$. Consequently, $\sigma_{\min}(\bm U)$ may decrease, and the weighted
least-squares problem may become poorly conditioned, so the Jacobian estimation error increases. On the other hand, increasing the magnitude of sufficiently diverse
perturbations can increase $\sigma_{\min}(\bm U)$ and thereby improve the
conditioning of the weighted least-squares problem. However, it also
increases $\widetilde r$, so that the identification data are collected over
a larger neighborhood in which the Jacobian varies more significantly, which
increases the local linearization error and may also increase the Jacobian estimation error. Thus, the perturbation magnitude is supposed to be neither excessively large nor excessively small. The full proof is given below.

\begin{proof}
From $\|\bm u_\ell-\bm u_k\|\le\alpha\widetilde r$,
$\ell=k-\tau+1,\ldots,k$, we have
$\|\Delta\bm u_\ell\|
\le\|\bm u_\ell-\bm u_k\|
+\|\bm u_{\ell-1}-\bm u_k\|
\le2\alpha\widetilde r$.
Following the notation of Lemma~1, we write
$\Delta\bm y_\ell
=\mathbf S_\ell\Delta\bm u_\ell+\bm\gamma_\ell$,
$\|\bm\gamma_\ell\|\le\beta\|\Delta\bm u_\ell\|^2$
according to first-order Taylor expansion, and consequently
\[
\bm Y=\bm U\mathbf S_k^\top+\bm E,
\]
where $\bm U,\bm Y$ stack the rows
$\Delta\bm u_\ell^\top$ and $\Delta\bm y_\ell^\top$, respectively,
for $\ell=k-\tau+1,\ldots,k$, and
\begin{equation*}
\bm E:=
\begin{bmatrix}
\Delta\bm u_{k-\tau+1}^\top
(\mathbf S_{k-\tau+1}^\top-\mathbf S_k^\top)
+\bm\gamma_{k-\tau+1}^\top\\[-0.25em]
\vdots\\[-0.25em]
\Delta\bm u_k^\top
(\mathbf S_k^\top-\mathbf S_k^\top)
+\bm\gamma_k^\top
\end{bmatrix}.
\end{equation*}

The weighted least-squares estimator satisfies
\[
\widetilde{\mathbf S}_k^\top-\mathbf S_k^\top
=(\bm U^\top\bm W\bm U)^{-1}
\bm U^\top\bm W\bm E.
\]

Since the Jacobian is locally Lipschitz,
$\|\mathbf S_\ell-\mathbf S_k\|
\le L_S\|\bm u_\ell-\bm u_k\|
\le L_S\alpha\widetilde r$.
Together with $\|\Delta\bm u_\ell\|\le2\alpha\widetilde r$,
this gives
$\bigl\|\Delta\bm u_\ell^\top
(\mathbf S_\ell^\top-\mathbf S_k^\top)\bigr\|
\le2L_S\alpha^2\widetilde r^2$,
while $\|\bm\gamma_\ell\|\le4\beta\alpha^2\widetilde r^2$.
Hence, for some constants $c_E,c_U>0$,
$\|\bm E\|\le c_E\widetilde r^2$ and
$\|\bm U^\top\bm W\|\le c_U\widetilde r$.

Since $\bm W$ is positive definite with
$\lambda_{\min}(\bm W)\ge w_{\min}>0$, we first have
$\lambda_{\min}(\bm U^\top\bm W\bm U)
\ge w_{\min}\sigma_{\min}^2(\bm U)$,
and hence
\[
\bigl\|(\bm U^\top\bm W\bm U)^{-1}\bigr\|
\le\frac{1}{w_{\min}\sigma_{\min}^2(\bm U)}.
\]
Combining this inequality with
$\|\bm U^\top\bm W\|\le c_U\widetilde r$ and
$\|\bm E\|\le c_E\widetilde r^2$ gives
\begin{align*}
\|\widetilde{\mathbf S}_k-\mathbf S_k\|
&\le
\bigl\|(\bm U^\top\bm W\bm U)^{-1}\bigr\|
\|\bm U^\top\bm W\|\,\|\bm E\|\\
&\le
\frac{c_Uc_E}{w_{\min}}
\frac{\widetilde r^3}{\sigma_{\min}^2(\bm U)}.
\end{align*}
\end{proof}